\documentclass[runningheads,orivec,colorlinks=true]{llncs}

\newcommand{\macrospath}{./macros}
\usepackage{ifthen}
\usepackage{xspace}


\newboolean{talk}
\setboolean{talk}{false}
\newboolean{paper}
\setboolean{paper}{false}


\newboolean{IEEEstyle}
\setboolean{IEEEstyle}{false}
\newboolean{lipicsstyle}
\setboolean{lipicsstyle}{false}
\newboolean{eptcsstyle}
\setboolean{eptcsstyle}{false}
\newboolean{entcsstyle}
\setboolean{entcsstyle}{false}
\newboolean{sigplanstyle}
\setboolean{sigplanstyle}{false}
\newboolean{easychairstyle}
\setboolean{easychairstyle}{false}
\newboolean{scrartclstyle}
\setboolean{scrartclstyle}{false}
\newboolean{lncsstyle}
\setboolean{lncsstyle}{false}

\newboolean{acmmart}
\setboolean{acmmart}{false}


\newboolean{needstheorems}
\setboolean{needstheorems}{false}
\newboolean{withimages}
\setboolean{withimages}{false}
\newboolean{withproofs}
\setboolean{withproofs}{true}


\newboolean{french}
\setboolean{french}{false}

\setboolean{lncsstyle}{true}
\setboolean{withproofs}{true}
 \usepackage{amsmath,amsfonts,amssymb,mathtools}
 \usepackage[T1]{fontenc}
\usepackage{graphicx}
\usepackage{hyperref}
\usepackage{color}

\usepackage{booktabs}   
\usepackage{subcaption} 
\usepackage[all]{xy}

\usepackage{ebproof}
\ebproofset{label separation=0.3em}

\usepackage{booktabs}   
\usepackage{subcaption} 
\usepackage[normalem]{ulem}
\usepackage{multirow}

\usepackage[utf8]{inputenc}

\usepackage{multirow}
\usepackage{cmll}
\usepackage{listings}

\usepackage{stmaryrd}
\usepackage{mathtools}
\usepackage{enumerate}
\usepackage{listings}

\usepackage{fancybox}
\usepackage{hhline}
\usepackage{marginnote}
\usepackage{soul}
\usepackage{xcolor}

\usepackage{cleveref}
\usepackage{complexity}

\newcommand{\sem}[1]{\llbracket#1\rrbracket}

\newcommand{\ignore}[1]{}

\newcommand{\colspace}{@{\hspace{.5cm}}}
\newcommand{\myinput}[1]{\ifthenelse{\boolean{withimages}}{\input{#1}}{}}

\newcommand{\reflemma}[1]{Lemma~\ref{l:#1}}
\newcommand{\reflemmap}[2]{Lemma~\ref{l:#1}.\ref{p:#1-#2}}





\newcommand{\refpropp}[2]{Prop.~\ref{prop:#1}.\ref{p:#1-#2}} 

\newcommand{\refsect}[1]{Sect.~\ref{sect:#1}}

\newcommand{\reffig}[1]{Fig.~\ref{fig:#1}}


\newcommand{\refrmk}[1]{Remark~\ref{rmk:#1}} 


\newcommand{\ie}{\textit{i.e.}\xspace}
\newcommand{\eg}{\textit{e.g.}\xspace}
\newcommand{\ih}{\textit{i.h.}\xspace}

\newcommand{\ES}{\text{ES}\xspace}
\newcommand{\ESs}{\text{ESs}\xspace}
\newcommand{\leftsh}{\text{left}\xspace}
\newcommand{\rightsh}{\text{right}\xspace}
\newcommand{\full}{\text{full}\xspace}
\newcommand{\Full}{\text{Full}\xspace}
\renewcommand{\full}{\text{strong}\xspace}
\renewcommand{\Full}{\text{Strong}\xspace}







\newcommand{\giulio}[1]{{#1}}
\newcommand{\cgiulio}[2]{\giulio{#2}}

\newcommand{\defeq}{\coloneqq} 
\newcommand{\grameq}{\Coloneqq} 
\newcommand{\set}[1]{\{#1\}}

\newcommand{\nat}{\mathbb{N}}
\newcommand{\size}[1]{|#1|}






\renewcommand{\l}{\lambda}
\newcommand{\isub}[2]{\{#1/#2\}}
\newcommand{\replace}[2]{#1{\shortleftarrow}#2}
\renewcommand{\isub}[2]{\{\replace{#1}{#2}\}}
\newcommand{\esub}[2]{[\replace{#1}{#2}]}
\renewcommand{\esub}[2]{[#1{\shortleftarrow}#2]}

\newcommand{\fv}[1]{{\sf fv}(#1)}



\newcommand{\rootRew}[1]{\mapsto_{#1}}

\newcommand{\Rew}[1]{\rightarrow_{#1}}

\newcommand{\lRew}[1]{\; \mbox{}_{#1}{\leftarrow}\ }

\newcommand{\lto}{\lRew{}}



\newcommand{\rtom}{\rootRew{\msym}}
\newcommand{\rtoe}{\rootRew{\esym}}




 \newcommand{\rtobv}{\rootRew{\betav}}
 



\newcommand{\rtoin}{\rootRew{\inert}}




\newcommand{\betav}{\beta_v} 
 \newcommand{\tobv}{\Rew{\betav}}



\newcommand{\betain}{\beta_{\isym}} 
\newcommand{\inert}{\betain} 
\newcommand{\toin}{\Rew{\inert}}










\newcommand{\esym}{{\mathsf e}}
\newcommand{\isym}{i}

\newcommand{\msym}{\mathsf{m}}

\newcommand{\wsym}{{\mathsf{o}}} 
\newcommand{\wmsym}{{\wsym\msym}} 
\newcommand{\omsym}{{\wmsym}} 


 

\newcommand{\shufeqext}{\shufeqext} 
\newcommand{\tsym}{{\mathsf t}} 

 \newcommand{\tom}{\Rew{\msym}}
 \newcommand{\toe}{\Rew{\esym}}



\newcommand{\tovsubo}{\Rew{\wsym}}

\newcommand{\tomo}{\Rew{\omsym}}

\newcommand{\toeo}{\Rew{\wsym{\esym}}}

\newcommand{\tovsubs}{\Rew{\esssym}}

\newcommand{\toms}{\Rew{\esssym{\msym}}}

\newcommand{\toes}{\Rew{\esssym{\esym}}}



\newcommand{\tm}{t}
\newcommand{\tmtwo}{u}
\newcommand{\tmthree}{r}
\newcommand{\tmfour}{q}




\newcommand{\tmp}{\tm'}
\newcommand{\tmtwop}{\tmtwo'}


\newcommand{\var}{x}
\newcommand{\vartwo}{y}
\newcommand{\varthree}{z}
\newcommand{\varfour}{w}



\newcommand{\val}{v}
\newcommand{\valtwo}{\val'}

\newcommand{\tval}{\val_\tsym}






\newcommand{\ctxholep}[1]{\langle #1\rangle}
\newcommand{\ctxhole}{\ctxholep{\cdot}}

\newcommand{\ctx}{C}
\newcommand{\ctxtwo}{\ctx'}

\newcommand{\ctxp}[1]{\ctx\ctxholep{#1}}
\newcommand{\ctxtwop}[1]{\ctxtwo\ctxholep{#1}}

\newcommand{\sctx}{L}

\newcommand{\sctxp}[1]{\sctx\ctxholep{#1}}






\newcommand{\arbctxp}[1]{\arbctxp{#1}}
\newcommand{\arbctxtwop}[1]{\arbctxtwop{#1}}

\newcommand{\genevctx}{E}


\newcommand{\evctx}{\genevctx}

\newcommand{\evctxp}[1]{\evctx\ctxholep{#1}}

\ifthenelse{\boolean{acmmart}
}{
  
  }{
  
  }




















\newcommand{\deriv}{d}

\newcommand{\sizehole}[2]{|#2|_{#1}}

\newcommand{\sizem}[1]{\sizehole{\msym}{#1}} 






\ifthenelse{\boolean{IEEEstyle}}{
\usepackage{amssymb}
\usepackage{amsthm}
    \newtheorem{theorem}{Theorem}[section]
    \newtheorem{lemma}[theorem]{Lemma}

    \newtheorem{proposition}[theorem]{Proposition}
    \newtheorem{definition}[theorem]{Definition}
}{}

\ifthenelse{\boolean{lncsstyle}}{}{\newtheorem{remark}{Remark}[section]}




\newcommand{\itm}{i}
\newcommand{\itmtwo}{\itm'} 
\newcommand{\itmthree}{\itm''}

 \newcommand{\gconst}{i} 
\newcommand{\gconsttwo}{\gconst'}
\newcommand{\gconstthree}{\gconst''}



\newcommand{\fire}{f}
\newcommand{\firetwo}{\fire'}
\newcommand{\firethree}{\fire''}

\newcommand{\sfire}{\fire_\fullsym}
\newcommand{\sfiretwo}{\firetwo_\fullsym}




\newcommand{\vsub}{\mathsf{vsc}} 
\newcommand{\VSC}{\textnormal{VSC}\xspace}

\newcommand{\tovsub}{\Rew{\vsub}}

\newcommand{\tow}{\Rew{\wsym}} 



%

\newcommand{\osym}{{\mathsf o}}

\newcommand{\la}[1]{\lambda #1.}

\newcommand{\ictx}{R}

\newcommand{\ictxp}[1]{\ictx\ctxholep{#1}}







\newcommand{\myproof}[1]{
\ifthenelse{\boolean{omitproofs}}{\begin{IEEEproof} Proof available but omitted for readability. \end{IEEEproof}}{#1}}


\newcommand{\valES}{\text{answer}\xspace}


\newcommand{\gregoire}{Gr{\'{e}}goire\xspace}




\newcommand{\withproofs}[1]{\ifthenelse{\boolean{withproofs}}{#1}{}}
\newcommand{\withoutproofs}[1]{\ifthenelse{\boolean{withproofs}}{}{#1}}
\newcommand{\shortlong}[2]{\withoutproofs{#1}\withproofs{#2}}





\newcommand{\NoteProof}[1]{
	\marginnote{{\normalfont\scriptsize{Proof\,p.\,{\pageref{#1}}\,}}}}
\newcommand{\NoteState}[1]{
	\marginnote{{\normalfont\scriptsize{See p.\,{\pageref{#1}}}}}}
\renewcommand{\NoteProof}[1]{\marginnote{{Proof\,p.\,{\pageref{#1}}}}}
\renewcommand{\NoteState}[1]{\marginnote{{See\,p.\,{\pageref{#1}}\\\cref{#1}}}}

\crefname{proposition}{Prop.}{Props.}
\crefname{theorem}{Thm.}{Thms.}
\crefname{lemma}{Lemma}{Lemmas}
\crefname{corollary}{Cor.}{Cors.}
\crefname{section}{Sect.}{Sects.}
\Crefname{section}{Section}{Sections}

%




\newcommand{\firecalc}{\lambda_\mathsf{fire}}

\newcommand{\doubt}[1]{}

\newcommand{\letexp}{\mathsf{let}}
\newcommand{\letin}[3]{{\sf let}\ #1=#2\ {\sf in}\ #3}






\newcommand{\Rule}{\mathsf{r}}

\newcounter{numberone}
\newcounter{numberoneroman}
\newcounter{numberonealph}

\newcommand{\cbn}{CbN\xspace}

\newcommand{\cbv}{CbV\xspace}

\newcommand{\ocbv}{Open \cbv}

\newcommand{\ccbv}{Closed \cbv}
\newcommand{\scbv}{Strong \cbv}


\newcommand{\mset}[1]{[#1]}
\newcommand{\emptymset}{\mset{\,}}
\renewcommand{\emptymset}{\zero}
\newcommand{\zero}{\mathbf{0}}


\newcommand{\ltype}{\typefont{A}}
\newcommand{\ltypetwo}{\typefont{B}}


\newcommand{\typctx}{\Gamma}
\newcommand{\typctxtwo}{\Delta}
\newcommand{\typctxthree}{\Sigma}


\newcommand{\hastype}{\!:\!}

\newcommand{\domain}[1]{\mathsf{dom}(#1)}

\newcommand{\mytr}[1]{\underline{#1}}
\newcommand{\auxtr}[1]{\overline{#1}}

\makeatletter
\newcommand\Copy[2]{
        \marginpar{\scriptsize \ \ \hyperlink{hl-appendix-#1}{Proof p.\,{\pageref*{appendix-#1}}}}
	\immediate\write\@auxout{\unexpanded{\global\long\@namedef{mytext@#1}{#2}
  }}%
	#2%
}

\newcommand\Paste[1]{%
        \hypertarget{hl-appendix-#1}{}\label{appendix-#1}
	\renewcommand{\inappendix}[1]{}
	\ifcsname mytext@#1\endcsname
	\@nameuse{mytext@#1}%
	\else
	``??''
	\fi
	\renewcommand{\inappendix}[1]{#1}
}
\makeatother

\newcommand{\inappendix}[1]{#1}






\newcommand{\weakctx}{O}
\newcommand{\weakctxtwo}{\weakctx'}
\newcommand{\weakctxp}[1]{\weakctx\ctxholep{#1}}
\newcommand{\weakctxtwop}[1]{\weakctxtwo\ctxholep{#1}}

\newcommand{\subctx}{\sctx}

\newcommand{\strongctx}{\extctx}
\newcommand{\strongctxtwo}{\strongctx'}

\newcommand{\strongctxp}[1]{\strongctx\ctxholep{#1}}
\newcommand{\strongctxtwop}[1]{\strongctxtwo\!\ctxholep{#1}}

\newcommand{\extctx}{S}

\newcommand{\extctxp}[1]{\extctx\ctxholep{#1}}

\newcommand{\larrow}[2]{#1 \multimap #2}
\newcommand{\ground}{\typefont{G}}

\newcommand{\Ax}{\mathsf{ax}}
\newcommand{\Es}{\mathsf{es}}

\newcommand{\derive}[2]{#1 \vartriangleright #2}
\newcommand{\concl}[4]{\derive{#1}{#2 \vdash #3 \hastype #4}}

\newcommand{\subctxp}[1]{\subctx\ctxholep{#1}}

\newcommand{\Pointed}{Rigid\xspace}
\newcommand{\pointed}{rigid\xspace}
\newcommand{\ptm}{r}
\newcommand{\ptmtwo}{\ptm'}
\newcommand{\ptmthree}{\ptm''}

\newcommand{\esssym}{\mathsf{x}}

\newcommand{\toess}{\Rew{\esssym}}

\newcommand{\fullsym}{{\mathsf{f}}}
\renewcommand{\fullsym}{{\mathsf{s}}}
\newcommand{\sitm}{\itm_\fullsym}
\newcommand{\sitmtwo}{\itmtwo_\fullsym}
\newcommand{\sitmthree}{\itmthree_\fullsym}

\newcommand{\sval}{\val_\fullsym}

\newcommand\Crumb\mytr
\newcommand\CrumbAux\auxtr




\makeatletter
\newcommand{\exder}{%
  \def\exderW[##1]{\triangleright_{##1}\ }%
  \def\exderWO{\triangleright\ }%
  \@ifnextchar[\exderW\exderWO%
  }
\makeatother

\newcommand{\tderiv}{\Phi}
\newcommand{\tderivtwo}{\Psi}
\newcommand{\tderivtwop}{\tderivtwo'} 
\newcommand{\tderivthree}{\Theta}


\newcommand{\typefont}[1]{{\mathsf{#1}}}
\newcommand{\mtype}{\typefont{M}}
\newcommand{\mtypetwo}{\typefont{N}}
\newcommand{\mtypethree}{\typefont{O}}

\newcommand\emptytype{\mathbf{0}}

\newcommand\mplus{\uplus}

\newcommand{\tyjp}[4]{{#3} \vdash^{#1} #2 \hastype #4}
\newcommand{\namedtyjp}[5]{#1 \vartriangleright \tyjp{#2}{#3}{#4}{#5}}




\newcommand{\dom}[1]{\mathsf{dom}(#1)}


\newcommand{\ruleApp}{@}
\newcommand{\ruleFun}{\lambda}
\newcommand{\ruleES}{\mathsf{es}}
\newcommand{\ruleMany}{\mathsf{many}}
\newcommand{\ruleManyVar}{\ruleMany}
\newcommand{\ruleManyVal}{\ruleMany}
\newcommand{\ruleAp}{@}
\newcommand{\ruleAx}{\mathsf{ax}}

\newcommand{\I}{I}
\newcommand{\J}{J}
\renewcommand{\K}{K}
\newcommand{\iI}{{i \in \I}}
\newcommand{\jJ}{{j \in \J}}
\newcommand{\kK}{{k \in \K}}

\newcommand{\tarrow}[2]{#1 \multimap #2}
\newcommand{\ty}[2]{\tarrow{#1}{#2}}

\newcommand{\mult}[1]{[ #1 ] }

\newcommand{\bigmplus}{\biguplus}




\newcommand{\Id}{{\mathsf{I}}}

\newcommand{\rightsym}{{\textsc{r}}}
\newcommand{\rltype}{\ltype^{\rightsym}}

\newcommand{\rmtype}{\mtype^{\rightsym}}

\newcommand{\leftsym}{{\textsc{l}}}
\newcommand{\lltype}{\ltype^\leftsym}

\newcommand{\lmtype}{\mtype^\leftsym}

\newcommand\mydots{\hbox to .6em{.\hss.}}


\newcommand{\qedhere}{\qed}

\newcommand{\ctxeq}{\simeq_{\mathtt{C}}}
\newcommand{\ctxleq}{\precsim_{\mathtt{C}}}

\newcommand{\correction}[1]{\blue{#1}}
\renewcommand{\correction}[1]{#1}

\renewcommand{\tmthree}{s}
\renewcommand{\extctx}{X}

\begin{document}
\title{Strong Call-by-Value and Multi Types}
\author{Beniamino Accattoli\inst{1}\orcidID{0000-0003-4944-9944} \and Giulio Guerrieri\inst{2}\orcidID{0000-0002-0469-4279} \and Maico Leberle}
\authorrunning{B. Accattoli, G. Guerrieri, and M. Leberle}
%
\institute{Inria \& LIX, \'Ecole Polytechnique, UMR 7161, Palaiseau, France
\email{\href{mailto:beniamino.accattoli@inria.fr}{beniamino.accattoli@inria.fr}}
\and
Aix Marseille Univ, CNRS, LIS UMR 7020, Marseille, France
\email{\href{mailto:giulio.guerrieri@lis-lab.fr}{giulio.guerrieri@lis-lab.fr}} 
}
\maketitle

\begin{abstract}
This paper provides foundations for strong (that is, possibly under abstraction) call-by-value evaluation for the $\l$-calculus. Recently, Accattoli et al. proposed a form of call-by-value strong evaluation for the $\l$-calculus, the \emph{external strategy}, and proved it reasonable for time. Here, we study the external strategy using a semantical tool, namely Ehrhard's call-by-value multi types, a variant of intersection types. We show that the external strategy terminates exactly when a term is typable with so-called \emph{shrinking multi types}, mimicking similar results for strong call-by-\emph{name}. 
Additionally, the external strategy is normalizing in the untyped setting, that is, it reaches the normal form whenever it exists.

We also consider the \emph{call-by-extended-value} approach to strong evaluation shown reasonable for time by Biernacka et al. The two approaches turn out to \emph{not} be equivalent: terms may be externally divergent but terminating for call-by-extended-value.

%
\end{abstract}

\section{Introduction}
\label{sect:intro}

Plotkin's call-by-value $\l$-calculus $\lambda_v$ \cite{DBLP:journals/tcs/Plotkin75} is at 
the heart  of programming languages such as OCaml and proof assistants such as Coq. In the study of programming 
languages, call-by-value (shortened to \cbv) evaluation is usually \emph{weak}, that is, it does not reduce under 
abstractions, and terms are assumed to be \emph{closed}, \ie , without free variables.
These constraints give rise to an elegant framework---we call it \emph{Closed \cbv},  following 
Accattoli and Guerrieri 
\cite{DBLP:conf/aplas/AccattoliG16}.

Plotkin did not present the \cbv $\l$-calculus $\lambda_v$ with these restrictions, and properties such as confluence hold also without the restrictions. 
As soon as open terms are allowed, however, or evaluation is \emph{strong} (that is, it can reduce under abstractions), the calculus behaves badly at the semantical level. 
There are at least two issues, first pointed out by Paolini and Ronchi Della Rocca~\cite{DBLP:journals/ita/PaoliniR99,DBLP:conf/ictcs/Paolini01,parametricBook}.
\begin{enumerate}
\item \emph{False normal forms}: some terms are contextually equivalent to the looping term $\Omega \defeq (\la\var\var\var)(\la\var\var\var)$ and yet they are normal in Plotkin's setting. 
\item \emph{Failing of denotational soundness/adequacy beyond the closed case}: denotational models are usually both \emph{sound} (that is, denotations are stable by reduction: if $\tm\to\tmtwo$ then $\sem\tm=\sem\tmtwo$) and \emph{adequate} (that is, the denotation  $\sem\tm$ is non-empty if and only if the evaluation of $\tm$ terminates) only for \ccbv; at least one of the two properties fails in the open/strong case.
\end{enumerate}

\paragraph{Extensions of Plotkin's Call-by-Value.} 
A number of calculi extending Plotkin's $\lambda_v$ have been proposed. A first line of work studies a related and yet different issue of $\lambda_v$, namely the equational incompleteness with respect to continuation-passing translations, pointed out by Plotkin himself in \cite{DBLP:journals/tcs/Plotkin75}. 
This issue was solved with categorical tools by Moggi \cite{DBLP:conf/lics/Moggi89}, which led to a number of studies, among others
\cite{DBLP:journals/lisp/SabryF93,DBLP:journals/toplas/SabryW97,DBLP:journals/tcs/MaraistOTW99,DBLP:conf/icfp/CurienH00,DBLP:journals/logcom/DyckhoffL07,DBLP:conf/tlca/HerbelinZ09}, that 
introduced many proposals of improved calculi~for~\cbv.

A second and more recent line of work, due to Accattoli, Guerrieri, and coauthors, addresses the problem of open terms and strong evaluation directly \cite{AccattoliPaolini12,DBLP:conf/fossacs/CarraroG14,DBLP:journals/tcs/Accattoli15,DBLP:conf/aplas/AccattoliG16,DBLP:journals/lmcs/GuerrieriPR17,DBLP:conf/aplas/AccattoliG18,DBLP:journals/scp/AccattoliG19,DBLP:conf/lics/AccattoliCC21,DBLP:journals/pacmpl/AccattoliG22}. It builds on the work of Paolini and Ronchi Della Rocca and on tools and techniques coming from the theory of Girard's linear logic \cite{DBLP:journals/tcs/Girard87}. 

In \cite{DBLP:conf/aplas/AccattoliG16}, they compare four different extensions of Plotkin's calculus in the framework of weak evaluation with possibly open terms. Their result is that the four calculi are all \emph{termination equivalent}: $\tm$ terminates in one of these extensions if and only if terminates in the other ones. In particular, in these extensions the issue of \emph{false normal forms} is solved because all terms contextually equivalent to $\Omega$ do diverge, in contrast to what happens in Plotkin's calculus $\lambda_v$. 
The notion of termination shared by the four extension is then referred to as \emph{\ocbv}. 

One of the aims of this paper is identifying an analogous notion of termination for strong \cbv evaluation. Perhaps surprisingly, indeed, the termination equivalent calculi of \ocbv do not agree on what such a notion should be.

\paragraph{Two Relevant Extensions.} Two\cgiulio{ of the four}{} \ocbv calculi in \cite{DBLP:conf/aplas/AccattoliG16} are relevant \cgiulio{for this paper}{here}. The first one is a \emph{call-by-extended-values} $\l$-calculus where the restriction on $\beta$-redexes \emph{by value} is weakened to $\beta$-redexes having as argument an extended, more general notion of \emph{value}. 
\cgiulio{It was first}{First} used as a nameless technical tool by Paolini and Ronchi Della Rocca~\cite{DBLP:journals/ita/PaoliniR99,parametricBook}, then rediscovered by Accattoli and Sacerdoti Coen \cite{fireballs} to study cost models\correction{, it \cgiulio{brings}{has} some similarities with a calculus introduced by \gregoire and Leroy \cite{DBLP:conf/icfp/GregoireL02} to study a \cbv abstract machine for Coq}.
In \cite{fireballs}, extended values are called \emph{fireballs} (a pun on \emph{fire-able}) and the calculus is called \emph{fireball calculus}.

The second extension is the \emph{value substitution calculus} (shortened to VSC) due to Accattoli and Paolini and related to linear logic proof nets \cite{AccattoliPaolini12,DBLP:journals/tcs/Accattoli15}. It was introduced to overcome some of the semantical problems of Plotkin's setting, and it is a flexible tool, used in particular to relate the four extensions in \cite{DBLP:conf/aplas/AccattoliG16}.

\paragraph{Beyond False Normal Forms.} In later works \cite{DBLP:conf/aplas/AccattoliG18,DBLP:journals/pacmpl/AccattoliG22}, Accattoli and Guerrieri show that the termination equivalence of \ocbv does not necessarily solve the other semantical issue of \ocbv, namely \emph{the failing of denotational soundness/adequacy beyond the closed case}. On the one hand, they show that the fireball calculus is adequate but \emph{not sound} with respect to Ehrhard's \cbv relational model \cite{DBLP:conf/csl/Ehrhard12}, a paradigmatic model arising from the theory of linear logic and handily presented as a \emph{multi types system} (a variant of intersection types). On the other hand, they show that the open VSC is \emph{both} adequate and sound with respect to that model, suggesting that it is a better setting for \ocbv.

\paragraph{Strong Call-by-Value.} The strong case has received less attention. In particular, it is not even clear what is the right notion of termination. The recent literature contains two proposals of strong \cbv evaluation, which have been carefully studied from the point of views of abstract machines and reasonable cost models, but not from a semantical point of view. 
The first one is the \emph{strong fireball calculus}, for which abstract machines have been recently designed in 2020 \cite{DBLP:conf/aplas/BiernackaBCD20} and 2021 \cite{DBLP:conf/ppdp/BiernackaCD21} by Biernacka et al., the latter being reasonable for time (defined as the number of $\beta$-steps). The second proposal is the strong VSC, and more precisely the \emph{external strategy} of the strong VSC, introduced in 2021 by Accattoli et al. \cite{DBLP:conf/lics/AccattoliCC21}, together with a reasonable machine implementing it.

The works \cite{DBLP:conf/ppdp/BiernackaCD21} and \cite{DBLP:conf/lics/AccattoliCC21} have been developed independently and at the same time, by two distinct groups, who cite each other. 
They state that both implement \emph{Strong \cbv}, but they fail to notice that they implement \emph{different notions of termination}, raising the question of what exactly should be considered as Strong \cbv. 
As we point out here, indeed, some terms are normalizing in the strong fireball calculus but have no normal form with respect to the external strategy. 

\paragraph{Strong Call-by-Value and Multi Types.} To \cgiulio{help clarifying}{clarify} the situation, we here explore the semantic perspective provided by \cbv multi types. For such types, typability coincides with termination of \emph{open} \cbv evaluation, as shown by Accattoli and Guerrieri \cite{Guerrieri18,DBLP:conf/aplas/AccattoliG18,DBLP:journals/pacmpl/AccattoliG22}, so they do not directly model strong evaluation. A similar mismatch happens\cgiulio{ also}{} in call-by-name (\cgiulio{shortened to \cbn}{\cbn for short}), where terms typable with multi types coincide with the \emph{head} (rather than strong) terminating ones. It is well known, however, that the restriction to \emph{shrinking types} (that have no negative occurrences of the empty multiset) does model strong evaluation: in \cbn, terms typable with shrinking types coincide with the leftmost(-outermost) terminating ones, and the leftmost strategy is \cgiulio{the}{a} normalizing strategy of Strong \cbn. 
Such a use of \cgiulio{shrinking types}{shrinkingness} is standard in the theory of intersection and multi types, \cgiulio{see for instance Krivine's book 
\cite{DBLP:books/daglib/0071545} or, more recently, de Carvalho \cite{deCarvalho18}, Kesner and Ventura \cite{DBLP:conf/ifipTCS/KesnerV14}, or Bucciarelli et al. \cite{DBLP:journals/igpl/BucciarelliKV17}}{see Krivine
\cite{DBLP:books/daglib/0071545}, de Carvalho \cite{deCarvalho18}, Kesner and Ventura \cite{DBLP:conf/ifipTCS/KesnerV14}, Bucciarelli et al. \cite{DBLP:journals/igpl/BucciarelliKV17}}. 

Here, we adapt to \cbv the shrinking technique as presented by Accattoli et al. for \cbn multi types in \cite{DBLP:journals/pacmpl/AccattoliGK18}, where the \emph{shrinking} terminology is also introduced. Our main result is the characterization of external termination via shrinking types: a term $\tm$ is typable with shrinking \cbv multi types \emph{if and only if} the external strategy terminates on $\tm$. Technically, the result is a smooth adaptation of the technique in \cite{DBLP:journals/pacmpl/AccattoliGK18}. Smoothness is here a \emph{plus}, as it shows that the external strategy is the notion of \scbv termination \emph{naturally} validated by \cbv multi types, without ad-hoc stretchings of the technique.

\paragraph{Untyped Normalization Theorem.} In an untyped setting, not every term normalizes (think of $\Omega$) and in the strong case some terms have both reductions that normalize and reductions that diverge, for instance $(\la\var\vartwo) (\la\varthree\Omega)$. Thus, it is important to have a strategy that reaches a normal form whenever possible, i.e., that is \emph{normalizing} in an untyped setting. The canonical evaluation strategy in Strong \cbn is  leftmost reduction and its key property is precisely that it is normalizing. A further contribution of the paper is an \emph{untyped normalization theorem} for the external strategy in the Strong VSC, obtained as an easy corollary of the study via multi types. Such a result gives to the external strategy the same solid status of the leftmost strategy in \cbn, and completes the picture.

\paragraph{No Tight Bounds.} Multi types can be used to extract \emph{tight bounds} on the length of evaluations and the size of normal forms. Here, we only study termination, not tight bounds, even if in the technical report \cite{accattoli2021semantic} we  also developed the enriched results with tight bounds. A first reason is that the characterization of external termination and the untyped normalization theorem we focus on here do not need the bounds. 
A second reason is that the enriched results are considerably more technical, while here we aim at a \mbox{slightly weaker but more accessible treatment}. 

\paragraph{Proofs.} \shortlong{
	\giulio{Omitted proofs are in \cite{ICTAClong}, the long version of this paper.}
}{Most proofs are \giulio{not in the body of the paper, but} in the Appendix. 
}


\section{Technical Preliminaries}
\label{sect:app-preliminaries}

\paragraph{Basic Rewriting Notions.}
For a relation $R$ on a set of terms, $R^*$ is its reflexive-transitive closure. 
Given a relation $\Rew{\Rule}$, an $\Rule$-\emph{evaluation}
(or simply evaluation if unambiguous) $\deriv$ is a finite sequence of terms $(\tm_i)_{0 \leq i \leq n}$ (for some $n \geq 0$) such that $\tm_i \Rew{\Rule} \tm_{i+1}$ for all $1 \leq i < n$, and we write $\deriv \colon \tm \Rew{\Rule}^* \tmtwo$ if $\tm_0 = \tm$ and $\tm_n = \tmtwo$. The
\emph{length} $n$ of $\deriv$ is noted $\size{\deriv}$, and $\size{\deriv}_a$ is the number of $a$-\emph{steps} (\ie the number of $\tm_i \Rew{a} \tm_{i+1}$ for some $1 \leq i \leq n$) in $\deriv$, for a given subrelation $\Rew{a}$ of $\Rew{\Rule}$.

A term $\tm$ is $\Rule$-\emph{normal} if there is no $\tmtwo$ such that $\tm \Rew{\Rule} \tmtwo$.
An evaluation $\deriv \colon \tm \Rew{\Rule}^* \tmtwo$ is \emph{$\Rule$-normalizing} if $\tmtwo$ is $\Rule$-normal.
A term $\tm$ is \emph{weakly $\Rule$-normalizing} if there is a $\Rule$-normalizing evaluation $\deriv \colon \tm \Rew{\Rule}^* \tmtwo$; and $\tm$ is \emph{strongly $\Rule$-normalizing} if there no infinite sequence $(\tm_i)_{i \in \nat}$  such that $\tm_0  = \tm$ and $\tm_i \Rew{\Rule} \tm_{i+1}$ for all $i \in \nat$.
Clearly, strong $\Rule$-normalization implies weak $\Rule$-normalization.

\paragraph{The Diamond Property.} \correction{Following Dal Lago and Martini \cite{DBLP:journals/tcs/LagoM08}, we say that a relation $\Rew{\Rule}$ is \emph{diamond} if $\tmtwo_1 \,{}_\Rule\!\!\lto \tm \Rew{\Rule} \tmtwo_2$ and $\tmtwo_1 \neq \tmtwo_2$ imply $\tmtwo_1 \Rew{\Rule} \tmthree \, {}_\Rule\!\!\lto \tmtwo_2$ for some $\tmthree$. 
	\cgiulio{Note that the t}{T}erminology in the literature is inconsistent: Terese \cite[Exercise 1.3.18]{Terese} dubs this property $\textup{CR}^1$, and defines the diamond more restrictively, without requiring $u_1 \neq u_2$ in the hypothesis: $u_1$ and $u_2$ have to join even if $u_1 = u_2$.

Dal Lago and Martini show that if $\Rew{\Rule}$ is diamond then:}
\begin{enumerate}
	\item \cgiulio{\emph{Confluence}:}{}
	$\Rew{\Rule}$ is confluent, that is, $\tmtwo_1 \,{}_\Rule^*\!\!\lto \tm \Rew{\Rule}^* \tmtwo_2$  implies $\tmtwo_1 \Rew{\Rule}^* \tmthree \, {}_\Rule^*\!\!\lto \tmtwo_2$ for some $\tmthree$; 
	\item \cgiulio{\emph{\correction{Length invariance}}:}{} all $\Rule$-evaluations with the same start and end terms have the same length (\ie if $\deriv \colon \tm \Rew{\Rule}^* \tmtwo$  and $\deriv' \colon \tm \Rew{\Rule}^* \tmtwo$ then $\size{\deriv} = \size{\deriv'}$);
	\item \cgiulio{\emph{Uniform normalization}:}{} $\tm$ is weakly $\Rule$-normalizing if and only if it is strongly $\Rule$-normalizing.
\end{enumerate}
\giulio{Properties 2 and 3 are called \emph{length invariance} and \emph{uniform normalization}, respectively.}
\cgiulio{Essentially}{Basically}, the diamond\cgiulio{ property}{} captures a more liberal form of determinism.

\paragraph{Contextual Equivalence.} The standard of reference for program equivalences is contextual equivalence, that can be defined abstractly as follows.
\begin{definition}[Contextual Preorder and Equivalence] Given a language of terms $\mathcal{T}$ with its associated notion of contexts $\ctx$ and an operational semantics given as a rewriting relation $\to$, we define the associated \emph{contextual preorder} $\ctxleq$ and \emph{contextual equivalence} $\ctxeq$ as follows:
	\begin{itemize}
		\item $\tm \ctxleq \tmp$ if $\ctxp\tm$  weakly $\to$-normalizing implies $\ctxp\tmp$ weakly $\to$-normalizing for all contexts $\ctx$ such that $\ctxp{\tm}$ and $\ctxp\tmp$ are closed terms. 
		\item $\ctxeq$ is the equivalence relation induced by $\ctxleq$: $\tm \ctxeq \tmp \iff \tm \ctxleq \tmp$ and $\tmp \ctxleq \tm$.
	\end{itemize}
\end{definition}

\section{Call-by-Value and Call-by-Fireball}
\label{sect:cbv}
The call-by-value $\l$-calculus \giulio{$\lambda_v$} was introduced by Plotkin \cite{DBLP:journals/tcs/Plotkin75} in 1975 as the restriction of the $\l$-calculus where $\beta$-redexes can be fired only when their argument is a \emph{value}, \cgiulio{where}{and} values are defined as variables and abstractions. 
\begin{figure}[t]
  \centering
\begin{tabular}{c}
  
   $
    \begin{array}{r@{\hspace{.5cm}}rll}
	    	    \textsc{Terms} & \tm,\tmtwo,\tmthree,\tmfour &\grameq& \var \mid  \la\var\tm \mid \tm\tmtwo\\
    
    \textsc{Values} & \val,\valtwo & \grameq &  \la\var\tm\\
        
	    \textsc{Evaluation contexts} & \evctx  & \grameq & \ctxhole\mid \tm\evctx \mid \evctx\tm \\\\	
\end{array}$
\\
	    $\begin{array}{@{\hspace{.8cm}}r@{\hspace{.3cm}}c@{\hspace{.3cm}}l@{\hspace{.8cm}}rll}
      \multicolumn{3}{c}{\textsc{Rule at top level}} & \multicolumn{3}{c}{\textsc{Contextual closure}} \\
	    (\la\var\tm)\val & \rtobv & \tm\isub\var{\val} &
	    \evctxp \tm & \tobv & \evctxp \tmtwo \textrm{~~~if } \tm \rtobv     \tmtwo \\
    \end{array}
    $
    \end{tabular}
  \caption{\label{fig:plotkin} Our presentation of Plotkin's calculus \giulio{$\lambda_v$}.}
\end{figure}

\paragraph{Our Presentation of \cbv.} \correction{In \reffig{plotkin}, we present \cgiulio{Plotkin's calculus}{$\lambda_v$} adopting three specific choices, departing from  Plotkin's presentation \cite{DBLP:journals/tcs/Plotkin75} in inessential details. Firstly, Plotkin also considers constants in the term syntax, with their own reduction rules, which are \cgiulio{however}{} left unspecified. 
	For simplicity, in our presentation there are no constants. 

Secondly, for Plotkin values are variables and abstractions, while here \emph{values are only abstractions}, as it is the case in the papers about Strong \cbv motivating our study \cite{fireballs,DBLP:conf/aplas/BiernackaBCD20,DBLP:conf/ppdp/BiernackaCD21}. As stressed by Accattoli and Guerrieri \cite{DBLP:journals/pacmpl/AccattoliG22}, removing variables from values provides a better inductive description of normal forms in the open/strong case, without affecting properties such as termination or confluence. For further quantitative benefits, see Accattoli and Sacerdoti Coen \cite{DBLP:journals/iandc/AccattoliC17}. 
In our paper, variables are not values, but all our results could be restated by considering them~as~values.

Thirdly, Plotkin defines both a multi-step non-deterministic (and confluent) evaluation relation reducing redexes everywhere in the term and a single-step deterministic reduction (proceeding left-to-right) that is \emph{weak}, that is, it does not reduce under abstractions. 
We instead adopt the somewhat halfway approach by Dal Lago and Martini \cite{DBLP:journals/tcs/LagoM08}, which is single-step and weak but \emph{non-deterministic}. The idea is that it can evaluate the left and right sub-term of an application in any order (that is, the left sub-term is not forced to be evaluated first as Plotkin does), because (in the weak case) the obtained reduction relation has the \emph{diamond property} (definition in \Cref{sect:app-preliminaries}), a relaxed form of determinism. Thus, the obtained notion of reduction slightly generalizes Plotkin's single-step reduction without changing whether a term terminates or not. The non-deterministic version is obtained via the notion of \emph{evaluation context} $\evctx$ defined in \reffig{plotkin}.}

\paragraph{Problem with Open Terms.} It is well known that Plotkin's framework works well only as long as terms are closed. The problem with open terms is that there are \emph{false normal forms} such as $\Omega_l \defeq (\la\var\delta)(\vartwo\vartwo)\delta$ (where $\delta\defeq \la\var\var\var$ is the duplicator) which are $\tobv$-normal, because $\vartwo\vartwo$ is not a value (and cannot become one), but are \emph{semantically divergent}. Such a divergence can be formalized in various ways, perhaps the simplest of which is that $\Omega_l$ is contextually equivalent (definition in \Cref{sect:app-preliminaries}) to the diverging term $\Omega \defeq \delta\delta$. This problem was first pointed out by Paolini and Ronchi Della Rocca
\cite{DBLP:journals/ita/PaoliniR99,DBLP:conf/ictcs/Paolini01,parametricBook}. It is discussed at length by Accattoli and Guerrieri \cite{DBLP:conf/aplas/AccattoliG16}, who analyze various ways of extending Plotkin's calculus to solve the issue, that is, as to make false normal forms such as $\Omega_l$ diverge without simply switching to \cbn (which, intuitively, amounts to diverge on $(\la\var\vartwo) \Omega$ as done by \cbv but not by \cbn), and show their equivalence with respect to termination, referring to them collectively as \emph{\ocbv}.

\begin{figure}[t]
  \centering
\begin{tabular}{c}
  
   $
    \begin{array}{r@{\hspace{.5cm}}rll}
	    	    \textsc{Terms} \mbox{, } \textsc{Values}\mbox{, }\textsc{Evaluation ctxs}\mbox{, and}\tobv& \multicolumn{3}{l}{\mbox{As for Plotkin's calculus}}
		    \\
    
    \textsc{Fireballs} & \fire, \firetwo, \firethree & \grameq & \val \mid \gconst\\
	    
	    \textsc{Inert terms} & \gconst,\gconsttwo, \gconstthree & \grameq &  \var \fire_1\ldots \fire_n\ \ \ \ n\geq 0\\\\
\end{array}$
\\
	    $\begin{array}{@{\hspace{.8cm}}r@{\hspace{.3cm}}c@{\hspace{.3cm}}l@{\hspace{.8cm}}rll}
      \multicolumn{3}{c}{\textsc{Rule at top level}} & \multicolumn{3}{c}{\textsc{Contextual closure}} \\
	    (\l\var.\tm)\gconst & \rtoin & \tm\isub\var\gconst &
	    \evctxp \tm & \toin & \evctxp \tmtwo \quad\textup{if } \tm \rtoin \tmtwo
%
    \end{array}
    $
    \end{tabular}
  \caption{\label{fig:fireball} The fireball calculus $\firecalc$.}
\end{figure}

\paragraph{The Fireball Calculus.} The simplest presentation of \ocbv is probably the fireball calculus, defined in \reffig{fireball}.
The idea is that the values of the \cbv $\l$-calculus are generalised to \emph{fireballs} (a pun on \emph{fire-able} terms), by adding \emph{inert terms}, which contain in particular variables. 
Actually fireballs and inert terms  are defined by mutual induction (in \reffig{fireball}). 
For instance, \correction{$\la\var\tm$ is a fireball as a value}, while $x$, $\vartwo(\la\var\var)$, $\var\vartwo$ and $\varthree(\la\var\var)(\varthree\varthree) (\la\vartwo\tm)$ are fireballs as inert terms. 

The main feature of inert terms is that they are open, normal, and that when plugged in a context they cannot create a redex, hence the name. Essentially, they are the \emph{neutral terms} of \ocbv. 


Dynamically, $\beta$-redexes can also be fired if their argument is an inert term, via $\toin$. 
Evaluation is weak, as evaluation contexts do not go under abstractions. 
Note that\cgiulio{ the term}{} $\Omega_l$ given above now diverges:
$
\Omega_l = (\la\var\delta)(\vartwo\vartwo)\delta \toin \delta\delta \tobv \ldots$

Two relevant properties of the fireball calculus are that its reduction is non-deterministic but diamond and that a term is a normal form if and only if it is a fireball, a property called \emph{harmony} by Accattoli and Guerrieri \cite{DBLP:conf/aplas/AccattoliG16}.

\paragraph{Issues with the Fireball Calculus.} In later works \cite{DBLP:conf/aplas/AccattoliG18,DBLP:journals/pacmpl/AccattoliG22},  Accattoli and Guerrieri show that, despite the termination equivalence of the various formalisms for \ocbv, they can behave quite differently with respect to semantical notions. In particular, they show that the fireball calculus does not behave well with respect to Ehrhard's multi types \cite{DBLP:conf/csl/Ehrhard12}, which are a notion of type inducing a paradigmatic denotational model of the $\l$-calculus, the \emph{relational model}, linked to linear logic. Technically, they show that multi types do not satisfy subject reduction with respect to the fireball calculus. The issue can be illustrated without multi types, and amounts to the fact that the fireball calculus allows one to erase and duplicate inert terms. The erasure is particularly problematic. We reformulate here the problem in terms of contextual equivalence, to stress that the issue concerns the fireball calculus independently of multi types. 

An expected  property of any calculus is what we call here \emph{contextual stability}, that is, the fact that its operational semantics $\to$ is included in its contextual equivalence $\ctxeq$---in symbols: if $\tm \to^* \tmtwo$ then $\tm \ctxeq \tmtwo$. In the fireball calculus, contextual stability fails. 
Consider $\tm \defeq (\la\var\Id)(\vartwo\vartwo)$, where $\Id\defeq \la\varthree\varthree$ is the identity combinator, and note that $\tm \toin \Id$ because $\vartwo\vartwo$ is an inert term. 
Now, consider the closing context $\ctx \defeq (\la\vartwo\ctxhole)\delta$ and note that $\ctxp\Id  =  (\la\vartwo\Id)\delta  \tobv  \Id$,~while
\begin{equation*}
	\ctxp\tm =  (\la\vartwo((\la\var\Id)(\vartwo\vartwo)))\delta  \tobv  (\la\var\Id)(\delta\delta) \tobv  (\la\var\Id)(\delta\delta) \tobv\ldots 
\end{equation*}
That is, $\tm \toin \Id$ but $\tm\not\ctxeq \Id$ because $\ctxp\tm$ diverges while $\ctxp\Id$ terminates.

To overcome false normal forms such as $\Omega_l$, one \emph{has to} work around redexes having inert terms has arguments. But substituting inert terms (thus sometimes erasing them) as done by the fireball calculus is a \emph{brute force} solution. 
At the open level, it does not alter termination but it deteriorates semantics properties of the calculus such as contextual stability or the relationship with multi types. 
The alternative presentation of \ocbv given by the value substitution calculus \cite{AccattoliPaolini12}, discussed in the next section, does not suffer of these shortcomings.

\paragraph{Issues with the Strong Fireball Calculus.} The strong evaluation strategy 
by Biernacka et al. \cite{DBLP:conf/aplas/BiernackaBCD20,DBLP:conf/ppdp/BiernackaCD21} is a deterministic version of the fireball calculus (namely proceeding right-to-left) extended to evaluate under abstractions. 
We omit the actual definition of its extension under abstractions because it is non-trivial and the issue we want to point out can be explained without detailing them.

At the open level, the issue with the fireball calculus is about the semantics, but not about termination. The semantical issue of the open level, however, induces a termination issue at the strong level. Indeed, the strong fireball calculus suffers of a phenomenon similar to the one of false normal forms, even if no redexes are stuck. Consider the term $\tmtwo \defeq (\la\var\Id) (\vartwo (\la\varthree\Omega))$. In the fireball calculus, one has $\tmtwo \toin \Id$ because $\vartwo (\la\varthree\Omega)$ is an inert term. That is, $\tmtwo$ terminates on a strong normal form, namely $\Id$. Instead, $\tmtwo$ is \emph{semantically divergent} at the strong level, as it was the case for $\Omega_l$ in Plotkin's calculus. Intuitively, the inert sub-term $\vartwo (\la\varthree\Omega)$ should be somehow kept, instead of being erased, and evaluated strongly, which would lead to divergence because of the $\Omega$ under abstraction. 

The non-trivial point is how to bring evidence that this is what should happen in a good definition of strong \cbv evaluation.
One way to see that something is wrong with the strong fireball calculus is to observe that the step $\tmtwo \toin\Id$ provides another breaking of contextual stability. Such a breaking is detailed in \Cref{sect:app-cbv-cbfireball}, because although interesting it does not involve strong evaluation.  

Better evidence is developed along the paper. We adopt the value substitution calculus (VSC) and the external strategy by Accattoli et al. \cite{DBLP:conf/lics/AccattoliCC21} for which $\tm$ is divergent. Then, we show that, when refining Ehrhard's \cbv multi types \cite{DBLP:conf/csl/Ehrhard12} by adding the machinery for characterizing termination of strong evaluation, one obtains that $\tm$ above is not typable. That is, $\tm$ is also semantically diverging. 

\section{Value Substitution Calculus}
\label{sect:vsc}

Here we present the \emph{value substitution calculus} (\VSC for short) introduced by Accattoli and Paolini \cite{AccattoliPaolini12}, and recall some properties. The operational semantics shall be different from the fireball calculus of the previous section, but we shall nonetheless exploit the concept of fireball to describe its normal forms.

\paragraph{Terms.} The \VSC is a \cbv $\lambda$-calculus extended with $\letexp$-expressions, similarly to Moggi's \cbv calculus \cite{Moggi88tech,DBLP:conf/lics/Moggi89}. 
We do however write a $\letexp$-expression $\letin\var\tmtwo\tm$ as a more compact  \emph{explicit substitution} $\tm\esub{\var}{\tmtwo}$ (\ES for short), which binds $\var$ in $\tm$. Moreover, our $\letexp$/ES does not fix an order of evaluation between $\tm$ and $\tmtwo$, in contrast to many papers in the literature (\eg Sabry and Wadler \cite{DBLP:journals/toplas/SabryW97} or Levy et al. \cite{DBLP:journals/iandc/LevyPT03}) where $\tmtwo$ is evaluated first. The grammars follow:
\begin{center}
$\arraycolsep=3pt\begin{array}{rrl\colspace \colspace\colspace rrl}
\textsc{Values } & \val & \grameq \la\var\tm 
&
\textsc{Terms } & \tm,\tmtwo, \tmthree & \grameq \var \mid \val \mid \tm\tmtwo 
\mid \tm \esub\var\tmtwo 
\end{array}$
\end{center}
The set of free variables of term $\tm$ is denoted by $\fv{\tm}$ and terms are identified up to $\alpha$-renaming of bound variables. We use $\tm \isub{\var}{\tmtwo}$ for the capture-avoiding substitution of
$\tm$ for each free occurrence of $\var$ in $\tm$.

\paragraph{Contexts.} All along the
paper, we use (many notions of) \emph{contexts}, \ie terms with a hole, noted $\ctxhole$. For now, we need general contexts and the notion of \emph{substitution contexts} $\sctx$, which are simply lists of \ES. The grammars are:
\begin{center}
$\arraycolsep=3pt\begin{array}{r\colspace rl}
\textsc{(General) Contexts} & \ctx & \grameq \ctxhole \mid \ctx \tm \mid \tm \ctx \mid \la{\var}{\ctx} \mid %
\ctx \esub\var\tm \mid \tm \esub \var \ctx 
\\
\textsc{Substitution Contexts } &\subctx & \grameq \ctxhole \mid \subctx \esub\var\tm
\end{array}$
\end{center}

Plugging a term $\tm$ in a context $\ctx$ is noted $\ctxp{\tm}$, possibly~capturing variables, for instance $(\la\var\la\vartwo\ctxhole)\ctxholep{\var\vartwo} = \la\var\la\vartwo\var\vartwo$ (while $(\la\var\la\vartwo\varthree)\isub\varthree{\var\vartwo} = \la{\var'}\la{\vartwo'}\var\vartwo$).
An \emph{answer} is a term of shape $\sctxp\val$.

\paragraph{Reduction Rules.}The reduction rules of \VSC are slightly unusual as they use \emph{contexts} both to allow one to reduce redexes located in sub-terms, which is standard, \emph{and} to define the redexes themselves, which is less standard---these kind of rules is 
called \emph{at a distance}. The rewriting rules in fact mimic exactly cut-elimination on proof nets, via Girard's \cbv translation $(A \Rightarrow B)^v = \oc (A^v \multimap B^v)$ \cite{DBLP:journals/tcs/Girard87} of intuitionistic logic into linear logic, see Accattoli \cite{DBLP:journals/tcs/Accattoli15}.  

There are two rewrite rules. Their root cases (that is, before context closure) follow (the terminology is inherited from linear logic): 
\begin{center}
$\begin{array}{r\colspace r@{\ }l@{\ }l}
    \textsc{Multiplicative root rule} & \subctxp{\la\var\tm}\tmtwo &  \rtom  & \subctxp{\tm\esub{\var}{\tmtwo}} \\
    \textsc{Exponential root rule}  & \tm\esub\var{\subctxp{\val}} &  \rtoe  & \subctxp{\tm\isub{\var}{\val}} 
\end{array}$    
\end{center}
Both root rules are \emph{at a distance} in that they involve a substitution context $\subctx$.
We shall consider two variants of the  \VSC, the open and the strong version. They differ only in the choice of evaluation contexts for the root rewrite rules.

\smallskip
\paragraph{The Open VSC.} We first focus on the open fragment of the \VSC, where rewriting is forbidden under abstraction and terms are \emph{possibly} open (but not necessarily). This fragment has a nice inductive description of its normal forms, called \emph{fireballs} and \emph{inert terms} as they are the lifting to the VSC of the concepts of the fireball calculus. Open contexts and rules are defined as follows.
\begin{center}
		$\begin{aligned}
		\textsc{Open contexts} && \weakctx & \grameq \ctxhole \mid \weakctx \tm \mid \tm \weakctx \mid \weakctx \esub\var\tm \mid 
\tm\esub\var \weakctx 
		\end{aligned}$
		\smallskip
		
				\begin{tabular}{c\colspace\colspace\colspace cc}
		\begin{tabular}{ccc}
\textsc{Open rewrite rules}:
			&
			\multirow{2}{*}{\begin{prooftree}
					\hypo{\tm \rootRew{a} \tm'}		
					\infer1{\weakctxp{\tm} \Rew{\wsym a} \weakctxp{\tm'}}
			\end{prooftree}}
		\\
		($a \in \set{\msym,\esym}$)
	\end{tabular}
	&

		$\begin{array}{cccc}
\multicolumn{3}{c}{\textsc{Open reduction}:}
\\ \tovsubo  \, \defeq \, \tomo \cup \toeo
	\end{array}$
		\end{tabular}
\end{center}

\begin{proposition}[Properties of open reduction \cite{DBLP:journals/pacmpl/AccattoliG22}]
	\label{prop:ovsc-diamond}\label{prop:properties-open-reduction}
	\begin{enumerate}
		\item \label{p:properties-open-reduction-diamond} Open reduction $\tovsubo$ is diamond.
		\item \label{p:properties-open-redction-harmony} A term is $\osym$-normal if and only if it is a fireball, where \emph{fireballs} (and \emph{inert terms}) are defined by:
	\end{enumerate}
\begin{center}
	$\begin{array}{r\colspace r@{\ } l\colspace\colspace\colspace r\colspace r@{\ } l}
	\textsc{Inert terms} & \itm &\grameq \var \mid \itm \fire \mid \itm \esub{\var}{\itmtwo}
	&
	\textsc{Fireballs} & \fire &\grameq \val \mid \itm \mid \fire \esub{\var}{\itm}
	\end{array}$
\end{center}
\end{proposition}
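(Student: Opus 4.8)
The plan is to prove the two items essentially independently, as they concern disjoint aspects of open reduction: item~2 is a purely structural characterization of the $\osym$-normal forms, while item~1 is a rewriting property about how two coinitial redexes interact. For item~2 I would treat the two inclusions separately. \emph{Every fireball (and every inert term) is $\osym$-normal}: by mutual induction on the grammars of $\fire$ and $\itm$. Redexes occurring in proper subterms are ruled out by the inductive hypotheses, using that open contexts never enter abstractions, so a value $\la\var\tm$ is $\osym$-normal whatever $\tm$ is. The only delicate point is the absence of a \emph{root} redex: a multiplicative root redex has shape $\sctxp{\la\var\tm}\,\tmtwo$ and an exponential one has shape $\tm\esub\var{\sctxp\val}$, so in both cases one needs the auxiliary fact that \emph{no inert term is an answer} (a term of shape $\sctxp\val$), which follows by an immediate induction on $\itm$ (a variable, an application, and an ES $\itm\esub\var\itmtwo$ with $\itm$ inert are none of them answers, using the induction hypothesis in the ES case). \emph{Every $\osym$-normal term is a fireball}: here I would first isolate the auxiliary dichotomy that \emph{every fireball is either an answer or an inert term}, proved by induction on $\fire$. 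Then, given a $\osym$-normal $\tm$, I argue by induction on $\tm$: a variable or an abstraction is a fireball; if $\tm=\tmtwo\tmthree$ then $\tmtwo,\tmthree$ are $\osym$-normal (a step inside either would be a step of $\tm$), hence fireballs by i.h., and $\tmtwo$ cannot be an answer (otherwise $\tm$ would be a multiplicative root redex), so by the dichotomy $\tmtwo$ is inert and $\tm=\tmtwo\tmthree$ is an inert term (clause $\itm\fire$), hence a fireball; if $\tm=\tmtwo\esub\var\tmthree$ then $\tmtwo,\tmthree$ are fireballs as before and $\tmthree$ cannot be an answer (otherwise $\tm$ would be an exponential root redex), so $\tmthree$ is inert and $\tm$ is a fireball (clause $\fire\esub\var\itm$).

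For item~1 I would follow the route standard for the VSC: prove that $\tomo$ and $\toeo$ are each diamond and that they \emph{strongly commute} (a $\tomo$-step and a $\toeo$-step out of the same term close with exactly one step each in the other relation), and conclude that their union $\tovsubo$ is diamond. The technical core is a \emph{substitutivity lemma}: $\tm\tovsubo\tm'$ implies $\tm\isub\var\val\tovsubo\tm'\isub\var\val$, together with the fact that substituting a value for a variable neither destroys nor merges the $\osym$-redexes already present elsewhere (it may create only fresh ones, which are irrelevant for the diamond). The crucial observation making the diamond hold is that open reduction does not reduce under abstractions, so the value copied by an exponential step contains no $\osym$-redex; hence an exponential step, even though it may duplicate that value into several occurrences of $\var$, never duplicates an $\osym$-redex, and it never erases one either, since the substitution context surrounding the value is retained by the step. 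With these facts, the diamond reduces to a finite inspection of the relative positions of the two reduced redexes: redexes at disjoint positions commute trivially; one redex sitting inside the argument or the ES-content of the other is handled by the substitutivity lemma; and the genuine overlaps induced by the at-a-distance machinery---typically a multiplicative redex $\sctxp{\la\var\tm}\,\tmtwo$ whose substitution context $\subctx$ hosts an exponential redex, or an exponential redex whose substitution context hosts a multiplicative one---are closed by firing the residual of the other step.

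I expect item~1 to be the main obstacle: the substitutivity lemma and the commutation of the two rules have to be stated and proved with care in the presence of at-a-distance redexes, and this is where the $\alpha$-renaming side conditions (e.g.\ that the variables bound by a substitution context do not capture the argument of a multiplicative redex) and the ``values carry no redex'' observation are genuinely used. Item~2, by contrast, is a routine structural induction once the answer-versus-inert dichotomy has been isolated.
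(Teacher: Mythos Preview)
The paper does not prove this proposition: it is imported from \cite{DBLP:journals/pacmpl/AccattoliG22} (note the citation in the statement header and the absence of any \texttt{NoteProof} pointer to the appendix, unlike the other propositions). So there is no in-paper proof to compare against. Your reconstruction is nonetheless the standard one and matches how the result is established in the cited reference: item~2 via the mutual induction on fireballs/inert terms together with the ``inert terms are not answers'' auxiliary fact, and item~1 via separate diamonds for $\tomo$ and $\toeo$ plus strong commutation, exploiting that open evaluation does not enter abstractions so that the value substituted by an exponential step carries no $\osym$-redex. Your identification of the at-a-distance overlaps (an $\esym$-redex inside the substitution context of an $\msym$-redex and vice versa) as the only non-trivial critical pairs is accurate.
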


\paragraph*{Plotkin \textit{vs} VSC.}
The open fragment of the VSC is enough to discuss the relationship with Plotkin's \cbv $\lambda$-calculus as defined in the previous section. Plotkin's calculus can be easily simulated in the \VSC. Indeed, if $(\la\var\tm)\val \rtobv \tm\isub\var\val$ then $(\la\var\tm)\val \rtom \tm\esub\var\val \rtoe \tm\isub\var\val$.

There is no sensible way, instead, to simulate \VSC into Plotkin's calculus. 
Indeed, \VSC is a proper extension of Plotkin's: false normal forms of Plotkin's calculus such as $\Omega_l =(\la{\var}\delta)(\vartwo\vartwo)\delta$ and $\Omega_r\defeq \delta ((\la{\var}\delta) (\vartwo\vartwo))$ are divergent in \VSC:
\begin{center}
$\begin{array}{lllllllllllllll}
\Omega_l  & \tomo & \delta\esub\var{\vartwo\vartwo}\delta & \tomo & (\var\var)\esub\var{\delta}\esub\var{\vartwo\vartwo} 
&\toeo & (\delta\delta)\esub\var{\vartwo\vartwo}  &\tomo & \ldots 
\\[5pt]
\Omega_r & \tomo & \delta \delta\esub\var{\vartwo\vartwo} & \tomo & (\var\var)\esub\var{\delta\esub\var{\vartwo\vartwo}} & \toeo& (\delta\delta)\esub\var{\vartwo\vartwo}&\tomo & \ldots 
\end{array}$
\end{center}
Note that divergence of $\Omega_l$ crucially uses distance on $\tom$ (in the second step), while divergence of $\Omega_r$ crucially uses distance on $\toe$ (in the third step).

\paragraph{VSC and Contextual Equivalence.} Pleasantly, Plotkin's calculus and the VSC induce the same notion of contextual equivalence on $\l$-terms without ES, since contextual equivalence is defined with respect to contexts that \emph{close} terms, see Accattoli and Guerrieri \cite{DBLP:journals/pacmpl/AccattoliG22}. Moreover, the Open VSC is \emph{contextually stable}.

\begin{proposition}[Contextual stability \cite{DBLP:journals/corr/abs-2303-08161}]
	\NoteProof{propappendix:ctx-stability}
The Open VSC is contextually stable, that is, if $\tm\tovsubo^* \tmtwo$ then $\tm \ctxeq\tmtwo$.
\label{prop:ctx-stability}
\end{proposition}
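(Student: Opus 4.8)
The plan is to prove a slightly stronger fact, dispensing with the restriction to closing contexts: for \emph{every} context $\ctx$, $\ctxp\tm$ is weakly $\tovsubo$-normalizing if and only if $\ctxp\tmtwo$ is. Letting $\ctx$ range over closing contexts then gives both $\tm \ctxleq \tmtwo$ and $\tmtwo \ctxleq \tm$, hence $\tm \ctxeq \tmtwo$. By induction on the length of the given evaluation $\tm \tovsubo^* \tmtwo$, it suffices to handle a single step $\tm \tovsubo \tmtwo$. Throughout I exploit that $\tovsubo$ is diamond (\Cref{prop:ovsc-diamond}), hence \emph{uniformly normalizing} (\Cref{sect:app-preliminaries}): weak and strong $\tovsubo$-normalization coincide. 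A handy consequence is that $P \tovsubo P'$ implies that $P$ is weakly $\tovsubo$-normalizing iff $P'$ is --- for the forward direction pass through strong normalization, which is preserved by reduction; the backward direction is immediate.

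Write $\tm = \weakctxp{\tmthree}$ and $\tmtwo = \weakctxp{\tmfour}$ with $\tmthree \rootRew{a} \tmfour$, $a \in \set{\msym,\esym}$, and $\weakctx$ an open context, and consider where the contracted redex sits inside $\ctxp\tm$. \emph{First case: no abstraction of $\ctx$ lies on the path from its root to its hole} --- equivalently, $\ctx$ is itself an open context. Since open contexts are closed under filling their hole with another open context, the context obtained by filling the hole of $\ctx$ with $\weakctx$ is open; therefore $\ctxp\tm \tovsubo \ctxp\tmtwo$ directly, and we conclude by the remark above.

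\emph{Second case: the hole of $\ctx$ occurs inside the body of an abstraction.} Then $\ctxp\tm$ and $\ctxp\tmtwo$ coincide except at one position, located inside an abstraction, where the first displays $\tm$ and the second $\tmtwo$, with $\tm \tovsubo \tmtwo$. The key observation is that $\tovsubo$ never contracts a redex inside an abstraction (open contexts do not enter abstractions) and treats abstractions as atomic values, freely duplicated, erased, and substituted around. Hence $\ctxp\tm$ and $\ctxp\tmtwo$ $\tovsubo$-reduce in lockstep, carrying this frozen difference along identically, until a copy of the relevant abstraction is consumed by a multiplicative step; at that moment the frozen difference surfaces as an ordinary $\tovsubo$-step between the two residual terms, which is then resolved by a finite catch-up reduction --- harmless for normalization, again by the remark above. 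Iterating, $\ctxp\tm$ and $\ctxp\tmtwo$ are weakly $\tovsubo$-normalizing together, which closes the induction.

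The technical heart is the second case: upgrading ``lockstep reduction modulo a frozen difference inside abstractions'' to a genuine bisimulation-up-to argument robust under duplication and erasure of the abstraction --- including the fact that the exponential rule may substitute a value (possibly itself carrying a frozen sub-difference) into an abstraction body, which only relocates frozen matter without unfreezing it. Everything else reduces, as above, to the diamond property of $\tovsubo$. A wholly different, semantic route would instead show that $\tovsubo$ validates subject reduction and expansion for \cbv multi types, so that $\tm \tovsubo^* \tmtwo$ forces equal denotations, hence equal denotations in every context by compositionality, and then invoke adequacy of multi types for open \cbv evaluation; but this relies on machinery developed only later in the paper.
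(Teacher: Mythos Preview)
Your approach differs markedly from the paper's. The paper does not give a self-contained proof at all: it cites an external technical report (Accattoli et al.) for the soundness of \emph{net bisimilarity} with respect to contextual equivalence, and then simply observes that $\tovsubo$ is included in net bisimilarity. That is the entire argument.

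Your Case 1 is correct. Your Case 2, however, is only a sketch, and you say so yourself by calling the bisimulation-up-to the ``technical heart'' without carrying it out. Making it rigorous means fixing a relation (roughly: equal up to finitely many frozen differences $\tm_i/\tmtwo_i$, each under some abstraction, with $\tm_i \tovsubo \tmtwo_i$) and showing it is a $\tovsubo$-bisimulation up to finite catch-up, stable under duplication, erasure, and unfreezing via $\rtom$. This is doable but genuinely non-trivial; in fact it is essentially what the net-bisimilarity machinery of the cited report establishes, in greater generality. So your direct route is not wrong in spirit, but as written it is a promissory note rather than a proof.

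The semantic route you mention at the end is actually the cleanest self-contained argument available within the paper: subject reduction and expansion for $\tovsub$ (\Cref{prop:qual-subject}) give $\tm$ and $\tmtwo$ the same multi types; compositionality of the relational model (standard, from the syntax-directed typing rules) then gives $\ctxp\tm$ and $\ctxp\tmtwo$ the same types for every $\ctx$; and open correctness and completeness (\Cref{thm:open-correctness}, \Cref{thm:open-completeness}) convert this into equi-termination under $\tovsubo$. You dismiss this as relying on later machinery, but the paper's own proof relies on \emph{external} machinery, so this would arguably be an improvement.
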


\paragraph*{The Strong VSC.} The \Full \VSC is obtained by allowing rewriting rules  everywhere, including under abstractions, via a closure by general contexts. 
\begin{center}
%
		\begin{tabular}{c\colspace\colspace\colspace cc}
		\begin{tabular}{ccc}
\textsc{\Full rewrite rules:}
			&
			\multirow{2}{*}
			{\begin{prooftree}
    \hypo{\tm \rootRew{a} \tm'}		
    \infer1{\ctxp\tm \Rew{a} \ctxp{\tm'}}
			\end{prooftree}}
		\\
		($a \in \set{\msym,\esym}$)
	\end{tabular}
	&

		$\begin{array}{r l@{\ } l@{\ } lllllll}
  \multicolumn{3}{c}{\textsc{\Full reduction:} }
  \\\tovsub  & \defeq  &\tom \cup \toe
\end{array}$
	\end{tabular}
\end{center}
Unlike the previous cases, $\tovsub$ is not diamond: consider all the $\vsub$-evaluations of $(\var\var) \esub{\var}{\la{\vartwo} \Id \Id}$, with $\Id \defeq \la{\varthree}\varthree$.  

\begin{proposition}[Properties of \full reduction \cite{AccattoliPaolini12,DBLP:journals/pacmpl/AccattoliG22}] 
\label{thm:vsc-confluence}\label{prop:properties-full-reduction} 
\hfill
\begin{enumerate}
	\item \label{p:properties-full-reduction-confluence} The reduction $\tovsub$ is confluent.
	
	\item \label{p:properties-full-reduction-harmony} A term is $\vsub$-normal if and only if it is a \full fireball, where \emph{\full fireballs} (and \emph{\full inert terms}, \emph{\full values}) are:
\end{enumerate}
\begin{center}
	$\begin{array}{rl\colspace rl}
	\textsc{\Full inert terms } 
	&
	\sitm \grameq \var \mid \sitm \sfire \mid \sitm \esub\var{\sitmtwo}
	&
	\textsc{\Full values } 
	&
	\sval  \grameq  \la\var\sfire
	\\
	\textsc{\Full fireballs } 
	&
	\sfire \grameq \sitm \mid \sval \mid \sfire \esub\var\sitm
	\end{array}$
\end{center}
\end{proposition}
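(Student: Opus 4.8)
The statement has two independent halves, confluence and the syntactic characterization of normal forms, and I would prove them separately.

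\textbf{Confluence.} Since full reduction $\tovsub$ is not diamond, I would obtain confluence through the Tait--Martin-L\"of method of parallel reduction, in the Takahashi style of complete developments. The plan is to define a parallel reduction $\pto$ that simultaneously fires an arbitrary set of the $\msym$- and $\esym$-redexes occurring in a term, being careful that, since the root rules are \emph{at a distance}, $\pto$ must be allowed to fire redexes that become visible only after peeling off a substitution context, while also performing independent parallel reductions inside that substitution context and inside the two immediate arguments. One then checks the sandwich $\tovsub \,\subseteq\, \pto \,\subseteq\, \tovsub^*$, so that $\tovsub^* = \pto^*$ and it suffices to prove $\pto$ diamond. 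For the latter I would define, for each term $\tm$, its complete development $\tm^\circ$ (the term obtained by firing in one shot all the redexes of $\tm$) and prove the triangle property $\tm \pto \tmtwo \Rightarrow \tmtwo \pto \tm^\circ$, which yields the diamond of $\pto$, hence confluence of $\pto$, hence of $\tovsub$. An alternative, modular route goes through the Hindley--Rosen lemma: multiplicative reduction $\tom$ strictly decreases the number of application nodes of a term, hence is strongly normalizing, and it is easily locally confluent (overlapping $\msym$-redexes are always nested and commute), so it is confluent by Newman's lemma; $\toe$ is confluent by a parallel-reduction argument as above restricted to the exponential rule; and $\tom$ and $\toe$ commute (the only non-trivial interaction is that an $\esym$-step may turn a head occurrence $\var\tmtwo$ into a new $\msym$-redex, which is absorbed). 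In either route I expect the main obstacle to be the substitution lemma for $\pto$ --- that $\tm\pto\tm'$ and $\val\pto\val'$ imply $\tm\isub\var\val \pto \tm'\isub\var{\val'}$, together with the analogous statement for substitution contexts --- because of the interaction between meta-level substitution and the at-a-distance pattern matching of the rules.

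\textbf{Characterization of normal forms, ``if'' direction.} I would first isolate two easy facts, each proved by a straightforward induction on the relevant grammar: (i) no full inert term is an \emph{answer} (intuitively, an inert term is never headed, even under a list of explicit substitutions, by an abstraction); and (ii) every full fireball is \emph{either} a full inert term \emph{or} an answer $\subctxp\sval$. Then I would prove, by mutual induction on the grammars of full inert terms, full values, and full fireballs, that all of them are $\vsub$-normal: an abstraction node and an ES node are never themselves redexes; an application $\sitm\sfire$ is not an $\msym$-redex because by (i) $\sitm$ is not an answer; an ES node $\tm\esub\var\sitm$ is not an $\esym$-redex because by (i) $\sitm$ is not an answer; and there is no redex strictly inside by the induction hypotheses.

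\textbf{Characterization of normal forms, ``only if'' direction.} I would prove, by induction on $\tm$, that if $\tm$ is $\vsub$-normal then $\tm$ is a full fireball, using that any immediate subterm of a $\vsub$-normal term is $\vsub$-normal since full reduction is closed under every context constructor. A variable is a full inert term; if $\tm = \la\var\tmtwo$ then $\tmtwo$ is normal, hence a full fireball by the induction hypothesis, so $\tm$ is a full value; if $\tm = \tmtwo\tmthree$ then by the induction hypothesis both subterms are full fireballs, and by (ii) $\tmtwo$ cannot be an answer (else $\tm$ would be an $\msym$-redex), so it is a full inert term and $\tm = \sitm\sfire$ is a full inert term; if $\tm = \tmtwo\esub\var\tmthree$ then again both subterms are full fireballs, and by (ii) $\tmthree$ cannot be an answer (else $\tm$ would be an $\esym$-redex), so $\tmthree$ is a full inert term $\sitm$ and $\tm = \sfire\esub\var\sitm$ is a full fireball. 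This direction is entirely routine once (i) and (ii) are in place, so the real work of the proposition sits in the confluence part.
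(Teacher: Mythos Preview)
The paper does not supply its own proof of this proposition: both parts are imported from the cited literature (Accattoli--Paolini \cite{AccattoliPaolini12} for confluence and Accattoli--Guerrieri \cite{DBLP:journals/pacmpl/AccattoliG22} for harmony), so there is no in-paper argument to compare against. Your plan is nonetheless sound and in line with how these results are established in those references.

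A few remarks on your two routes to confluence. The Hindley--Rosen decomposition you sketch is essentially the one used in the original development of the VSC: $\tom$ is strongly normalizing (your application-count measure is the right one) and locally confluent, hence confluent by Newman; $\toe$ is handled separately; and the two are shown to commute. Your description of the commutation as having ``only one non-trivial interaction'' is a bit optimistic: the at-a-distance shape of both rules means that an $\msym$-step can shuffle the substitution context surrounding a potential $\esym$-redex (and vice versa), so there are a few more diagrams to close than just redex creation, though none of them is hard. The Tait--Martin-L\"of/Takahashi route also works, and your warning about the substitution lemma for $\pto$ interacting with at-a-distance pattern matching is exactly the place where care is needed.

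Your treatment of the normal-form characterization is correct and is precisely the kind of routine mutual induction one finds in \cite{DBLP:journals/pacmpl/AccattoliG22}. The two auxiliary facts (i) and (ii) you isolate are the right invariants; in particular (ii) is what makes the application and ES cases of the ``only if'' direction go through cleanly.
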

The notions of \full inert terms and \full fireballs are a generalization of inert terms and fireballs, respectively, by simply iterating the construction under all abstractions.
%
%
Note that they are similar to normal forms of the (\cbn) $\l$-calculus, but they can have \ESs containing \full inert terms. 

\section{The External Strategy}
\label{sect:external}

In this section, we define Accattoli et al.'s (strong) \emph{external} strategy \cite{DBLP:conf/lics/AccattoliCC21}, that shall be studied via multi types in \refsect{shrinking}. 
Its role is analogous to the leftmost-outermost 
strategy of the $\l$-calculus. A notable difference, however, is that the external strategy is itself non-deterministic, but in a 
harmless way, because it is diamond. The idea is the same used for Plotkin's calculus, that is, allowing one to reduce sub-terms of applications (and ES) in any order. In a strong setting, however, it is a bit trickier to enforce it.

We need a few notions. Firstly, \emph{rigid terms}, 
\ie the variation over inert terms where the arguments of the head variable can be whatever term: 
\begin{center}
	$\textsc{\Pointed terms} \quad \ptm, \ptmtwo  \grameq  \var \mid \ptm\tm \mid \ptm\esub\var \ptmtwo$
\end{center}
Every (\full) inert term is a \pointed term, but the converse does not hold---consider $\vartwo({\delta\Id})$, which is rigid but not inert.

Secondly, we need evaluation contexts for the external strategy $\toess$, which is defined on top of open evaluation. The base case is given by the open rewriting rules (themselves defined via a closure by open contexts, see \Cref{sect:vsc}), which are then closed by \emph{external contexts}, 
defined mutually with \emph{rigid contexts}: 
\begin{center}
		$\begin{array}{r@{\quad} r@{\ } llcc}
		\textsc{External contexts} & \extctx & \grameq \ctxhole \mid \la\var \extctx \mid \tm\esub\var \ictx \mid \extctx \esub\var \ptm \mid \ictx 
		\\
		\textsc{Rigid contexts} & \ictx & \grameq \ptm \extctx \mid \ictx \tm \mid \ictx\esub\var \ptm \mid \ptm \esub\var\ictx
		\end{array}$
\end{center}
\begin{center}
		\begin{tabular}{c\colspace\colspace cc}
		\begin{tabular}{ccc}
\textsc{External rewrite rules:}
			&
			\multirow{2}{*}
			{\begin{prooftree}
					\hypo{\tm \Rew{\wsym a} \tm'}		
					\infer1{\extctxp\tm \Rew{\esssym a} \extctxp{\tm'}}
			\end{prooftree}}
		\\
		($a \in \set{\msym,\esym}$)
	\end{tabular}
&
		$\begin{array}{cccccc}
\multicolumn{3}{c}{\textsc{External reduction:}}
 \\
		\tovsubs \, \defeq \, \toms \cup \toes
	\end{array}$
			\end{tabular}
\end{center}
\giulio{Clearly, $\tovsubs \,\subsetneq\, \tovsub$.}
The strategy 
diverges on $\vartwo (\la\varthree\Omega)$ (\cgiulio{because}{as} $\vartwo\la\varthree\ctxhole$ is a rigid---thus external---context) and normalizes the potentially diverging term $(\la\var\vartwo) (\la\varthree\Omega) \allowbreak\tovsubs^* \vartwo$, because values can be erased even if they diverge under abstraction. 

\emph{Key example}: the external strategy diverges on the term $\tm = (\la\var\Id) (\vartwo (\la\varthree\Omega))$ of \refsect{cbv} on which the strong fireball calculus terminates, showing that \emph{the two strong settings have different notions of termination}. Indeed, $\tm \toms \Id\esub\var{\vartwo (\la\varthree\Omega)}$ and then the external strategy diverges because $\vartwo (\la\varthree\Omega)$ cannot be erased and $\Omega$ occurs in the external evaluation context $\Id\esub\var{\vartwo (\la\varthree\ctxhole)}$. 

The grammars of external and rigid contexts allow evaluation
to enter only inside non-applied abstractions, \eg $(\la\var(\Id\Id)) \val \not\tovsubs 
(\la\var(\vartwo\esub\vartwo\Id)) \val$. 
This is a sort of outside-in order
 which 
is neither left-to-right nor right-to-left---we have both $(\Id\Id) (\Id\Id)\toms (\vartwo\esub\vartwo\Id) (\Id\Id)$ 
and $(\Id\Id) (\Id\Id)\toms (\Id\Id)(\vartwo\esub\vartwo\Id)$---since open contexts do not impose an order on applications. As just showed, the strategy is non-deterministic: another example is given by $\tm = \var 
(\la\vartwo(\Id \Id)) \esub\var{\var 
(\Id\Id)} \toms \var (\la\vartwo \varthree\esub\varthree \Id) \esub\var{\var (\Id\Id)}$, and $\tm \toms \var 
(\la\vartwo(\Id \Id)) \esub\var{\var(\varthree\esub\varthree{\Id})}$. 
Such a behavior however is only a relaxed form of determinism, as it satisfies the \emph{diamond property}.

\begin{proposition}[Properties of external reduction $\tovsubs$ \cite{DBLP:conf/lics/AccattoliCC21}]
	\label{prop:external-properties}
	\label{prop:vsc-diamond}
	\label{l:fullness}
	\begin{enumerate}
		\item\label{p:external-properties-diamond} 
		External reduction $\tovsubs$ is diamond.
		\item\label{p:external-properties-fullness} \emph{Fullness}: 
		Let $\tm$ be a VSC term: $\tm$ is $\esssym$-normal if and only if $\tm$ is $\vsub$-normal.
\end{enumerate}
\end{proposition}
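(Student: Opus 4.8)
I would prove the two items separately: the diamond property (item~\ref{p:external-properties-diamond}) by a critical-pair analysis, and fullness (item~\ref{p:external-properties-fullness}) by showing that the $\esssym$-normal forms are exactly the \full fireballs of \Cref{prop:properties-full-reduction}.

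\emph{Fullness.} One inclusion is immediate: since $\tovsubs \,\subseteq\, \tovsub$, every $\vsub$-normal term is $\esssym$-normal. For the converse I would first observe the reverse inclusion on \emph{open} steps, $\tovsubo \,\subseteq\, \tovsubs$, because closing an open step with the empty context $\ctxhole$ (which is external) is an external step; hence an $\esssym$-normal term is in particular $\osym$-normal, i.e.\ a fireball by \Cref{prop:properties-open-reduction}. This is the crucial preliminary, as it forces every sub-term of an $\esssym$-normal term to be an inert/value component of a fireball — in particular \pointed wherever a \pointed sub-term is required by the grammars of external and rigid contexts — which is what makes the context-lifting in the induction go through. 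I would then prove, by a routine mutual induction on the term, the two claims: (A) if $\tm$ is $\esssym$-normal then it is a \full fireball; and (B) if $\tm$ is $\esssym$-normal and \pointed then it is a \full inert term. The cases follow the context grammars: under an abstraction $\la\var\extctx$ is external, so the body is $\esssym$-normal and (A) gives a \full value; an application $\ptm\tm$ has \pointed head, so $\ptm\extctx$ is rigid and the argument is $\esssym$-normal (use (A)), the head being handled by (B); explicit substitutions are analogous via $\tm\esub\var\ictx$ and $\extctx\esub\var\ptm$; and $\tm$ cannot itself be an open redex ($\subctxp{\la\var\tm}\tmtwo$ or $\tm\esub\var{\subctxp\val}$), since such a term is $\esssym$-reducible via $\ctxhole$. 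Combining, $\esssym$-normal, \full fireball, and $\vsub$-normal coincide.

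\emph{Diamond.} Given $\tmtwo_1 \,{}_{\esssym}\!\!\lto \tm \Rew{\esssym} \tmtwo_2$ with $\tmtwo_1 \neq \tmtwo_2$, I would present each step as an open step performed inside an external context $\extctx_1$, resp.\ $\extctx_2$, and reason on the relative positions of the two holes. I would rely on two auxiliary facts. \emph{Stability}: external and rigid contexts, as well as \pointed terms, are closed under $\tovsubs$ — a \pointed term, its head being a variable, is never a top-level redex, so it reduces only inside its sub-terms and stays \pointed, which keeps the $\ptm$'s and $\ictx$'s occurring in the grammars in place after a step. \emph{No duplication of evaluation}: no external context has its hole inside the content of an explicit substitution when that content is an answer $\subctxp\val$ (an answer is not a rigid context), so an external redex is never inside a value copied by an exponential step, while a multiplicative step copies nothing. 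With these: if the two holes are disjoint, fire both in either order, using stability to keep the residual contexts external and no-duplication to keep residuals unique; if one hole lies inside the other or they coincide, joinability follows from the diamond of open reduction (\Cref{prop:properties-open-reduction}) when the overlap is confined to a single open context, and otherwise — when one step enters an abstraction that the other step's context also traverses — from a direct inspection of the finitely many shapes of the two external/rigid contexts; in all cases the outer external context is lifted unchanged. In every case one further $\esssym$-step on each side closes the diagram, which is the diamond.

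\emph{Main obstacle.} The hard part is the diamond. The root rules act \emph{at a distance}, so redexes may overlap through a substitution context, and since external and rigid contexts are mutually recursive, firing a redex can reshape the surrounding evaluation context; pinning down the stability lemma and making the overlap analysis exhaustive — in particular checking in every pattern that a single additional step suffices on each side, with no residual multiplicities — is where the work concentrates. Fullness is comparatively routine: once one notices $\tovsubo \,\subseteq\, \tovsubs$, it reduces to refining the already-known description of open-normal forms.
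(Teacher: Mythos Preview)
The paper does not give a proof of this proposition: it is stated with an explicit citation to Accattoli et al.\ \cite{DBLP:conf/lics/AccattoliCC21} and no proof appears in the body or in the Appendix. There is therefore nothing in the present paper to compare your attempt against.

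That said, your approach is the expected one and is essentially how the cited paper proceeds. For fullness, the key observation $\tovsubo \subseteq \tovsubs$ is correct, and your mutual induction (A)/(B) is the right shape; just be careful in the application case that lifting an external step from the head $\ptm$ of $\ptm\,\tmtwo$ to the whole term requires knowing that an external context inside a rigid term is (or can be replaced by) a \emph{rigid} context, since only $\ictx\,\tm$ is in the grammar, not $\extctx\,\tm$---this is a small lemma you are implicitly using. For the diamond, your two auxiliary facts are exactly the right ones: stability of rigid terms and rigid/external contexts under $\tovsubs$, and the impossibility of an external hole sitting inside an answer $\subctxp\val$ (so exponential steps never duplicate an external redex). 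The overlap analysis is indeed where the work concentrates, as you note; the at-a-distance shape of the root rules means the ``same position'' case is not a single pattern but a small family, and one must also check that a multiplicative step $\subctxp{\la\var\tm}\tmtwo \rtom \subctxp{\tm\esub\var\tmtwo}$ does not destroy an external context whose hole was in $\tmtwo$ or in one of the ES of $\subctx$. None of this is wrong in your sketch, but it is where a full proof would expand.
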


%
%

\section{Multi Types by Value}
\label{sect:types}
We present here the system of \cbv \emph{multi types} that we shall use to characterize the termination of the external strategy in \refsect{shrinking}. The system was introduced by Ehrhard \cite{DBLP:conf/csl/Ehrhard12} for Plotkin's \cbv $\l$-calculus, as the \cbv 
version of de Carvalho's multi types system for \cbn \cite{Carvalho07,deCarvalho18}. Both systems can be seen as presentations of the relational semantics of linear logic restricted to the \cbv/\cbn interpretation of the $\l$-calculus. 
The \cbv~multi type system is also used in \cite{DBLP:conf/lfcs/Diaz-CaroMP13,DBLP:conf/fossacs/CarraroG14,Guerrieri18,DBLP:conf/aplas/AccattoliG18,DBLP:journals/pacmpl/AccattoliG22,DBLP:journals/corr/abs-2303-08161}.

\paragraph{Multi Types.} There are two layers of types, \emph{linear} and \emph{multi types}: 
\begin{center}
$\begin{array}{r\colspace r@{\ } ll}
\textsc{Linear types} & \ltype, \ltypetwo &\grameq \ground \mid \larrow{\mtype}{\mtypetwo}
\\
\textsc{Multi types} & \mtype, \mtypetwo &\grameq \mset{\ltype_1, \dots, \ltype_n} \quad n \in \nat
\end{array}$
\end{center}
where $\ground$ is an unspecified ground type and $\mset{\ltype_1, \dots, \ltype_n}$ is our notation for finite 
multisets. 
The \emph{empty} multi type $\mset{\,}$ obtained by taking $n = 0$ is also denoted by $\emptymset$. 
When \cbv multi types are used to study weak evaluation, they are usually presented without the ground type $\ground$, as one can use $\emptymset$ as base case for types. For studying strong evaluation, however, $\ground$ is mandatory, as we shall see.

%
A multi type $\mset{\ltype_1, \dots, \ltype_n}$ has to be intended as a conjunction $\ltype_1 \land \dots \land 
\ltype_n$ of linear types $\ltype_1, \dots, \ltype_n$, for a commutative, associative, non-idempotent conjunction 
$\land$ (morally a tensor $\otimes$), of neutral element $\emptymset$.
\correction{Note however~that~$\mset{\ltype} \neq \ltype$.}

The intuition is that a linear type corresponds to a single use of a term $\tm$, and that $\tm$ is typed with a 
multiset 
$\mtype$ of $n$ linear types if it is going to be used (at most) $n$ times. The  meaning of \emph{using a term} (once) is not 
easy to define precisely. Roughly, it means that if $\tm$ is part of a larger term $\tmtwo$, then a copy 
of $\tm$ shall end up in evaluation position during the evaluation of $\tmtwo$. More precisely, the copy shall 
end 
up in evaluation position where \mbox{it is applied to some terms.}

\begin{figure*}[t!]
\centering
\begin{tabular}{c\colspace\colspace c\colspace\colspace cccc}
	$\begin{prooftree}[label separation = .2em]
					\infer0
					[\scriptsize$\ruleAx$]
					{\var \hastype [\ltype] \vdash \var \hastype \ltype}
			\end{prooftree}$
			&
			$\begin{prooftree}[separation=1em, label separation = .2em]
					\hypo{\typctx \vdash \tm \hastype \mset{ \larrow{\mtype\!}{\!\mtypetwo} }}
					\hypo{\typctxtwo \vdash \tmtwo \hastype \mtype}
					\infer2[\scriptsize$\ruleAp$]
					{\typctx \uplus \typctxtwo \vdash \tm\tmtwo \hastype \mtypetwo}
			\end{prooftree}$
			&
			$\begin{prooftree}[label separation = .2em]
					\hypo{\tyjp{}{\tm}{\typctx, \var \hastype \mtype}{\mtypetwo}}
					\infer1[\scriptsize$\ruleFun$]
					{\tyjp{}{\la{\var}{\tm}}{\typctx}{\ty{\mtype}{\mtypetwo}}}
			\end{prooftree}$
\\\\
			
				&
							$\begin{prooftree}[separation=1em, label separation = .2em]
					\hypo{\typctx, \var \hastype \mtype \vdash \tm \hastype \mtypetwo}
					\hypo{\typctxtwo \vdash \tmtwo \hastype \mtype}
					\infer2
					[\scriptsize$\ruleES$]
					{\typctx \uplus \typctxtwo \vdash \tm \esub\var\tmtwo \hastype \mtypetwo}
			\end{prooftree}$
&
$\begin{prooftree}[separation=1em, label separation = .2em]
				\hypo{\left[\tyjp{}{\tval}{\typctx_{i}}{\ltype_{i}}\right]_{\iI}}
				\infer1
				[\scriptsize$\ruleMany$]
				{\tyjp{}{\tval}{\biguplus_{\iI}\typctx_{i} }{\mult{\ltype_{i}}_{\iI}}}
				\end{prooftree}$
		\end{tabular}
	\caption{Call-by-value multi type system. In rule $\ruleMany$, $\tval$ is a \emph{theoretical value}, \ie a variable or an abstraction.}
	\label{fig:cbvtypes}
\end{figure*}		
The derivation rules for the multi types system are in \Cref{fig:cbvtypes} (explanation follows).  The rules are the same as in Ehrhard \cite{DBLP:conf/csl/Ehrhard12}, up to their extension to \ESs. 

\emph{Judgments} have shape $\typctx \vdash \tm \hastype \mtype$ or $\typctx \vdash \tm \hastype \ltype$ where $\tm$ is a term, $\mtype$ is a multi type, $\ltype$ is a 
linear type, and $\typctx$ is a \emph{type context}, that is, a total function from variables to multi types 
such that  $\domain{\typctx} \defeq \{\var \mid \typctx(\var) \neq \emptymset\}$ is finite.

\paragraph{Technicalities about Types.} The type context $\typctx$ is \emph{empty} if $\dom{\typctx} = \emptyset$.  
\emph{Multi-set sum} $\mplus$ is extended to type contexts point-wise,
\ie\  $(\typctx \mplus \typctxtwo)(\var) \defeq \typctx(\var) \mplus \typctxtwo(\var)$ for each variable $\var$.
This notion is extended to a finite family of type contexts as expected, 
in particular $\bigmplus_{i \in J\!} \typctx_i$ is the empty context  when $J = \emptyset$.
A type context $\typctx$ is denoted by $\var_1 \hastype \mtype_1, \dots, \var_n \hastype \mtype_n$ (for some $n \in 
\nat$) if $\dom{\typctx} \subseteq \{\var_1, \dots, \var_n\}$ and $\typctx(\var_i) = \mtype_{i}$ for all $1 \leq i \leq 
n$.
Given two type contexts $\typctx$ and $\typctxtwo$ such that $\dom{\typctx} \cap \dom{\typctxtwo} = \emptyset$, the 
type 
context $\typctx, \typctxtwo$ is defined by $(\typctx, \typctxtwo)(\var) \defeq \typctx(\var)$ if $\var \in 
\dom{\typctx}$, $(\typctx, \typctxtwo)(\var) \defeq \typctxtwo(\var)$ if $\var \in \dom{\typctxtwo}$, and $(\typctx, 
\typctxtwo)(\var) \defeq \emptymset$ otherwise.
Note that $\typctx, \var \hastype \emptymset = \typctx$, where we implicitly assume $\var \notin \dom{\typctx}$. 

We write $\concl{\tderiv}{\typctx}{\tm}{\mtype}$ if $\tderiv$ is a (\emph{type}) \emph{derivation} (\ie a tree constructed using the rules in \Cref{fig:cbvtypes}) with conclusion the multi judgment $\typctx \vdash \tm \hastype \mtype$.
In particular,  we write $\concl{\tderiv}{\,}{\tm}{\mtype}$ when $\typctx$ is empty.
We write $\derive{\tderiv}{\tm}$ if $\concl{\tderiv}{\typctx}{\tm}{\mtype}$ for some type context $\typctx$ and some multi type $\mtype$. 

We need a notion of size of type derivations, which shall be used as the termination measure for typable terms.

\begin{definition}[Derivation size]
	Let $\tderiv$ be a type derivation. 
	The \emph{size} $\size{\tderiv}$ of $\tderiv$ is the number of 
	rule occurrences in $\tderiv$ except for rule $\ruleMany$.
\label{def:two-sizes}
\end{definition}

\paragraph{Multisets and Rule $\ruleMany$.}
Rule $\ruleMany$ plays a crucial role, as it is the only rule introducing multi-sets on the right-hand side of judgments: it takes as premises a multi-set of derivations of linear types for a term $\tval$, and glues them together giving a judgment with the multi-set of linear types to $\tval$. The term $\tval$ is a  \emph{theoretical value} $\tval$, that is, a variable or an abstraction---the terminology is taken from Accattoli and Sacerdoti Coen \cite{DBLP:journals/iandc/AccattoliC17}.
Rule $\ruleMany$ is the multi types analogous of the promotion rule of linear logic,
which, in the \cbv representation of the $\lambda$-calculus, is indeed used for typing
abstractions and variables. Note that in particular all abstractions are typable with $\emptymset$ via a $\ruleMany$ rule with no premises.

\paragraph{Subject Reduction and Expansion.} The first properties of the type system that we show are subject reduction and expansion, which hold for \emph{every} VSC step, not only external ones. They rely on a substitution lemma (and its inverse for subject expansion) in the \shortlong{Appendix of \Cref{?}}{Appendix}.

\begin{proposition}[Qualitative subjects]
	\NoteProof{propappendix:qual-subject}
	Let $\tm \tovsub \tm'$.
	\label{prop:qual-subject} 
	\begin{enumerate}
		\item \label{p:qual-subject-reduction}
		\emph{Reduction}: 	if $\namedtyjp{\tderiv}{}{\tm}{\typctx}{\mtype}$ then there is  $\namedtyjp{\tderiv'}{}{\tm'}{\typctx}{\mtype}$ such that $\size{\tderiv} \geq \size{\tderiv'}$.
		
		\item \label{p:qual-subject-expansion}
		\emph{Expansion}: if $\namedtyjp{\tderiv'}{}{\tm'}{\typctx}{\mtype}$ then there is a derivation $\namedtyjp{\tderiv}{}{\tm}{\typctx}{\mtype}$.
	\end{enumerate}
\end{proposition}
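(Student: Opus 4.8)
The natural strategy is to first settle the two \emph{root cases} (the multiplicative and exponential root rules), and then lift them to an arbitrary VSC step by an induction on the general context that closes the step. For the root cases, the reduction direction rests on a \emph{substitution lemma} and the expansion direction on its inverse, an \emph{anti-substitution} (or \emph{splitting}) lemma. The substitution lemma says: if $\namedtyjp{\tderiv_{\tm}}{}{\tm}{\typctx, \var\hastype\mtype}{\mtypetwo}$ and $\namedtyjp{\tderiv_{\val}}{}{\val}{\typctxtwo}{\mtype}$, then there is $\namedtyjp{\tderiv}{}{\tm\isub\var\val}{\typctx \uplus \typctxtwo}{\mtypetwo}$ with $\size{\tderiv} = \size{\tderiv_{\tm}} - n + \size{\tderiv_{\val}}$, where $n$ is the number of linear types in $\mtype$; the point is that in $\tderiv_{\tm}$ the free occurrences of $\var$ are typed exactly by the $n$ axioms glued by the $\ruleMany$ at the root of $\tderiv_{\val}$, which the substitution redistributes among those occurrences, so in particular $\size{\tderiv} \le \size{\tderiv_{\tm}} + \size{\tderiv_{\val}}$. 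The anti-substitution lemma is the converse: from $\namedtyjp{\tderiv}{}{\tm\isub\var\val}{\typctx}{\mtypetwo}$ one reads off a multi type $\mtype$, a splitting $\typctx = \typctx_1 \uplus \typctx_2$, and derivations $\namedtyjp{\tderiv_{\tm}}{}{\tm}{\typctx_1, \var\hastype\mtype}{\mtypetwo}$ and $\namedtyjp{\tderiv_{\val}}{}{\val}{\typctx_2}{\mtype}$; when $\var$ does not occur in $\tm$ one takes $\mtype = \emptymset$ and $\tderiv_{\val}$ the empty $\ruleMany$, which is a legal derivation of $\val$ exactly because $\val$ is an abstraction, hence a theoretical value. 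Both lemmas are proved by induction on $\tderiv$.

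\textbf{Root cases.} For the multiplicative root rule $\subctxp{\la\var\tm}\tmtwo \rtom \subctxp{\tm\esub{\var}{\tmtwo}}$ no substitution is needed: a derivation of the redex is, modulo commuting the $\ruleES$ rules generated by $\subctx$, an $\ruleAp$ whose left premise is a $\ruleMany$ with a \emph{single} $\ruleFun$ premise typing $\la\var\tm$ (single and present because $\ruleAp$ forces the singleton multi type $\mset{\larrow{\mtype}{\mtypetwo}}$ on the function part); replacing this $\ruleAp$--$\ruleFun$ pair by one $\ruleES$, and leaving the $\subctx$-part in place, gives a derivation of the reduct with exactly one rule occurrence less. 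For the exponential root rule $\tm\esub\var{\subctxp{\val}} \rtoe \subctxp{\tm\isub\var\val}$, a derivation of the redex is an $\ruleES$ over a derivation $\tderiv_{\tm}$ of $\tm$ (with $\var$ of type $\mtype$) and a derivation of $\subctxp{\val}$ of type $\mtype$, the latter being the $\subctx$-part of $\ruleES$ rules on top of a derivation $\tderiv_{\val}$ of $\val$ of type $\mtype$; the substitution lemma turns $\tderiv_{\tm}$ and $\tderiv_{\val}$ into a derivation of $\tm\isub\var\val$, around which we re-plug the $\subctx$-part, and counting rule occurrences yields a strict decrease (the $\ruleES$ for $\esub\var\cdot$ disappears and is not compensated). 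The two expansion cases are symmetric, reading the same rule patterns backwards via the anti-substitution lemma, and imposing no size constraint.

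\textbf{Contextual closure.} Both statements are then lifted to a step $\ctxp{r} \tovsub \ctxp{r'}$, with $r \rtom r'$ or $r \rtoe r'$, by induction on the general context $\ctx$: the base case $\ctx = \ctxhole$ is the root case above; in each inductive case one peels off the unique type rule forced by the top constructor of $\ctx$ ($\ruleAp$ for $\ctx\tm$ and $\tm\ctx$, $\ruleES$ for $\ctx\esub\var\tm$ and $\tm\esub\var\ctx$, and a $\ruleMany$ over zero or more $\ruleFun$ premises for $\la\var\ctx$), applies the induction hypothesis to the sub-derivation typing the hole---for $\la\var\ctx$, to each $\ruleFun$ premise of the $\ruleMany$ independently, each keeping its own type context---and re-composes. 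Since the induction hypothesis preserves the type context and the conclusion and never increases the size, the recomposed derivation has the same conclusion and non-increasing size for reduction; for expansion the identical recursion gives typability with no size bookkeeping. Reducing under an abstraction typed $\emptymset$ (a $\ruleMany$ with no premises, which mentions nothing of the sub-term) leaves the derivation untouched, which is why the inequality $\size{\tderiv} \ge \size{\tderiv'}$ need not be strict.

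\textbf{Main obstacle.} The crux is the substitution lemma with its exact size accounting---pinning down that the occurrences of the substituted variable in $\tderiv_{\tm}$ are in bijection with the $\ruleMany$-premises of $\tderiv_{\val}$---and making the two root rules work \emph{at a distance}, i.e.\ through the substitution context $\subctx$: this forces the commutation of the $\subctx$-generated $\ruleES$ rules with the $\ruleAp$/$\ruleES$ at the redex and a careful splitting and merging of type contexts. For subject expansion the delicate point is the erasing sub-case of the anti-substitution lemma, which goes through only because values in the VSC are abstractions and hence always typable with the empty multi type.
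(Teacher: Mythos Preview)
Your proposal is correct and follows essentially the same approach as the paper: induction on the general context, with the root multiplicative case handled by rule rearrangement (one $\ruleAp$/$\ruleFun$ pair becoming one $\ruleES$, hence a strict decrease), the root exponential case via the substitution lemma for reduction and the anti-substitution lemma for expansion, and the inductive cases by peeling off the top constructor's typing rule; you also correctly identify the $\ruleMany$-with-zero-premises abstraction case as the reason the size inequality is not strict. The only minor discrepancy is that the paper's substitution lemma only states the inequality $\size{\tderivthree} \le \size{\tderiv_{\tm}} + \size{\tderiv_{\val}}$ rather than your exact formula, but since you immediately pass to that inequality anyway this does not affect the argument.
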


Note that subject reduction (\refpropp{qual-subject}{reduction}) says also that the derivation size \emph{cannot increase} after a reduction step. It does not say that it \emph{decreases} at every step because, for instance, if $\la\var\tm \tovsub \la\var\tm'$ and $\la\var\tm$ is typed using a empty $\ruleMany$ rule (\ie with 0 premises), which is a derivation of size 0, then also $\la\var\tm'$ is typed using a empty $\ruleMany$ rule, of size 0. 
Hence, not all typable terms terminate for strong/external evaluation, as for instance $\la\var\Omega$ is typable (with $\emptymset$).

There are two ways to strengthen subject reduction without changing the type system and recover termination: restricting either the reductions to not take place under abstraction, which is what we shall do in the next section for the open case, or the kind of types taken into account (roughly, as to limit the use of $\emptymset$), which is what shall guarantee termination for the external strategy.

\section{Multi Types for Open \cbv}
\label{sect:open}

Here we recall the qualitative part of the relationship between \cbv multi types and \ocbv\ studied by Accattoli and Guerrieri in 
\cite{DBLP:journals/pacmpl/AccattoliG22}, where they develop also a quantitative study not used here. 
The reason to recall their result is twofold. 
Firstly, the  external case relies on the open one. 
Secondly, the open case provides the blueprint for the strong case, allowing us to stress similarities and differences. 

The result is that the open evaluation $\tovsubo$ of $\tm$ terminates if and only if $\tm$ is typable.
Since $\tovsubo$ does not reduce under abstractions, every abstraction is $\osym$-normal and indeed typable, for instance with $\emptymset$. As an example, note that $\la{\var}\Omega$ is typable with $\emptytype$ (rule $\ruleManyVal$ with $0$ premises), though $\Omega$ is not.

\paragraph*{Correctness.}
Open correctness establishes that all typable terms are $\osym$-normalizing and it is proved by showing that the size of type derivation decreases with every $\osym$-step. Open correctness is proved following a standard scheme, namely proving a quantitative version of  open subject reduction, stating that every $\tovsubo$ step preserves types and decreases the general size of a derivation.

%

\begin{proposition}[Open quantitative subject reduction]
	\NoteProof{propappendix:weak-subject-reduction}
	Let $\namedtyjp{\tderiv}{}{\tm}{\typctx}{\mtype}$ be a derivation. If $\tm \tovsubo \tm'$ then there is 
	$\namedtyjp{\tderiv'}{}{\tm'}{\typctx}{\mtype}$ with $\size{\tderiv} > \size{\tderiv'}$.
\label{prop:weak-subject-reduction}
\end{proposition}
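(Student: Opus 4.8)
The plan is to proceed by induction on the open evaluation context $\weakctx$ such that $\tm = \weakctxp{\trm}$ and $\tm' = \weakctxp{\trm'}$ with $\trm \rootRew{a} \trm'$ for $a \in \set{\msym,\esym}$. The base case $\weakctx = \ctxhole$ splits into the two root rules. For the multiplicative root rule $\subctxp{\la\var\trm_1}\tmtwo \rtom \subctxp{\trm_1\esub\var\tmtwo}$, I would first observe that the type derivation $\tderiv$ must end with a sequence of $\ruleES$ rules (corresponding to $\subctx$) above a $\ruleAp$ rule, whose left premise types $\subctxp{\la\var\trm_1}$; peeling off the $\ruleES$'s, the abstraction is typed by a $\ruleMany$ rule gathering some finite family $[\ltype_i]_{i \in I}$ of linear-type derivations, each ending in $\ruleFun$. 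The key accounting point is that the new derivation $\tderiv'$ can be built by reassociating exactly these same sub-derivations: each $\ruleFun$ premise (typing $\trm_1$ with $\var \hastype \mtype_i$) is paired with the appropriate splitting of the argument derivation of $\tmtwo$ via a new $\ruleES$ rule, and these are re-gathered by $\ruleMany$ through the same $\subctx$ context. One $\ruleAp$ and $|I|$ occurrences of $\ruleFun$ are removed; $|I|$ occurrences of $\ruleES$ are added; since $\ruleMany$ rules are not counted in the size, the net change is a strict decrease of at least one (the $\ruleAp$). For the exponential root rule $\trm_1\esub\var{\subctxp{\val}} \rtoe \subctxp{\trm_1\isub\var\val}$, the derivation ends in $\ruleES$ above the $\subctx$-tail; the argument $\subctxp{\val}$ is typed, after peeling $\ruleES$'s, by a $\ruleMany$ rule with premises $[\tderiv_j \rhd \val \hastype \ltype_j]_{j\in J}$. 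Here I invoke the substitution lemma (quantitative form, from the Appendix) to replace the $|J|$ axiom leaves for $\var$ in the derivation of $\trm_1$ by the $\tderiv_j$, obtaining a derivation of $\trm_1\isub\var\val$; re-wrapping with the $\subctx$-$\ruleES$'s gives $\tderiv'$. The outer $\ruleES$ on $\var$ disappears, so again the size strictly decreases (the substitution lemma must be stated so that size is additive and the removed axioms have size $1$ each while contributing the $\ltype_j$-derivations' sizes).

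For the inductive step, I would do a case analysis on the shape of $\weakctx$ according to its grammar ($\weakctx\tm$, $\tm\weakctx$, $\weakctx\esub\var\tm$, $\tm\esub\var\weakctx$). In each case the last rule of $\tderiv$ is the corresponding rule ($\ruleAp$ or $\ruleES$), one of whose premises types the sub-term in the hole position, i.e., a term of the form $\weakctx'p{\trm}$ for the smaller context $\weakctx'$; note that when the hole lies in the left premise of $\ruleAp$ (resp. the left premise of $\ruleES$), the subterm in that premise might actually be typed by a $\ruleMany$ gathering several linear derivations, so I would apply the induction hypothesis to each such linear sub-derivation, obtaining for each a strictly smaller one, and reassemble. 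The other premises and the last rule are untouched, so the total size strictly decreases.

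The main obstacle I anticipate is bookkeeping around rule $\ruleMany$: because $\ruleMany$ is invisible to the size measure (by \Cref{def:two-sizes}) but is exactly the rule that multiplies sub-derivations, I must make sure that (i) the strict decrease survives even when $I$ or $J$ is empty — and this is precisely why the multiplicative root case contributes the removed $\ruleAp$ as a guaranteed unit of decrease (an empty $\ruleMany$ above an abstraction has size $0$, so without the $\ruleAp$ there would be nothing to decrease, which is the very phenomenon flagged in the remark after \refprop{qual-subject}), and why in the exponential case the removed outer $\ruleES$ always provides one; and (ii) the quantitative substitution lemma is phrased with the right additivity so that splicing derivations into axiom leaves changes the size by exactly (sum of spliced sizes) $- |J|$. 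Once the substitution lemma is in hand with that precise statement, the rest is a routine induction following the context grammar, entirely parallel to the qualitative argument of \refprop{qual-subject} but tracking sizes.
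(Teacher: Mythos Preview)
Your approach is the paper's: induction on the open context, with the two root cases handled directly (the exponential one via the substitution lemma) and the inductive cases by applying the \ih to the sub-derivation that types the sub-term in the hole.

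However, you are over-complicating several points because you misread the shape of rule $\ruleAp$. Its left premise has type $\mset{\larrow{\mtype}{\mtypetwo}}$, a \emph{singleton} multiset; so in the multiplicative root case the $\ruleMany$ above $\ruleFun$ has exactly one premise, your index set $I$ is forced to be a singleton, and there is no ``splitting of the argument derivation of $\tmtwo$'' to do. The accounting is then just: one $\ruleFun$ and one $\ruleAp$ removed, one $\ruleES$ added, net $-1$. Your worry about ``empty $I$'' there is vacuous. Likewise, in the inductive cases the premises of $\ruleAp$ and $\ruleES$ are single multi-type judgments, so the \ih applies directly to that one sub-derivation; there is no ``$\ruleMany$ gathering several linear derivations'' to iterate over (that phenomenon only arises when the context goes under an abstraction, which open contexts never do). Finally, the paper's substitution lemma only gives $\size{\tderivthree} \leq \size{\tderiv} + \size{\tderivtwo}$, not your exact ``$-|J|$'' formula; the non-strict inequality suffices because the removed outer $\ruleES$ supplies the strict decrease.
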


The size of derivations decreases after any $\tovsubo$ step, 
thus proving $\tovsubo$-termination for typable terms. Clearly, the size provides a bound to the number of steps.

\begin{theorem}[Open correctness]
	\NoteProof{thmappendix:open-correctness}
	Let
	$\derive{\tderiv}{\tm}$ be a derivation.
	Then there is a $\osym$-normalizing evaluation $\deriv \colon \tm \tovsubo^* \fire$ with $\size{\deriv} \leq \size{\tderiv}$.
	\label{thm:open-correctness}
\end{theorem}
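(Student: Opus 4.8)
The plan is to proceed by induction on the size $\size{\tderiv}$ of the type derivation, using open quantitative subject reduction (\Cref{prop:weak-subject-reduction}) as the termination measure and harmony (\Cref{prop:properties-open-reduction}) to identify the shape of the reached normal form. Since by \Cref{prop:weak-subject-reduction} every $\tovsubo$ step preserves the judgment and strictly decreases the derivation size, a typable term cannot diverge; and by harmony a $\osym$-normal term is exactly a fireball, so the evaluation we build automatically ends on some $\fire$.

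Concretely, I would argue as follows. If $\tm$ is $\osym$-normal, then by harmony $\tm$ is a fireball $\fire$ and the empty evaluation $\deriv \colon \tm \tovsubo^* \tm$ works, with $\size{\deriv} = 0 \leq \size{\tderiv}$. Otherwise there is a step $\tm \tovsubo \tm'$; applying \Cref{prop:weak-subject-reduction} to $\tderiv$ gives a derivation $\tderiv'$ with conclusion $\typctx \vdash \tm' \hastype \mtype$ and $\size{\tderiv'} < \size{\tderiv}$. The induction hypothesis, available because $\size{\tderiv'}$ is strictly smaller, yields a $\osym$-normalizing evaluation $\deriv' \colon \tm' \tovsubo^* \fire$ with $\size{\deriv'} \leq \size{\tderiv'}$. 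Prepending the step $\tm \tovsubo \tm'$ produces a $\osym$-normalizing evaluation $\deriv \colon \tm \tovsubo^* \fire$ with $\size{\deriv} = 1 + \size{\deriv'} \leq 1 + \size{\tderiv'} \leq \size{\tderiv}$, the last inequality using $\size{\tderiv'} < \size{\tderiv}$.

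I do not expect any real obstacle here: all the content is concentrated in open quantitative subject reduction, which simultaneously delivers preservation of typability and the strict decrease of the measure, while harmony is only invoked to know that a typable non-normal term actually reduces and that the normal form it reaches is a fireball. If one wanted the stronger conclusion of strong normalization, one could additionally use uniform normalization coming from the diamond property (\Cref{prop:properties-open-reduction}), but for the statement as given the decreasing measure already yields weak normalization together with the required bound, so nothing more is needed.
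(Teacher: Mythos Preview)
Your proposal is correct and follows essentially the same approach as the paper: induction on $\size{\tderiv}$, with open quantitative subject reduction (\Cref{prop:weak-subject-reduction}) providing the strict decrease of the measure and harmony (\refpropp{properties-open-reduction}{harmony}) identifying the normal form as a fireball. Your version is in fact slightly more explicit than the paper's, as you spell out the arithmetic $\size{\deriv} = 1 + \size{\deriv'} \leq 1 + \size{\tderiv'} \leq \size{\tderiv}$ for the bound, which the paper leaves implicit.
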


\paragraph*{Completeness.}
Open completeness establishes that every $\osym$-normalizing term is typable. 
Again, the proof technique is standard: a lemma states that every $\osym$-normal form is typable, and subject expansion (\refpropp{qual-subject}{expansion}) allows us to pull back typability along $\tovsubo$ steps. The lemma about normal forms says that they are all typable with $\emptymset$ and relies on a stronger statement about inert terms: they can be assigned whatever multi type $\mtype$, by tuning the type context $\typctx$ accordingly.

\newcounter{prop:precise-open-typability-nf}
\addtocounter{prop:precise-open-typability-nf}{\value{theorem}}
\begin{lemma}[Typability of open normal forms]
	\label{prop:precise-open-typability-nf} 
	\NoteProof{propappendix:precise-open-typability-nf}
	\begin{enumerate}
		\item \label{p:precise-open-typability-nf-inert}
		\emph{Inert:}
		for every inert term $\itm$ and multi type $\mtype$, there exist a type context $\typctx$ 	and a derivation $\concl{\tderiv}{\typctx}{\itm}{\mtype}$.

		\item\label{p:precise-open-typability-nf-fireball}
		\emph{Fireball:}
		for every fireball $\fire$ there exists a type context $\typctx$ 	and a derivation $\concl{\tderiv}{\typctx}{\fire}{\zero}$.
	\end{enumerate}
\end{lemma}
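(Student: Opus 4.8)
The plan is to prove the two items simultaneously, by induction on the structure of the inert term $\itm$ in item~1 and of the fireball $\fire$ in item~2, following the grammars of inert terms and fireballs. The statements are genuinely mutually recursive: the application case of item~1 needs item~2 on the argument, while the base case $\fire = \itm$ of item~2 needs item~1. The right way to organize this is induction on the size of the term, breaking ties so that the claim about $\itm$ comes before the claim about the fireball $\fire = \itm$ of the same size; every other recursive call is to a structurally smaller term. The crucial point — and the reason the weaker statement ``every inert term is typable with $\emptymset$'' would not suffice to run the induction — is the strengthening in item~1: inert terms are typable with an \emph{arbitrary} multi type $\mtype$, adjusting the type context accordingly.

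For item~1, I would case split on the last production forming $\itm$. If $\itm = \var$, write $\mtype = \mset{\ltype_1, \dots, \ltype_n}$, take $n$ instances of $\ruleAx$ deriving $\var \hastype \mset{\ltype_i} \vdash \var \hastype \ltype_i$, and glue them with $\ruleMany$ (legal, since $\var$ is a theoretical value), obtaining $\var \hastype \mtype \vdash \var \hastype \mtype$; the subcase $n = 0$ is $\ruleMany$ with no premises, giving $\,\vdash \var \hastype \emptymset$. If $\itm = \itm'\,\fire$, apply the i.h.\ of item~1 to $\itm'$ with multi type $\mset{\larrow{\emptymset}{\mtype}}$ and the i.h.\ of item~2 to $\fire$ (yielding a derivation of $\fire \hastype \emptymset$), then apply rule $\ruleAp$, whose conclusion is a derivation of $\itm'\,\fire \hastype \mtype$; note the argument is necessarily typed with $\emptymset$, as that is all item~2 provides, and the head absorbs the arrow $\larrow{\emptymset}{\mtype}$ precisely because $\mtype$ is arbitrary. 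If $\itm = \itm'\esub\var{\itm''}$, apply the i.h.\ of item~1 to $\itm'$ with $\mtype$, read off the multi type $\mtype'$ assigned to $\var$ by the resulting type context, apply the i.h.\ of item~1 to $\itm''$ with $\mtype'$, and combine the two derivations with rule $\ruleES$.

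For item~2, I would case split on the last production forming $\fire$: if $\fire = \val$ is an abstraction, type it with $\emptymset$ via $\ruleMany$ with no premises; if $\fire = \itm$ is an inert term, invoke item~1 with $\mtype = \emptymset$; if $\fire = \fire'\esub\var\itm$, apply the i.h.\ of item~2 to $\fire'$, read off the multi type $\mtype$ assigned to $\var$ by its type context, apply item~1 to $\itm$ with $\mtype$, and glue the two derivations with $\ruleES$. All of this is elementary rule-pushing; the only real care is in choosing the induction order — equivalently, the generalized statement for inert terms — so that the mutual recursion between the two items is well founded: every appeal is to a strictly smaller term, the single exception being the reduction of item~2 for $\fire = \itm$ to item~1 for that same $\itm$, which is harmless as it merely reduces a weaker claim to a stronger one about an identical term.
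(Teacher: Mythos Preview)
Your proposal is correct and follows essentially the same route as the paper's proof: a mutual induction on the grammars of inert terms and fireballs, with the same case analysis and the same choice of $\mset{\larrow{\emptymset}{\mtype}}$ for the head in the application case. Your discussion of well-foundedness (size with a tie-break for the case $\fire = \itm$) is more explicit than the paper's, which simply invokes ``mutual induction on the definition'', but the underlying argument is identical.
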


%

\begin{theorem}[Open completeness]
	\NoteProof{thmappendix:open-completeness}
	Let $\deriv \colon \tm \tovsubo^* \fire$ be an $\osym$-normalizing evaluation.
	Then there is a derivation $\concl{\tderiv}{\typctx}{\tm}{\emptytype}$.
	\label{thm:open-completeness}
\end{theorem}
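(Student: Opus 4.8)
The plan is a routine induction on the length $\size{\deriv}$ of the $\osym$-normalizing evaluation $\deriv\colon \tm \tovsubo^* \fire$, pulling typability back one step at a time.

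For the base case $\size{\deriv} = 0$, the term $\tm$ coincides with the fireball $\fire$, so the \emph{Fireball} clause of \Cref{prop:precise-open-typability-nf} directly produces a type context $\typctx$ and a derivation $\concl{\tderiv}{\typctx}{\tm}{\emptytype}$. Under the hood this rests on the \emph{Inert} clause of the same lemma, which is the genuinely load-bearing statement (an inert term can be typed with an arbitrary multi type, adjusting the type context), but that lemma is already available to us.

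For the inductive step $\size{\deriv} = n+1$, I would decompose $\deriv$ as a first step $\tm \tovsubo \tm'$ followed by an $\osym$-normalizing evaluation $\deriv'\colon \tm' \tovsubo^* \fire$ of length $n$. The induction hypothesis applied to $\deriv'$ yields a derivation $\concl{\tderiv'}{\typctx}{\tm'}{\emptytype}$. Since $\tovsubo \,\subseteq\, \tovsub$, the step $\tm \tovsubo \tm'$ is in particular a $\tovsub$ step, hence subject expansion (\refpropp{qual-subject}{expansion}) transforms $\tderiv'$ into a derivation $\concl{\tderiv}{\typctx}{\tm}{\emptytype}$, closing the induction.

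There is no real obstacle here: both ingredients---typability of fireballs and subject expansion for arbitrary VSC steps---have already been established, and the bookkeeping is immediate. The one design point worth highlighting is that subject expansion was stated in \Cref{prop:qual-subject} for the full reduction $\tovsub$, not just for the open fragment; this generality is exactly what makes the pull-back along $\tovsubo$ steps a one-liner, and the very same statement will be reused in the external/strong case.
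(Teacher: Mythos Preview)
Your proof is correct and follows essentially the same approach as the paper: induction on the length of $\deriv$, invoking \Cref{prop:precise-open-typability-nf} for the base case and subject expansion (\refpropp{qual-subject}{expansion}) for the inductive step. Your closing remark on the generality of subject expansion for $\tovsub$ is apt and mirrors the paper's design.
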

\section{Shrinking Multi Types for the External Strategy}
\label{sect:shrinking}

In this section, we restrict the set of judgments as to characterize the typable terms that terminate with respect to the external strategy. The restriction is obtained by adapting to \cbv the \emph{shrinking technique} for \cbn multi types in 
Accattoli et al. \cite{DBLP:journals/pacmpl/AccattoliGK18}. At the end of the section, we also obtain the untyped normalization theorem for the external strategy.

The definition of shrinking judgments is standard and not due to \cite{DBLP:journals/pacmpl/AccattoliGK18}, see for instance Krivine's book 
\cite{DBLP:books/daglib/0071545}, but the proof technique that we shall use is due to \cite{DBLP:journals/pacmpl/AccattoliGK18} and it is different from others in the literature \cite{DBLP:books/daglib/0071545,deCarvalho18,DBLP:conf/ifipTCS/KesnerV14,DBLP:journals/igpl/BucciarelliKV17}. Its key ingredient is the isolation of a key property of rigid terms (\reflemma{spread-shrinking} below). 

\paragraph{The Need for Shrinking.} As already pointed out at the end of \refsect{types}, some terms that diverge with strong evaluation are typable. We have that $\Omega$ itself is not typable, but $\la\vartwo\Omega$ is typable with $\emptymset$ (via a $\ruleManyVal$ rule with 0 premises). It might seem that the problem is being typable with $\emptymset$, but also $\var(\la\vartwo\Omega)$ is externally divergent and can be typed by assigning $\mset{\larrow\emptymset\mtype}$ to $\var$ (any $\mtype$ works). 
In this case the problem is that, since $\emptymset$ is on the left of $\multimap$, the argument is meant to be erased, but $\var$ cannot actually 
erase it. This is a problem typical of strong settings, as it occurs also in Strong \cbn. The solution is to restrict 
to type derivations satisfying a predicate that forbids types where $\emptymset$ plays these dangerous tricks; in particular a ground type $\ground\neq\emptymset$ is needed. This is 
unavoidable and standard, see 
\cite{DBLP:books/daglib/0071545,DBLP:journals/pacmpl/AccattoliGK18}. Following \cite{DBLP:journals/pacmpl/AccattoliGK18}, the predicate is here called \emph{shrinkingness} 
because it ensures that the size of type derivations shrinks at each $\tovsubs$ step (see \Cref{prop:shrinking-subject-reduction} below).

\begin{figure*}[t!]
\centering
		$\begin{array}{rrcl \colspace rrcl}
		\textsc{Right  multi type } & \rmtype &\grameq &	\mset{\rltype_1, \dots, \rltype_n} \ \ n \geq 1
		&
		\textsc{Right  linear type } & \rltype &\grameq & \ground \mid  \larrow{\lmtype}{\rmtype}
		\\
		\textsc{Left  multi type } & \lmtype &\grameq &	\mset{\lltype_1, \dots, \lltype_n} \ \ n \geq 0
		&
		\textsc{Left  linear type } & \lltype &\grameq & \ground \mid  \larrow{\rmtype}{\lmtype}
		\end{array}$
	\caption{Right and left (shrinking) types.}
	\label{fig:shrinking}
\end{figure*}
\paragraph{Defining Shrinking.} The definition of shrinking forbids the empty multi-set $\emptymset$ on the left of some type arrows $\multimap$. We actually need two notions of shrinking types, \emph{left} and \emph{right}. Intuitively, it is because the typing rule $\ruleFun$ shifts a type from the left-hand side of a judgment to the left of $\multimap$ on the right-hand side of a judgment. 
Formally, \emph{left} and \emph{right shrinking} (multi or linear) \emph{types} are defined in \Cref{fig:shrinking} (we omit \emph{shrinking} when referring to left or right types, for brevity), by mutual induction. The key point is that right multi types \emph{cannot be empty} (note $n \geq 1$), thus $\emptymset$ is forbidden on the left of top $\multimap$ for left linear types.

The notions extend to type contexts and to derivations as follows:
	\begin{itemize}

	\item A type context $\var_1 \hastype \mtype_1, \dots, \var_n \hastype \mtype_n$ is \emph{\leftsh}  if each $\mtype_i$ is \leftsh;
	\item A derivation $\concl{\tderiv}{\typctx}{\tm}{\mtype}$ is \emph{shrinking} if $\typctx$ is \leftsh and $\mtype$ is \rightsh.
\end{itemize}
Examples: $\mset{\ground}$ is both \leftsh and \rightsh (this fact shall play a role below), while $\zero$ is \leftsh but not \rightsh, and $\mset{\larrow{\emptymset}{\mset{\ground}}}$ is \rightsh but not \leftsh.

\paragraph{Key Property of Left Shrinking.} Shrinkingness is a predicate of derivations 
depending only on their \emph{final judgment}. 
For proving  
properties of shrinking derivations, we have to analyze how shrinking propagates to sub-derivations, to apply the \ih in proofs. 
The following lemma is specific to \emph{\leftsh} shrinkingness, on which the propagation of shrinkingness 
then builds. 
%
It says that for specific terms---typable \emph{rigid} terms---\leftsh shrinkingess spreads from the type context to the right-hand multi type in~a~judgment. 
It is the key property of the proof technique.

\begin{lemma}
	[Spreading of \leftsh shrinkingness on judgments]
	\NoteProof{lappendix:spread-shrinking}
	Let $\concl{\tderiv}{\typctx}{\ptm}{\mtype}$ be a derivation and $\ptm$ a \pointed term. 
	If $\typctx$ is \leftsh then $\mtype$ is \leftsh.
	\label{l:spread-shrinking}
\end{lemma}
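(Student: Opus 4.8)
The plan is to proceed by induction on the structure of the pointed term $\ptm$, following its grammar $\ptm \grameq \var \mid \ptm\tm \mid \ptm\esub\var\ptmtwo$, and in each case to inspect the shape of the last rule(s) of the derivation $\tderiv$. The three cases correspond exactly to the three productions, and the inductive hypothesis will be applied to the strictly smaller pointed sub-term on the left (the "pointed" position).

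\textbf{Variable case.} If $\ptm = \var$, then $\tderiv$ is obtained by a $\ruleMany$ rule gluing together several instances of $\ruleAx$, so $\typctx = \var\hastype\mtype$ (plus possibly empty entries), where $\mtype = \mset{\ltype_1,\dots,\ltype_n}$ and each $\ltype_i$ is the type assigned by the corresponding $\ruleAx$. Since $\typctx$ is \leftsh, its entry $\typctx(\var) = \mtype$ is a \leftsh multi type, and since the right-hand type of the derivation equals $\mtype$, it is \leftsh as required. (If $\mtype = \emptymset$, then $\emptymset$ is \leftsh by definition, so this degenerate subcase is fine too.)

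\textbf{Application case.} If $\ptm = \ptmtwo\tm$, the last rule is $\ruleAp$, with premises $\concl{}{\typctx_1}{\ptmtwo}{\mset{\larrow{\mtypetwo}{\mtype}}}$ and $\concl{}{\typctx_2}{\tm}{\mtypetwo}$, and $\typctx = \typctx_1 \uplus \typctx_2$. Since $\typctx$ is \leftsh and $\uplus$ of two type contexts is \leftsh only if each summand is (the union of two multisets is empty/non-empty–structured componentwise, and \leftsh-ness of a multi type is preserved under taking sub-multisets — this is the one small auxiliary observation to check), $\typctx_1$ is \leftsh. By the inductive hypothesis applied to $\ptmtwo$, the right-hand multi type $\mset{\larrow{\mtypetwo}{\mtype}}$ is \leftsh, which by the grammar of \Cref{fig:shrinking} means $\larrow{\mtypetwo}{\mtype}$ is a \leftsh linear type, i.e.\ $\larrow{\mtypetwo}{\mtype} = \larrow{\rmtype}{\lmtype}$ with $\mtype$ (the codomain) a \leftsh multi type — exactly the conclusion we want. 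The \esub{} case $\ptm = \ptmtwo\esub\var\ptmthree$ is handled analogously: the last rule is $\ruleES$, with left premise $\concl{}{\typctx_1,\var\hastype\mtypethree}{\ptmtwo}{\mtype}$ and right premise $\concl{}{\typctx_2}{\ptmthree}{\mtypethree}$, with $\typctx = \typctx_1\uplus\typctx_2$; here I must feed the inductive hypothesis a \leftsh type context for $\ptmtwo$, namely $\typctx_1,\var\hastype\mtypethree$, so I need $\mtypethree$ to be \leftsh. That follows from applying the inductive hypothesis to the \emph{other} pointed sub-term $\ptmthree$ (which is also a pointed term, and is typed under $\typctx_2$, a sub-context of $\typctx$ hence \leftsh): it gives that $\mtypethree$, the right-hand type of $\ptmthree$, is \leftsh. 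So the $\ruleES$ case uses the inductive hypothesis twice.

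\textbf{Main obstacle.} The delicate point is the $\ruleES$ case, and more precisely the need to know that the ES subject's type $\mtypethree$ is \leftsh before I can invoke the inductive hypothesis on $\ptmtwo$ — which is why I must first run the inductive hypothesis on the pointed sub-term $\ptmthree$. One must check that $\ptmthree$ is indeed a pointed term (it is, directly from the grammar) and that $\typctx_2$ is \leftsh (it is a component of a $\uplus$-decomposition of the \leftsh $\typctx$). Beyond that, the only routine lemma needed is that \leftsh-ness of multi types and of type contexts is closed under the relevant sub-multiset / sub-context operations induced by $\uplus$ and by dropping a single variable entry; this is an easy structural fact about the grammar in \Cref{fig:shrinking}. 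Everything else is a direct unfolding of the typing rules against the two grammars of right and left types.
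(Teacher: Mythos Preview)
Your proof is correct and follows essentially the same approach as the paper: induction on the structure of the \pointed term, with the variable case read off from the axiom/many rule, the application case using the \ih on the left \pointed sub-term to obtain that $\mset{\larrow{\mtypetwo}{\mtype}}$ is \leftsh, and the ES case using the \ih twice---first on the \pointed sub-term $\ptmthree$ to get that its type is \leftsh, then on $\ptmtwo$ under the enlarged \leftsh context. The auxiliary closure of \leftsh-ness under $\uplus$-splitting that you flag is exactly the paper's \Cref{rmk:merge-split-coshrinking}.
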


\paragraph{Correctness.}
Shrinking correctness establishes that all typable terms with a shrinking derivation are externally normalizing.  We follow the same pattern as for the open case, but the proof of subject reduction is trickier---this is the delicate  point of the proof technique by Accattoli et al. \cite{DBLP:journals/pacmpl/AccattoliGK18}. It crucially uses the key property of left shrinking for rigid terms above (\reflemma{spread-shrinking}), and it also requires an auxiliary statement with a weaker hypothesis for the induction to go through.

\newcounter{prop:shrinking-subject-reduction}
\addtocounter{prop:shrinking-subject-reduction}{\value{theorem}}
\begin{proposition}[Shrinking quantitative subject reduction for $\toess$]
	\label{prop:shrinking-subject-reduction}
	\NoteProof{propappendix:shrinking-subject-reduction}
	\begin{enumerate}
	\item \emph{Auxiliary statement}: 	Let $\typctx$ be a \leftsh   context.
	Suppose that $\concl{\tderiv}{\typctx}{\tm}{\mtype}$ and that if $\tm$ is a \valES then $\mtype$ is \rightsh. If $\tm \toess \tm'$ then there is a derivation $\namedtyjp{\tderiv'}{}{\tm'}{\typctx}{\mtype}$ with $\size{\tderiv} > \size{\tderiv'}$.
\item \emph{Actual statement}:
Let $\namedtyjp{\tderiv}{}{\tm}{\typctx}{\mtype}$ be a shrinking derivation. If $\tm \toess \tm'$ then there is a derivation 
$\namedtyjp{\tderiv'}{}{\tm'}{\typctx}{\mtype}$ with $\size{\tderiv} > \size{\tderiv'}$.
\end{enumerate}
\end{proposition}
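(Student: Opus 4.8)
The plan is to prove the auxiliary statement by induction on the external context $\extctx$ framing the step --- writing $\tm = \extctxp{q}$, $\tm' = \extctxp{q'}$ with $q \Rew{\wsym a} q'$ for $a \in \set{\msym,\esym}$ --- this induction being mutual with the one on the structure of rigid contexts $\ictx$ (which are themselves external contexts). The actual statement then follows immediately: a shrinking $\tderiv$ has $\typctx$ \leftsh and $\mtype$ \rightsh, so the conditional hypothesis of the auxiliary statement holds unconditionally, and one applies the auxiliary statement.

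For the base case $\extctx = \ctxhole$ the step is an open step $\tm \tovsubo \tm'$, and \Cref{prop:weak-subject-reduction} (open quantitative subject reduction) yields $\tderiv'$ with the same conclusion and $\size{\tderiv} > \size{\tderiv'}$, using none of the shrinking hypotheses. In each inductive case, the shape of $\tm$ forces the last rule of $\tderiv$: $\ruleAp$ if $\tm$ is an application, $\ruleES$ if $\tm$ has an outermost explicit substitution, and $\ruleMany$ --- each of its premises ending with $\ruleFun$ --- if $\tm$ is an abstraction. We single out the premise typing the sub-term of $\tm$ that contains the hole of $\extctx$, apply the induction hypothesis to it --- after verifying that this premise meets the hypotheses of the auxiliary statement, see below --- and recompose with the same last rule. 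Since $\size{\cdot}$ counts all rule occurrences but those of $\ruleMany$, and recomposition leaves the other premises untouched and re-adds a fixed positive number of non-$\ruleMany$ rules, the size strictly decreases as soon as it does for the chosen premise --- \emph{provided} that, in the abstraction case, the $\ruleMany$ has at least one premise, i.e.\ $\mtype \neq \emptymset$.

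The crux is checking the two hypotheses of the auxiliary statement on the chosen premise. Its type context is a sub-context (with respect to $\uplus$) of $\typctx$, hence \leftsh, because the property of being \leftsh passes to sub-multisets; when the hole lies in the body of an explicit substitution the context is moreover extended by $\var \hastype \mtype_0$ with $\mtype_0$ typing the \pointed term bound by that substitution, and \reflemma{spread-shrinking} then gives that $\mtype_0$ is \leftsh. The conditional clause --- ``if the sub-subject is a \valES then its type is \rightsh'' --- is discharged in three ways. (i) If the hole lies inside a rigid sub-context, the sub-subject is a \pointed term, which is never of the form $\sctxp\val$, so the clause is vacuous. (ii) If $\extctx = \ptm\,\extctx'$, the sub-subject sits in the argument position of the \pointed term $\ptm$; by \reflemma{spread-shrinking} the type of $\ptm$ is \leftsh, hence of the form $\mset{\larrow{\rmtype}{\lmtype}}$, so the argument's type is \rightsh \emph{regardless} of whether the argument is a \valES, which both discharges the clause and provides exactly what the \ih needs. (iii) If $\extctx = \extctx'\esub\var\ptm$, the sub-subject is a \valES precisely when $\tm$ is, so the clause on the premise follows from the clause on $\tm$; and if $\extctx = \la\var\extctx'$, then $\tm$ \emph{is} a \valES, so the outer hypothesis gives $\mtype$ \rightsh --- writing $\mtype = \mset{\ltype_1,\dots,\ltype_k}$ this forces $k \geq 1$ and each $\ltype_i = \larrow{\mtype_i}{\mtypetwo_i}$ with $\mtype_i$ \leftsh and $\mtypetwo_i$ \rightsh, so every $\ruleFun$-premise $\typctx_i, \var \hastype \mtype_i \vdash \tmthree \hastype \mtypetwo_i$ satisfies the auxiliary hypotheses and $k \geq 1$ makes the recomposed size strictly decrease.

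The main obstacle is exactly this calibration of the auxiliary statement. Its conditional hypothesis must be weak enough to be re-established for the sub-derivations in the rigid-context cases --- which works because there the sub-subjects are \pointed terms, never \valES, making the clause vacuous, while \reflemma{spread-shrinking} supplies the shrinkingness of the relevant contexts and argument types --- yet strong enough, under an abstraction, to force the right-hand multi type non-empty, so that the $\ruleMany$-insensitive size genuinely shrinks. This balance, together with the mutual-induction bookkeeping between external and rigid contexts, is the delicate heart of the technique of Accattoli et al.\ \cite{DBLP:journals/pacmpl/AccattoliGK18} as transported here to \cbv.
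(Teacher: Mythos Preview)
Your proposal is correct and follows essentially the same approach as the paper: induction on the external context (mutual with rigid contexts), base case via open quantitative subject reduction, and the three-way discharge of the conditional hypothesis---vacuous on rigid sub-subjects, via \reflemma{spread-shrinking} for the argument of a rigid term, and propagated through $\esub\var\ptm$ and under abstraction exactly as you describe. Your explicit identification of why the calibration of the auxiliary statement is the crux, and why $\mtype \neq \emptymset$ is needed in the abstraction case for the $\ruleMany$-insensitive size to strictly decrease, matches the paper's treatment precisely.
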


\begin{proof}
Note that the auxiliary statement is stronger, because every shrinking derivation (defined as having $\typctx$ left and $\mtype$ right) satisfies it: if $\tm$ is not an answer then $\mtype$ can be whatever, and thus in particular it can be right. For the auxiliary statement, we give two cases, the one motivating the use of the auxiliary statement and one showing the use of the key property for rigid terms. The other cases are in the Appendix. The proof is by induction on the evaluation strong context $\strongctx$ such that $\tm = \strongctxp{\tmtwo} \toess 
	\strongctxp{\tmtwo'} = \tm'$ with $\tmtwo \tovsubo \tmtwo'$. The two cases:
	\begin{itemize}
		\item \emph{Rigid context applied to term}, \ie $\strongctx = \ictx\tmthree$. 
		Then, $\tm = \strongctxp{\tmtwo} = \ictxp{\tmtwo} \tmthree \toess \ictxp{\tmtwo'} \tmthree = \strongctxp{\tmtwo'} = \tm'$ with $\tmtwo \tovsubo \tmtwo'$.
		The derivation $\tderiv$ has the following shape:
		\begin{center}$
		\tderiv = 
		\begin{prooftree}
		\hypo{\concl\tderivtwo\typctxtwo{\ictxp\tmtwo}{ \mset{\larrow\mtypetwo\mtype} }}
		\hypo{\concl\tderivthree\typctxthree \tmthree \mtypetwo}
		\infer2[\footnotesize$\ruleAp$]{\typctxtwo \uplus \typctxthree \vdash \ictxp{\tmtwo} \tmthree \hastype \mtype}
		\end{prooftree}$
		\end{center}
		where $\typctx = \typctxtwo \mplus \typctxthree$ is   \leftsh by hypothesis, and then so is $\typctxtwo$.
		By \ih (as $\ictxp{\tmtwo}$ is not an \valES), there is a derivation 
		$\concl{\tderivtwo'}{\typctxtwo}{\ictxp{\tmtwo'}}{\mset{\larrow{\mtypetwo}{\mtype}}}$ with $\size{\tderivtwo'} < \size{\tderivtwo}$.
		We can then build the following derivation:
		\begin{center}$
		\tderiv' = 
		\begin{prooftree}
		\hypo{ \concl{\tderivtwo'}\typctxtwo {\ictxp{\tmtwo'}}{ \mset{\larrow\mtypetwo\mtype} } }
		\hypo{ \concl\tderivthree\typctxthree \tmthree \mtypetwo}
		\infer2[\footnotesize$\ruleAp$]{\typctxtwo \uplus \typctxthree \vdash \ictxp{\tmtwo'} \tmthree \hastype \mtype}
		\end{prooftree}$
		\end{center}
		where $\typctx = \typctxtwo \uplus \typctxthree$ and $\size{\tderiv'} = \size{\tderivtwo'} + \size{\tderivthree} +1 < 
			\size{\tderivtwo} + \size{\tderivthree} +1 = \size{\tderiv}$. Note that in this case we have no hypothesis on $\mtypetwo$, thus on $\mset{\larrow{\mtypetwo}{\mtype}}$, which is why we need a weaker statement in order to use the \ih

		\item \emph{Rigid term applied to strong context}, \ie $\strongctx = \ptm \strongctxtwo$. 
		Then, $\tm = \strongctxp{\tmtwo} = \ptm \strongctxtwop{\tmtwo} \toess \ptm \strongctxtwop{\tmtwo'} = 
		\strongctxp{\tmtwo'} = \tm'$ with $\tmtwo \tovsubo \tmtwo'$.
		The derivation $\tderiv$ is:
		\begin{center}$
		\tderiv = 
		\begin{prooftree}
		\hypo{ \concl\tderivtwo\typctxtwo \ptm { \mset{\larrow\mtypetwo\mtype} } }
		\hypo{ \concl\tderivthree\typctxthree {\strongctxtwop\tmtwo} \mtypetwo }
		\infer2[\footnotesize$\ruleAp$]{\typctxtwo \uplus \typctxthree \vdash \ptm \strongctxtwop{\tmtwo} \hastype \mtype}
		\end{prooftree}$
		\end{center}
		where $\typctx = \typctxtwo \mplus \typctxthree$ is   \leftsh by hypothesis, and then so are $\typctxtwo$ and $\typctxthree$.
		According to spreading of \leftsh shrinkingness applied to $\tderivtwo$ (\Cref{l:spread-shrinking}, which can be applied because $\ptm$ is a rigid term), $\mset{\larrow{\mtypetwo}{\mtype}}$ is a   \leftsh multi type and hence $\mtypetwo$ is a   \rightsh multi type.
		Thus, the \ih applied to $\tderivthree$ gives a derivation 
		$\concl{\tderivthree'}{\typctxthree}{\strongctxtwop{\tmtwo'}}{\mtypetwo}$ with $\size{\tderivthree'} < \size{\tderivthree}$.
		We then build the following derivation: 
		\begin{center}$
		\tderiv' = 
		\begin{prooftree}
		\hypo{ \concl\tderivtwo\typctxtwo \ptm  { \mset{\larrow\mtypetwo\mtype} } }
		\hypo{ \concl{\tderivthree'}\typctxthree {\strongctxtwop{\tmtwo'}} \mtypetwo }
		\infer2[\footnotesize$\ruleAp$]{\typctxtwo \uplus \typctxthree \vdash \ptm \strongctxtwop{\tmtwo'}  \hastype \mtype}
		\end{prooftree}$
		\end{center}
		where $\typctx = \typctxtwo \uplus \typctxthree$ and $\size{\tderiv'} = \size{\tderivtwo} + \size{\tderivthree'} +1 < 
			\size{\tderivtwo} + \size{\tderivthree} +1 = \size{\tderiv}$.			\qedhere

			\end{itemize}

\end{proof}

\begin{theorem}[Shrinking correctness for $\toess$]
	\NoteProof{thmappendix:correctness}
	Let
	$\derive{\tderiv}{\tm}$ be a shrinking derivation.
	Then there is a $\esssym$-normalizing evaluation $\deriv \colon \tm \toess^* \sfire$ with $\size{\deriv} \leq \size{\tderiv}$.
		\label{thm:correctness}
\end{theorem}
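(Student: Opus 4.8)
The plan is a routine induction on the size $\size{\tderiv}$ of the shrinking derivation, following exactly the pattern of open correctness (\Cref{thm:open-correctness}) but driven by shrinking subject reduction instead of the open one. Concretely, I would case-split on whether $\tm$ is $\esssym$-normal.

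If $\tm$ is $\esssym$-normal, I take the empty evaluation $\deriv \colon \tm \toess^* \tm$, whose length is $0 \leq \size{\tderiv}$; it then remains only to check that the endpoint is a full fireball. This part has nothing to do with types: by fullness (\refpropp{external-properties}{fullness}) the $\esssym$-normal and the $\vsub$-normal terms coincide, and by harmony (\refpropp{properties-full-reduction}{harmony}) the $\vsub$-normal terms are exactly the full fireballs $\sfire$, so $\tm$ has the required shape.

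If $\tm$ is not $\esssym$-normal, I pick any step $\tm \toess \tm'$. Shrinking quantitative subject reduction (the actual statement of \Cref{prop:shrinking-subject-reduction}) provides a derivation $\tderiv'$ of $\tm'$ with the same type context and type as $\tderiv$ — hence still shrinking, since shrinkingness depends only on the final judgment — and with $\size{\tderiv'} < \size{\tderiv}$. The induction hypothesis applied to $\tderiv'$ yields an $\esssym$-normalizing evaluation $\deriv' \colon \tm' \toess^* \sfire$ with $\size{\deriv'} \leq \size{\tderiv'}$, and prepending the step $\tm \toess \tm'$ gives $\deriv \colon \tm \toess^* \sfire$ with $\size{\deriv} = \size{\deriv'} + 1 \leq \size{\tderiv'} + 1 \leq \size{\tderiv}$, as wanted.

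As for difficulty, essentially none is left at this level: all the substance sits inside \Cref{prop:shrinking-subject-reduction}, whose delicate cases (needing the spreading property of left shrinkingness for rigid terms, \Cref{l:spread-shrinking}, and the weakened auxiliary statement to make the induction go through) have already been addressed above. Since every $\toess$-step strictly decreases $\size{\tderiv}$, termination and the quantitative bound come out simultaneously, with no well-foundedness argument beyond ordinary induction on $\nat$; the only point one must not forget is invoking fullness and harmony in the base case, so as to identify the reached normal form as a full fireball and not merely as an $\esssym$-normal term.
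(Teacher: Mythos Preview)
Your proposal is correct and matches the paper's own proof essentially step for step: induction on $\size{\tderiv}$, base case via normality, inductive case via shrinking quantitative subject reduction. You are in fact slightly more careful than the paper, which tacitly uses fullness/harmony in the base case and the ``shrinkingness depends only on the final judgment'' observation in the inductive case without spelling them out.
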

Shrinking correctness for $\toess$ shows that the term $\tm = (\la\var\Id) (\vartwo (\la\varthree\Omega))$ diverging for $\toess$ but normalizing for the strong fireball calculus is not typable, otherwise it would $\toess$-terminate. That is, it shows that $\tm$ is semantically diverging.

\paragraph{Completeness.} 
Shrinking completeness is proven as in the open case, using a lemma about the shrinking typability of strong fireballs. Note that the lemma now has an existential quantification on the type $\mtype$ of strong fireballs, while in the open case the type was simply $\emptymset$; here $\emptymset$ would not work, because it is not right. Note also the part about inert terms, stating that $\mtype$ is \emph{left}: it is not a mistake, it can be seen as an instance of the key properties of rigid terms (\reflemma{spread-shrinking}, inert terms are rigid), and it gives shrinking derivations when $\mtype$ is instantiated with, say, $\mset\ground$, which is both left an right.

\begin{lemma}[Shrinking typability of normal forms]
	\label{prop:typability-normal}
	\NoteProof{propappendix:typability-normal}
	\begin{enumerate}
		\item\label{p:typability-normal-inert}
		\emph{Inert:}
		for every \full inert term $\sitm$ and \leftsh multi type $\mtype$,	there exist a \leftsh type context $\typctx$ 	and a derivation $\concl{\tderiv}{\typctx}{\sitm}{\mtype}$.

		\item\label{p:typability-normal-fireball}
		\emph{Fireball:}
		for every \full fireball $\sfire$ 	there is a shrinking derivation $\derive{\tderiv}{\sfire}$.
	\end{enumerate}
\end{lemma}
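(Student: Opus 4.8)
The plan is a mutual induction establishing both points at once, ordered so that for a fixed term point~1 (inert) is treated before point~2 (fireball); this stratification is needed only for the fireball case $\sfire = \sitm$, which reduces to point~1 on the very same term with $\mtype := \mset\ground$, whereas every other recursive call descends to a strict subterm. So the order is well founded, and the proof is by structural induction on $\sitm$ (resp.\ $\sfire$).

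For point~1 I would argue by cases on the \full inert term $\sitm$. If $\sitm = \var$, given a \leftsh multi type $\mtype = \mset{\lltype_1,\dots,\lltype_n}$ I glue the $n$ axioms $\var\hastype\mset{\lltype_j}\vdash\var\hastype\lltype_j$ by rule $\ruleMany$, obtaining a derivation with conclusion context $\var\hastype\mtype$, which is \leftsh since $\mtype$ is (when $n=0$ this gives $\vdash\var\hastype\emptymset$, with the empty, hence \leftsh, context). If $\sitm = \sitm'\sfire'$, I first apply point~2 to $\sfire'$, obtaining a shrinking derivation $\concl{\tderivtwo}{\typctxtwo}{\sfire'}{\mtypetwo}$ with $\typctxtwo$ \leftsh and $\mtypetwo$ \rightsh; then $\mset{\larrow{\mtypetwo}{\mtype}}$ is \leftsh (a one-element multiset whose linear type has \rightsh source $\mtypetwo$ and \leftsh target $\mtype$), so point~1 on $\sitm'$ with this type yields $\concl{\tderiv}{\typctx}{\sitm'}{\mset{\larrow{\mtypetwo}{\mtype}}}$ with $\typctx$ \leftsh, and rule $\ruleAp$ produces the desired derivation of $\typctx\uplus\typctxtwo\vdash\sitm'\sfire'\hastype\mtype$, whose context is \leftsh because a sum of \leftsh contexts is \leftsh. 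If $\sitm = \sitm'\esub\var{\sitm''}$, I apply point~1 to $\sitm'$ with the given \leftsh $\mtype$, write the resulting \leftsh context as $\typctx',\var\hastype\mtype_0$ (so $\mtype_0$ is \leftsh), apply point~1 to $\sitm''$ with the \leftsh type $\mtype_0$ to get a derivation of $\sitm''\hastype\mtype_0$ with \leftsh context, and close with rule $\ruleES$.

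For point~2 I would argue by cases on the \full fireball $\sfire$. If $\sfire = \sitm$, I invoke point~1 with $\mtype := \mset\ground$, which is both \leftsh and \rightsh, so the derivation obtained is shrinking. If $\sfire = \la\var\sfire'$, point~2 on $\sfire'$ gives a shrinking derivation whose context I write $\typctx,\var\hastype\mtype_0$ ($\typctx$, $\mtype_0$ \leftsh) and whose type $\mtypetwo$ is \rightsh; rule $\ruleFun$ followed by a single-premise $\ruleMany$ yields a derivation of $\typctx\vdash\la\var\sfire'\hastype\mset{\larrow{\mtype_0}{\mtypetwo}}$, and $\mset{\larrow{\mtype_0}{\mtypetwo}}$ is \rightsh (one linear type, \leftsh source and \rightsh target). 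If $\sfire = \sfire'\esub\var{\sitm'}$, point~2 on $\sfire'$ gives a shrinking derivation with context $\typctx',\var\hastype\mtype_0$ and \rightsh type $\mtype$; point~1 on $\sitm'$ with the \leftsh type $\mtype_0$ gives a derivation of $\sitm'\hastype\mtype_0$ with \leftsh context, and rule $\ruleES$ produces a shrinking derivation of $\sfire'\esub\var{\sitm'}\hastype\mtype$.

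The elementary facts used throughout---that axiom plus $\ruleMany$ realise any \leftsh multi type on a variable, that $\uplus$ preserves \leftsh contexts, that $\larrow{\cdot}{\cdot}$ is \rightsh (resp.\ \leftsh) exactly when its source is \leftsh (resp.\ \rightsh) and its target is \rightsh (resp.\ \leftsh), and that $\mset\ground$ is simultaneously \leftsh and \rightsh---are immediate from \Cref{fig:shrinking}. The one point that needs care, as already in the open case but here compounded by the left/right bookkeeping, is that point~1 \emph{must} quantify over an arbitrary \leftsh $\mtype$: in the application case the head $\sitm'$ has to be typed with $\mset{\larrow{\mtypetwo}{\mtype}}$, the type dictated by the argument, and it is precisely this freedom that makes the two inductions mesh; dually, point~2 can only assert the existence of \emph{some} \rightsh type, since a value can receive neither $\ground$ nor $\emptymset$. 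I expect this coordination between the two statements, rather than any individual case, to be the crux.
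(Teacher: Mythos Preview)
Your proposal is correct and follows essentially the same approach as the paper: a mutual structural induction on \full inert terms and \full fireballs, with the same case analysis, the same use of $\mset\ground$ to bridge from point~1 to point~2 in the case $\sfire=\sitm$, and the same left/right bookkeeping throughout. Your explicit remark on the well-foundedness of the induction (point~1 before point~2 on the same term) is in fact slightly more careful than the paper's presentation, which leaves this implicit.
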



\begin{theorem}[Shrinking completeness for $\toess$]
	\NoteProof{thmappendix:completeness}
	Let $\deriv \colon \tm \toess^* \sfire$ be a $\esssym$-normalizing evaluation.
	Then there is a shrinking derivation $\derive{\tderiv}{\tm}$.
		\label{thm:completeness}
\end{theorem}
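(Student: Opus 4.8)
The plan is to mirror the structure of the proof of open completeness (\Cref{thm:open-completeness}): proceed by induction on the length $\size{\deriv}$ of the external evaluation $\deriv$, type the final normal form $\sfire$ with the appropriate normal-form lemma, and then pull typability back along each $\toess$-step using subject expansion. The observation that makes this go through in the \emph{shrinking} setting is that shrinkingness of a derivation depends \emph{only} on its final judgment---the context $\typctx$ must be \leftsh and the type $\mtype$ must be \rightsh---so any operation on derivations that preserves the final judgment automatically preserves shrinkingness.

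\textbf{Base case} ($\size{\deriv} = 0$, i.e.\ $\tm = \sfire$). Since $\sfire$ is $\esssym$-normal, Fullness (\Cref{prop:external-properties}) gives that $\sfire$ is $\vsub$-normal, hence a \full fireball by harmony (\Cref{prop:properties-full-reduction}). The fireball part of the lemma on shrinking typability of normal forms (\Cref{prop:typability-normal}) then yields a shrinking derivation $\derive{\tderiv}{\sfire}$, which is exactly what we want.

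\textbf{Inductive step} ($\size{\deriv} \geq 1$). Write $\deriv \colon \tm \toess \tmtwo \toess^* \sfire$. By the \ih\ applied to the strictly shorter evaluation $\tmtwo \toess^* \sfire$, there is a shrinking derivation $\concl{\tderiv'}{\typctx}{\tmtwo}{\mtype}$, so $\typctx$ is \leftsh and $\mtype$ is \rightsh. Since $\tovsubs \subsetneq \tovsub$, the step $\tm \toess \tmtwo$ is in particular a $\tovsub$-step, so subject expansion (\Cref{prop:qual-subject}) provides a derivation $\concl{\tderiv}{\typctx}{\tm}{\mtype}$ with the \emph{same} type context $\typctx$ and the \emph{same} type $\mtype$. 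Hence $\tderiv$ is again shrinking, and $\derive{\tderiv}{\tm}$ holds.

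\textbf{Main obstacle.} There is essentially no difficulty left in this theorem \emph{per se}: the two technical ingredients---subject expansion for arbitrary VSC steps (\Cref{prop:qual-subject}) and the shrinking typability of \full fireballs (\Cref{prop:typability-normal})---already do all the work, and the only thing to check is that shrinkingness survives both of them, which it does because it is a property of the final judgment alone. The real effort is hidden in \Cref{prop:typability-normal}, and ultimately in the key property of rigid terms (\reflemma{spread-shrinking}): one shows by mutual induction that every \full inert term can be typed with an \emph{arbitrary} \leftsh multi type in a \leftsh context, and that every \full fireball admits a shrinking derivation---choosing, say, $\mset{\ground}$ (which is both left and right) for the inert components so that the resulting derivation is genuinely shrinking.
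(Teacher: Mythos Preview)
Your proof is correct and follows essentially the same approach as the paper: induction on $\size{\deriv}$, using \Cref{prop:typability-normal} for the base case and subject expansion (\refpropp{qual-subject}{expansion}) for the inductive step, with shrinkingness preserved because it depends only on the final judgment. You are simply more explicit than the paper in justifying why $\sfire$ is a \full fireball (via Fullness and harmony) and why subject expansion preserves shrinkingness.
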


\paragraph{Untyped Normalization Theorem.} By exploiting an elegant proof technique used by de Carvalho et al. \cite{DBLP:journals/tcs/CarvalhoPF11} and Mazza et al. \cite{DBLP:journals/pacmpl/MazzaPV18}, we obtain an \emph{untyped normalization} theorem for the external strategy of the \VSC, as a corollary of our study of multi types. The key points are subject expansion  (\refpropp{qual-subject}{expansion}) for the \emph{whole} reduction $\tovsub$, instead that just for the external strategy, \correction{and the fact that shrinkingness is a predicate of derivations depending only on their conclusion}.

\begin{theorem}[Untyped normalization for $\toess$]
	Let $\tm$ be a term such that there is a strong fireball $\sfire$ and a reduction sequence $\tm \tovsub^* \sfire$. Then $\tm \toess^*\sfire$. 	\label{thm:normalization}
\end{theorem}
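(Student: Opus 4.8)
The plan is to leverage the machinery already in place: subject expansion for the \emph{full} reduction $\tovsub$, shrinking correctness, and confluence of $\tovsub$. The argument follows the \emph{interpolation} pattern of de Carvalho et al.\ and Mazza et al.: type the target normal form, pull the type derivation back to $\tm$, and then run a terminating strategy forward from $\tm$.

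First I would observe that $\sfire$, being a strong fireball, is $\vsub$-normal by \Cref{prop:properties-full-reduction}. Hence the fireball case of \Cref{prop:typability-normal} gives a \emph{shrinking} derivation $\derive{\tderivtwo}{\sfire}$, say with final judgment $\typctx \vdash \sfire \hastype \mtype$, so that $\typctx$ is \leftsh and $\mtype$ is \rightsh. Next, I would pull this derivation back along the hypothesised reduction $\tm \tovsub^* \sfire$: by induction on its length, applying subject expansion for the whole reduction (\refpropp{qual-subject}{expansion}) at each step, one obtains a derivation $\tderiv$ with \emph{exactly the same} final judgment $\typctx \vdash \tm \hastype \mtype$. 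Since shrinkingness of a derivation depends only on its final judgment---this is precisely the feature that makes the technique work---$\tderiv$ is shrinking as well. Thus $\tm$ has a shrinking derivation.

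Now shrinking correctness (\Cref{thm:correctness}) applies to $\tderiv$ and yields an $\esssym$-normalizing evaluation $\tm \toess^* \sfire'$ with $\sfire'$ a strong fireball. It remains to identify $\sfire'$ with the given $\sfire$. Both are $\tovsub$-reducts of $\tm$: $\sfire$ by hypothesis, and $\sfire'$ because $\toess \,\subseteq\, \tovsub$; moreover both are $\vsub$-normal, $\sfire$ as a strong fireball and $\sfire'$ by fullness (\refpropp{external-properties}{fullness}), since it is $\esssym$-normal. As $\tovsub$ is confluent (\refpropp{properties-full-reduction}{confluence}), two $\vsub$-normal reducts of the same term must coincide, so $\sfire' = \sfire$ and therefore $\tm \toess^* \sfire$, as claimed.

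The only step that is not entirely routine is this last identification $\sfire' = \sfire$: shrinking correctness only guarantees that the external strategy terminates on \emph{some} normal form, and it is confluence of the ambient reduction $\tovsub$---together with the fullness bridge between $\esssym$-normality and $\vsub$-normality---that forces this normal form to be exactly the prescribed $\sfire$. Everything else (typability of the normal form, the iterated subject expansion, and the stability of shrinkingness along expansion) is immediate from the results established above.
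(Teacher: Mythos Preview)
Your proof is correct and follows essentially the same route as the paper: type the strong fireball with a shrinking derivation, pull it back along $\tovsub^*$ by iterated subject expansion (using that shrinkingness depends only on the final judgment), run the external strategy forward by shrinking correctness, and conclude by confluence. Your invocation of fullness to certify that $\sfire'$ is $\vsub$-normal is fine but slightly redundant, since shrinking correctness already outputs a strong fireball, which is $\vsub$-normal by harmony.
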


\begin{proof}
Shrinking typability of normal forms (\Cref{prop:typability-normal}) gives a shrinking derivation $\concl{\tderiv}{\typctx}{\sfire}{\mtype}$.
		Subject expansion (\refpropp{qual-subject}{expansion}) iterated along $\tm \tovsub^* 
		\sfire$ gives a shrinking derivation $\concl{\tderivtwo}{\typctx}{\tm}{\mtype}$. By shrinking correctness (\Cref{thm:correctness}), $\tm \toess^* \sfiretwo$ for a strong fireball $\sfiretwo$. By confluence of the VSC (\refpropp{properties-full-reduction}{confluence}), $\sfire=\sfiretwo$.
		\qedhere
\end{proof}

\paragraph{Relational Semantics.}
\giulio{Multi types induce a denotational model, the \emph{relational semantics}, interpreting a term
as the set of its type judgments \cite{Carvalho07,deCarvalho18,DBLP:conf/csl/Ehrhard12,Guerrieri18,DBLP:conf/aplas/AccattoliG18,DBLP:journals/pacmpl/AccattoliG22}.
Here we focus on the semantics induced by shrinking derivations.
Let $\tm$ be a term and let $\vec{\var} = (\var_1, \dots, \var_n)$ be a list of pairwise distinct variables with $n \geq 0$ and $\fv{\tm} \subseteq \{\var_1, \dots , \var_n\}$:
the  \emph{shrinking semantics} $\sem{\tm}_{\vec{\var}}$ of $\tm$ for $\vec{\var}$ is defined by $\sem{\tm}_{\vec{\var}} \defeq \{((\mtypetwo_1, \dots , \mtypetwo_n ), \mtype) \mid \exists \ \text{shrinking } \concl{\tderiv}{\var_1 \hastype \mtypetwo_1, \dots, \var_n \hastype \mtypetwo_n}{\tm}{\mtype} \}$.

Subject reduction and expansion (\Cref{prop:qual-subject}) guarantee that $\sem{\tm}_{\vec{\var}}$ is \emph{invariant} under $\tovsub$: if $\tm \tovsub \tmtwo$ then $\sem{\tm}_{\vec{\var}} = \sem{\tmtwo}_{\vec{\var}}$.
Shrinking correctness (\Cref{thm:correctness}) and completeness (\Cref{thm:completeness}), along with untyped normalization (\Cref{thm:normalization}), 
guarantee \emph{adequacy} for this semantics, \ie a semantic characterization of normalization in Strong \VSC:
$\tm$ is weakly $\vsub$-normalizing if and only if $\sem{\tm}_{\vec{\var}} \neq \emptyset$.}

\section{Conclusions}
This paper studies call-by-value strong evaluation defined as the external strategy $\toess$ of the value substitution calculus (VSC). Such a strategy is analyzed using the semantical tool of Ehrhard's multi types, declined in their shrinking variant as it is standard for studying strong evaluation. The main contributions are that $\toess$-normalizing terms are exactly those typable with shrinking call-by-value multi types, plus an untyped normalization theorem for $\toess$ in the strong VSC. These results mimic faithfully similar results for strong call-by-name.

These contributions are developed to validate the external strategy as the good notion of termination for strong call-by-value, in contrast to the other \emph{non-equivalent} proposal in the literature given by the strong fireball calculus, which we show to terminate on some terms on which the external strategy diverges.

We conjecture that all $\toess$-normalizing terms are also normalizing in the strong fireball calculus, but a proof is likely to be very technical.

\bibliographystyle{splncs04}
\bibliography{\macrospath/biblio}

\shortlong{}{
\clearpage
\appendix
\section*{Appendix}
The first section of the Appendix gives an example of breaking of contextual stability in the strong fireball calculus.
Then there is a section for every section of the paper with omitted proofs.

\section{Example of Contextual Instability Relative to the Strong Fireball Calculus}
\label{sect:app-cbv-cbfireball}
We show here that the term $\tmtwo \defeq (\la\var\Id) (\vartwo (\la\varthree\Omega))$ that is normalizing in the strong fireball calculus but diverging for the external strategy gives another example of breaking of the contextual stability property. We recall that $\tmtwo \toin \Id$ because $\vartwo (\la\varthree\Omega)$ is an inert term. That is, $\tmtwo$ terminates on a strong normal form in the strong fireball calculus. 

Now, set $\ctx \defeq (\la\vartwo\ctxhole)(\la\varfour\varfour\Id)$ and note that:
\begin{center}
$\begin{array}{lllllllllllllll}
\ctxp\tmtwo &= & (\la\vartwo((\la\var\Id) (\vartwo (\la\varthree\Omega))))(\la\varfour\varfour\Id) & \tobv & (\la\var\Id) ((\la\varfour\varfour\Id)(\la\varthree\Omega))
\\
&\tobv & (\la\var\Id)((\la\varthree\Omega)\Id)  &\tobv & (\la\var\Id)\Omega &\tobv\ldots 
\\[4pt]
\ctxp\Id & = & (\la\vartwo\Id)(\la\varfour\varfour\Id) & \tobv & \Id
\end{array}$
\end{center}
That is, $\tmtwo \toin \Id$ but $\tmtwo\not\ctxeq \Id$ because $\ctxp\tmtwo$ diverges while $\ctxp\Id$ terminates. 

Note that the breaking of contextual stability is of a slightly different nature with respect to the other example given in the paper, as the context here \emph{unlocks} a divergence in $\tmtwo$, instead of \emph{creating} it.

\section{Proofs of \Cref{sect:vsc}}
\begin{proposition}[Contextual stability \cite{DBLP:journals/corr/abs-2303-08161}]
\label{propappendix:ctx-stability}
\NoteState{prop:ctx-stability}
The Open VSC is contextually stable, that is, if $\tm\tovsubo^* \tmtwo$ then $\tm \ctxeq\tmtwo$.
\end{proposition}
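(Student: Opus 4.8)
The plan is to avoid any direct manipulation of the operational behaviour of $\ctxp\tm$ and $\ctxp\tmtwo$, and instead route the argument through multi types, using as crucial ingredient the fact that subject reduction and subject expansion (\Cref{prop:qual-subject}) hold for the \emph{whole} VSC reduction $\tovsub$, not only for $\tovsubo$ or the external strategy.

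First, unfold the definitions: by definition of $\ctxeq$ it suffices to prove that for every context $\ctx$ (we will not even need the closing hypothesis) the term $\ctxp\tm$ is weakly $\osym$-normalizing if and only if $\ctxp\tmtwo$ is; here the operational semantics of the Open VSC is $\tovsubo$, and note that $\fv{\tmtwo} \subseteq \fv{\tm}$, so any context closing $\tm$ closes $\tmtwo$ as well. Second, observe that since $\tovsubo\,\subseteq\,\tovsub$ (open contexts are general contexts) and $\tovsub$ is closed under \emph{arbitrary} general contexts, the hypothesis $\tm \tovsubo^* \tmtwo$ lifts to $\ctxp\tm \tovsub^* \ctxp\tmtwo$ for every context $\ctx$, by a straightforward induction on the length of the reduction, composing $\ctx$ with the open context in which each step takes place. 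Third, transport typability along this reduction in both directions: iterating subject reduction (\refpropp{qual-subject}{reduction}) forwards gives that $\ctxp\tm$ typable implies $\ctxp\tmtwo$ typable, and iterating subject expansion (\refpropp{qual-subject}{expansion}) backwards gives the converse, so $\ctxp\tm$ is typable if and only if $\ctxp\tmtwo$ is.

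Finally, close the loop with the open characterization: Open correctness (\Cref{thm:open-correctness}) says typability implies weak $\osym$-normalization, and Open completeness (\Cref{thm:open-completeness}) says the converse, so for every term typability and weak $\osym$-normalization are equivalent. Chaining the equivalences yields that $\ctxp\tm$ is weakly $\osym$-normalizing iff $\ctxp\tm$ is typable iff $\ctxp\tmtwo$ is typable iff $\ctxp\tmtwo$ is weakly $\osym$-normalizing, for all $\ctx$, hence $\tm \ctxeq \tmtwo$.

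The proof is essentially frictionless, and the only point requiring care is the one highlighted above: subject reduction and expansion must be available for the full reduction $\tovsub$ in order to survive being plugged into an arbitrary context (whose hole may sit under abstractions), which is precisely why \Cref{prop:qual-subject} is stated for every VSC step rather than just for $\tovsubo$. A more elementary, purely syntactic alternative would instead analyze the shape of $\ctx$ — splitting it into its maximal open-context prefix and the sub-context lying under abstractions — and rely on the diamond-derived properties of $\tovsubo$ (uniform normalization and confluence), but this route is noticeably more laborious.
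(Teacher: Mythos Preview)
Your proof is correct and takes a genuinely different route from the paper. The paper's proof simply invokes an external result: soundness of \emph{net bisimilarity} for contextual equivalence (Thm~11.8.3 of the cited technical report), together with the observation that $\tovsubo$ is contained in net bisimilarity. Your argument instead stays entirely inside the present paper, reading contextual stability as an immediate corollary of the multi-type machinery: lift $\tm \tovsubo^* \tmtwo$ to $\ctxp\tm \tovsub^* \ctxp\tmtwo$ for an arbitrary context, transport typability in both directions via the \emph{full}-VSC subject reduction/expansion of \Cref{prop:qual-subject}, and conclude via the open correctness/completeness equivalence of \Cref{thm:open-correctness,thm:open-completeness}. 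The trade-off is that your proof is self-contained and showcases exactly the denotational role of multi types the paper is advertising, at the price of a forward reference to \Cref{sect:types,sect:open}; the paper's version avoids that forward dependency but outsources the real work to another paper. There is no circularity in your approach: none of \Cref{prop:qual-subject}, \Cref{thm:open-correctness}, or \Cref{thm:open-completeness} relies on contextual stability.
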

\begin{proof}
In the technical report by Accattoli et al. \cite{DBLP:journals/corr/abs-2303-08161} it is shown (Thm 11.8.3 in \cite{DBLP:journals/corr/abs-2303-08161}) the soundness of \emph{net bisimilarity} with respect to contextually equivalence, that is, the fact that if $\tm$ is net bisimilar to $\tmtwo$ then $\tm\ctxeq\tmtwo$. The statement follows from the basic fact that if $\tovsubo$ is included in net bisimilarity.\qedhere
\end{proof}

\section{Proofs of \Cref{sect:types}}

First, we observe the following property: given a derivation for a term $\tm$, all variables associated with a non-empty multi type in the type context are free variables~of~ $\tm$.
\begin{remark}
	\label{rmk:free-variables}
	If $\namedtyjp{\tderiv}{}{\tm}{\typctx}{\mtype}$ then $\dom{\typctx} \subseteq \fv{\tm}$.
\end{remark}

\begin{lemma}[Typing of values: splitting]
	\label{l:typing-value-splitting}
	Let $\namedtyjp{\tderiv}{}{\tval}{\typctx}{\mtype}$ (for $\tval$ theoretical value).
	\begin{enumerate}
		\item \label{p:typing-value-splitting-one} If $\mtype = \emptytype$, then $\dom{\typctx} = \emptyset$ and 
		$\size{\tderiv} = 0$. 
		
		\item \label{p:typing-value-splitting-two} For every splitting $\mtype = \mtype_{1} \mplus \mtype_{2}$, there exist 
		type derivations $\namedtyjp{\tderiv_{1}}{}{\tval}{\typctx_{1}}{\mtype_{1}}$ and 
		$\namedtyjp{\tderiv_{2}}{}{\tval}{\typctx_{2}}{\mtype_{2}}$ such that $\typctx = \typctx_{1} \mplus \typctx_{2}$ and $\size{\tderiv} = \size{\tderiv_{1}} + 
		\size{\tderiv_{2}}$.
		
	\end{enumerate}
\end{lemma}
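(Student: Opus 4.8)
The plan is to proceed by a simple inversion on the last rule of the derivation $\tderiv$, with no induction needed. The crucial observation is that $\tval$, being a \emph{theoretical value}, is a variable or an abstraction, and that the conclusion of $\tderiv$ assigns to $\tval$ a \emph{multi} type $\mtype$. Scanning the rules in \Cref{fig:cbvtypes}, the only rule whose conclusion has a theoretical value as subject \emph{and} a multi type on the right-hand side is $\ruleMany$: indeed $\ruleAx$ and $\ruleFun$ conclude judgments with a \emph{linear} type, while the subjects of $\ruleAp$ and $\ruleES$ are an application and an explicit substitution, respectively. Therefore $\tderiv$ ends with an instance of $\ruleMany$, so there are a finite index set $I$, derivations $\concl{\tderiv_i}{\typctx_i}{\tval}{\ltype_i}$ for $i \in I$, and $\mtype = \mset{\ltype_i}_{i \in I}$, $\typctx = \biguplus_{i \in I}\typctx_i$. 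Moreover, since rule $\ruleMany$ is not counted by $\size{\cdot}$ (\Cref{def:two-sizes}), we get $\size{\tderiv} = \sum_{i \in I}\size{\tderiv_i}$.

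Point~1 is then immediate: if $\mtype = \emptytype$ then $I = \emptyset$, whence $\typctx = \biguplus_{i \in \emptyset}\typctx_i$ is the empty type context, so $\dom{\typctx} = \emptyset$, and $\size{\tderiv} = \sum_{i \in \emptyset}\size{\tderiv_i} = 0$.

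For Point~2, I would observe that a decomposition $\mtype = \mtype_1 \mplus \mtype_2$ of the multiset $\mtype = \mset{\ltype_i}_{i \in I}$ induces a partition $I = I_1 \uplus I_2$ such that $\mtype_j = \mset{\ltype_i}_{i \in I_j}$ for $j \in \set{1,2}$. Applying rule $\ruleMany$ to the subfamily of premises $(\tderiv_i)_{i \in I_j}$ yields a derivation $\concl{\tderiv_j}{\typctx_j}{\tval}{\mtype_j}$ with $\typctx_j \defeq \biguplus_{i \in I_j}\typctx_i$. Then $\typctx_1 \mplus \typctx_2 = \biguplus_{i \in I_1}\typctx_i \mplus \biguplus_{i \in I_2}\typctx_i = \biguplus_{i \in I}\typctx_i = \typctx$, and similarly $\size{\tderiv_1} + \size{\tderiv_2} = \sum_{i \in I_1}\size{\tderiv_i} + \sum_{i \in I_2}\size{\tderiv_i} = \sum_{i \in I}\size{\tderiv_i} = \size{\tderiv}$, as required.

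The whole argument is routine. The only step that requires any care is the inversion establishing that $\ruleMany$ must be the last rule of $\tderiv$: this is precisely where the hypothesis that $\tval$ is a theoretical value, together with the fact that the right-hand side is a multi type and not a linear type, is used. I do not expect any genuine obstacle here.
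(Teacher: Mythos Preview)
Your proposal is correct and follows essentially the same approach as the paper: both argue by inversion that the last rule of $\tderiv$ must be $\ruleMany$, then read off Point~1 from the case $I=\emptyset$ and obtain Point~2 by partitioning the index set and reassembling two instances of $\ruleMany$. Your explicit justification of the inversion step and your insistence on a \emph{disjoint} partition $I = I_1 \uplus I_2$ are, if anything, slightly more careful than the paper's write-up.
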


\begin{proof}\hfill
	\begin{enumerate}
		\item By a simple inspection of the typing rules, $\mtype = \emptytype$ and the fact that $\tval$ is a value imply 
		that 
		$\tderiv$ is of the form
		\begin{prooftree}
			\hypo{}
			\infer1[\footnotesize$\ruleMany$]{\tyjp{}{\tval}{}{\emptytype}}
		\end{prooftree}
		where $\dom{\typctx} = \emptyset$ and $\size{\tderiv} = 0$.
		
		\item Let
		\begin{equation*}
		\tderiv = 
		\begin{prooftree}
		\hypo{}
		\ellipsis{$\tderiv_{i}$}{\tyjp{}{\tval}{\typctx_{i}}{\ltype_{i}}}
		\delims{\left(}{\right)_{\iI}}
		\infer1[\footnotesize$\ruleMany$]{\tyjp{}{\tval}{\bigmplus_{\iI} \typctx_{i}}{\mult{\ltype}_{\iI}}}
		\end{prooftree}
		\end{equation*}
		with $\bigmplus_{\iI} \typctx_{i} = \typctx$ and $\mult{\ltype}_{\iI} = \mtype = \mtype_{1} \mplus \mtype_{2}$. Let 
		$I_{1}$ and $I_{2}$ be sets of indices such that $I = I_{1} \cup I_{2}$, $\mtype_{1} = \mult{\ltype_{i}}_{i \in I_{1}}$ 
		and $\mtype_{2} = \mult{\ltype_{i}}_{i \in I_{2}}$. 
		As $\tval$ is a value, 	we can then derive
		\begin{equation*}
		\tderiv_{1} = 
		\begin{prooftree}
		\hypo{}
		\ellipsis{$\tderiv_{i}$}{\tyjp{}{\tval}{\typctx_{i}}{\ltype_{i}}}
		\delims{\left(}{\right)_{i \in I_1}}
		\infer1[\footnotesize$\ruleMany$]{\tyjp{}{\tval}{\bigmplus_{i \in I_{1}} \typctx_{i}}{\mult{\ltype}_{i \in I_{1}}}}
		\end{prooftree}
		\end{equation*}
		and 
		\begin{equation*}
		\tderiv_{2} = 
		\begin{prooftree}
		\hypo{}
		\ellipsis{$\tderiv_{i}$}{\tyjp{}{\tval}{\typctx_{i}}{\ltype_{i}}}
		\delims{\left(}{\right)_{i \in I_2}}
		\infer1[\footnotesize$\ruleMany$]{\tyjp{}{\tval}{\bigmplus_{i \in I_{2}} \typctx_{i}}{\mult{\ltype}_{i \in I_{2}}}}
		\end{prooftree}
		\end{equation*}
		noting that 
		$$
		\typctx = \bigmplus_{\iI} \typctx_{i} = \left( \bigmplus_{i \in I_{1}} \typctx_{i} \right) \mplus 
		\left(\bigmplus_{i \in I_{2}} \typctx_{i} \right)
		$$
		with 
		$$
		\size{\tderiv} = \sum_{\iI} \size{\tderiv_{i}} = \left( \sum_{i \in I_{1}} \size{\tderiv_{i}} \right) + \left( 
		\sum_{i \in I_{2}} \size{\tderiv_{i}} \right) = \size{\tderiv_{1}} + \size{\tderiv_{2}}
		$$
		\qed
		
	\end{enumerate}
\end{proof}

\begin{lemma}[Substitution]
	\label{l:substitution}	
	Let $\tm$ be a term, $\val$ be a value, and $\namedtyjp{\tderiv}{}{\tm}{\typctx, \var \hastype 
		\mtypetwo}{\mtype}$ and $\namedtyjp{\tderivtwo}{}{\val}{\typctxtwo}{\mtypetwo}$ be derivations.
	Then there is a derivation $\namedtyjp{\tderivthree}{}{\tm \isub{\var}{\val}}{\typctx \mplus \typctxtwo}{\mtype}$ 
	with 
	$\size{\tderivthree} \leq \size{\tderiv} + 
	\size{\tderivtwo}$. 
\end{lemma}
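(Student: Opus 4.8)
The proof proceeds by induction on the structure of the type derivation $\tderiv$ for $\tm$, following the standard pattern for substitution lemmas in multi type systems. The statement is the usual ``cut admissibility'' for the value-substitution case: substituting a value $\val$ typed with exactly the multiset $\mtypetwo$ assigned to $\var$ in the context, and tracking that the size stays under the sum of the two sizes.

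The base case is the axiom rule $\ruleAx$, where $\tm = \var'$ for some variable. If $\var' \neq \var$, then $\var$ has type $\emptytype$ in the context, so by \Cref{l:typing-value-splitting}(1) the derivation $\tderivtwo$ of $\val$ with $\emptytype$ has empty context and size $0$; the substitution does nothing and we keep $\tderiv$ itself, with $\size{\tderiv} + 0 = \size{\tderiv}$. If $\var' = \var$, then $\mtype = \mtypetwo = \mset{\ltype}$ for a single linear type $\ltype$; here $\tval = \val$ after substitution and we need a derivation of $\val \hastype \mset\ltype$ from $\typctxtwo$, which is exactly (a repackaging of) $\tderivtwo$, and the size bound $\size{\tderivtwo} \leq 0 + \size{\tderivtwo}$ holds since the axiom has size $1$ — so I should be slightly careful and note that $\mset\ltype \neq \ltype$ forces a $\ruleMany$ wrapper, but that rule does not count toward the size by \Cref{def:two-sizes}, so the bound is clean. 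The inductive cases for $\ruleAp$, $\ruleFun$, $\ruleES$, and $\ruleMany$ are all by direct application of the induction hypothesis to the premises: split the type context (and, for $\ruleMany$ on the typed subterm, split $\mtypetwo$ accordingly using \Cref{l:typing-value-splitting}(2) to distribute the typing of $\val$ across the premises), recombine, and add sizes; the context-splitting bookkeeping handles the fact that $\var$'s multiset may be shared among several premises.

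The genuinely delicate case is $\ruleMany$ applied at the root when $\tm$ is itself a theoretical value (variable or abstraction): then $\mtype = \mset{\ltype_i}_{i \in I}$ with a premise $\typctx_i \vdash \tm \hastype \ltype_i$ for each $i$, and each premise carries its own share $\mtypetwo_i$ of $\var$'s multiset with $\mtypetwo = \biguplus_{i \in I} \mtypetwo_i$. I would invoke \Cref{l:typing-value-splitting}(2) on $\tderivtwo$ to split it into $\tderivtwo_i$ of type $\mtypetwo_i$ with $\size{\tderivtwo} = \sum_i \size{\tderivtwo_i}$, apply the induction hypothesis to each premise to get $\tderivthree_i$ of size $\leq \size{\tderiv_i} + \size{\tderivtwo_i}$, and reassemble via $\ruleMany$; summing gives $\size{\tderivthree} = \sum_i \size{\tderivthree_i} \leq \sum_i \size{\tderiv_i} + \sum_i \size{\tderivtwo_i} = \size{\tderiv} + \size{\tderivtwo}$, using again that $\ruleMany$ is free. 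I expect this splitting-and-resumming step to be the main obstacle, since it is the only place where the multiset structure on the right genuinely interacts with the context on the left, and one must be careful that the case where $\tm$ is a variable equal to $\var$ does not double-count. The remaining administrative cases — commuting the substitution past the non-value constructors — are routine.
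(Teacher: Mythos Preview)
Your overall plan is right in spirit and uses the same key ingredient as the paper (splitting $\tderivtwo$ via \Cref{l:typing-value-splitting} and summing sizes, with $\ruleMany$ free), but your choice of induction variable creates a wrinkle you do not address. The paper proceeds by induction on the \emph{term} $\tm$, not on the derivation $\tderiv$. This matters because the lemma is stated for a \emph{multi} type $\mtype$ on the right, whereas $\ruleAx$ and $\ruleFun$ conclude with \emph{linear} types. In your $\ruleMany$ case you want to apply the induction hypothesis to each premise $\typctx_i \vdash \tm \hastype \ltype_i$, but the hypothesis as stated does not cover linear-type conclusions; you would need to generalise the statement to judgments of either shape before the induction goes through. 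Relatedly, your axiom case is off: when the last rule is $\ruleAx$ the conclusion type is the linear $\ltype$, not $\mset{\ltype}$, so the claim ``$\mtype = \mtypetwo = \mset{\ltype}$ for a single linear type'' conflates the two layers.

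The paper's term induction avoids this entirely: for $\tm = \var$ it directly unfolds the whole $\ruleMany$-over-$\ruleAx$ block (so $\mtype = \mtypetwo = \mset{\ltype_1,\dots,\ltype_n}$ for arbitrary $n$, $\dom\typctx = \emptyset$, and one simply takes $\tderivthree \defeq \tderivtwo$), and for $\tm = \la\vartwo\tmtwo$ it unfolds the $\ruleMany$-over-$\ruleFun$ block, splits $\tderivtwo$ along $\mtypetwo = \biguplus_i \mtypetwo_i$, and applies the \ih to each body $\tmtwo$. The application and ES cases are exactly as you describe. Your approach can be repaired by strengthening the induction hypothesis to cover linear conclusions, but as written the $\ruleMany$/$\ruleAx$/$\ruleFun$ cases do not typecheck against the stated lemma.
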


\begin{proof}
	By induction on the term $\tm$.
	Cases:
	\begin{itemize}
		\item \emph{Variable}, then are two sub-cases:
		\begin{enumerate}
			\item $\tm = \var$, then $\tm \isub{\var}{\tval} = \var \isub{\var}{\tval} = \tval$ and 			
			$\sizem{\tderiv} = 0$ and $\size{\tderiv} = 1$.
			
			the derivation $\tderiv$ has necessarily the form (for some $n \in \nat$)
			\begin{equation*}
			\tderiv = 
			\begin{prooftree}
			\infer0[\footnotesize$\Ax$]{\tyjp{}{\var}{\var \hastype \mset{\ltype_1}}{\ltype_1}}
			\hypo{\overset{n \in \nat}{\ldots}}
			\infer0[\footnotesize$\Ax$]{\tyjp{}{\var}{\var \hastype \mset{\ltype_n}}{\ltype_n}}
			\infer3[\footnotesize$\ruleManyVar$]{\tyjp{}{\var}{\var \hastype 
					\mset{\ltype_1,\dots,\ltype_n}}{\mset{\ltype_1,\dots,\ltype_n}}}
			\end{prooftree}
			\end{equation*}
			with $\mtype = \mset{\ltype_1, \dots, \ltype_n} = \mtypetwo$ and $\dom{\typctx} = \emptyset$.
			Thus, $\sizem{\tderiv} = 0$ and $\size{\tderiv} = n$.
			Let $\tderivthree = \tderivtwo$: so, $\namedtyjp{\tderivthree}{}{\tm \isub{\var}{\tval}}{\typctx \mplus 
				\typctxtwo}{\mtype}$ (since $\typctx \mplus \typctxtwo = \typctxtwo$) with $\sizem{\tderivthree} = \sizem{\tderivtwo} = 
			\sizem{\tderivtwo} + \sizem{\tderiv}$ and $\size{\tderivthree} = \size{\tderivtwo} \leq \size{\tderivtwo} + 
			\size{\tderiv}$ (note that $\size{\tderivthree} = \size{\tderiv} + \size{\tderivtwo}$ if and only if $n=0$).
			
			\item $\tm = \varthree \neq \var$, then $\tm \isub{\var}{\tval} = \varthree$ and 
			$\sizem{\tderiv} = 0$, $\size{\tderiv} = 1$, 
			the derivation $\tderiv$ has necessarily the form (for some $n~\in~\nat$)
			\begin{equation*}
			\tderiv = 
			\begin{prooftree}
			\infer0[\footnotesize$\Ax$]{\tyjp{}{\varthree}{\varthree \hastype \mset{\ltype_1}}{\ltype_1}}
			\hypo{\overset{n \in \nat}{\ldots}}
			\infer0[\footnotesize$\Ax$]{\tyjp{}{\varthree}{\varthree \hastype \mset{\ltype_n}}{\ltype_n}}
			\infer3[\footnotesize$\ruleManyVar$]{\tyjp{}{\varthree}{\varthree \hastype 
					\mset{\ltype_1,\dots,\ltype_n}}{\mset{\ltype_1,\dots,\ltype_n}}}
			\end{prooftree}
			\end{equation*}
			where $\mtype = \mset{\ltype_1, \dots, \ltype_n}$ and  $\mtypetwo = \emptymset$ and $\typctx = \varthree \hastype 
			\mtype$ (while $\typctx(\var) = \emptymset$).
			Thus, $\sizem{\tderiv} = 0$ and $\size{\tderiv} = n$.
			By \reflemmap{typing-value-splitting}{one}, from $\namedtyjp{\tderivtwo}{}{\tval}{\typctxtwo}{\emptytype}$ it 
			follows that $\sizem{\tderivtwo} = 0 = \size{\tderivtwo}$ and $\dom{\typctxtwo} = \emptyset$.  
			Therefore, $\typctx \mplus \typctxtwo = \typctx$.
			Let $\tderivthree = \tderiv$: so, $\namedtyjp{\tderivthree}{}{\tm \isub{\var}{\tval}}{\typctx \mplus 
				\typctxtwo}{\mtype}$  with $\sizem{\tderivthree} = \sizem{\tderiv} = \sizem{\tderiv} + \sizem{\tderivtwo}$ and 
			$\size{\tderivthree} = \size{\tderiv} = \size{\tderiv} + \size{\tderivtwo}$.
		\end{enumerate}
		
		\item \emph{Application}, \ie $\tm = \tmtwo\tmthree$. 
		Then $\tm \isub{\var}{\tval} = \tmtwo \isub{\var}{\tval} \tmthree \isub{\var}{\tval}$ and necessarily
		\begin{equation*}
		\tderiv = 
		\begin{prooftree}
		\hypo{}
		\ellipsis{$\tderiv_{1}$}{\typctx_1, \var \hastype \mtypetwo_1 \vdash \tmtwo \hastype 
			\mset{\larrow{\mtypethree}{\mtype}}}
		\hypo{}
		\ellipsis{$\tderiv_{2}$}{\typctx_2, \var \hastype \mtypetwo_2 \vdash \tmthree \hastype \mtypethree}
		\infer2[\footnotesize$\ruleAp$]{\typctx, \var \hastype \mtypetwo \vdash \tmtwo \tmthree \hastype \mtype}
		\end{prooftree}
		\end{equation*}
		with $\sizem{\tderiv} = \sizem{\tderiv_{1}} + \sizem{\tderiv_{2}} + 1$, $\size{\tderiv} = \size{\tderiv_{1}} + 
		\size{\tderiv_{2}} + 1$, $\typctx = \typctx_1 \uplus \typctx_2$ and $\mtypetwo = \mtypetwo_2 \uplus \mtypetwo_2$. 
		According to \reflemmap{typing-value-splitting}{two} applied to $\tderivtwo$ and to the decomposition $\mtypetwo = 
		\mtypetwo_1 \uplus \mtypetwo_2$, there are contexts $\typctxtwo_1, \typctxtwo_2$ and derivations 
		$\namedtyjp{\tderivtwo_{1}}{}{\tval}{\typctxtwo_1}{\mtypetwo_1}$ and 
		$\namedtyjp{\tderivtwo_{2}}{}{\tval}{\typctxtwo_2}{\mtypetwo_2}$ such that $\typctxtwo = \typctxtwo_{1} \mplus 
		\typctxtwo_2$, $\sizem{\tderivtwo} = \sizem{\tderivtwo_1} + \sizem{\tderivtwo_2}$ and $\size{\tderivtwo} = 
		\size{\tderivtwo_{1}} + \size{\tderivtwo_{2}}$.
		
		By \ih, there are derivations $\namedtyjp{\tderivthree_1}{}{\tmtwo \isub{\var}{\tval}}{\typctx_1 \uplus 
			\typctxtwo_1}{\mult{\larrow{\mtypethree}{\mtype}}}$ and $\namedtyjp{\tderivthree_2}{}{\tmthree 
			\isub{\var}{\tval}}{\typctx_2 \uplus \typctxtwo_2}{\mtypethree}$ such that $\sizem{\tderivthree_{i}} = 
		\sizem{\tderiv_{i}} + \sizem{\tderivtwo_{i}}$ and $\size{\tderivthree_{i}} \leq \size{\tderiv_{i}} + 
		\size{\tderivtwo_{i}}$ for all $i \in \{1,2\}$.
		Since $\typctx \uplus \typctxtwo = \typctx_1 \uplus \typctxtwo_1 \uplus \typctx_2 \uplus \typctxtwo_2$, we can 
		build the derivation
		\begin{equation*}
		\tderivthree = 
		\begin{prooftree}
		\hypo{}
		\ellipsis{$\tderivthree_1$}{\typctx_1 \uplus \typctxtwo_1 \vdash \tmtwo\isub{\var}{\tval} \hastype 
			\mset{\larrow{\mtypethree}{\mtype}}}
		\hypo{}
		\ellipsis{$\tderivthree_2$}{\typctx_2 \uplus \typctxtwo_2 \vdash \tmthree\isub{\var}{\tval} \hastype \mtypethree}
		\infer2[\footnotesize$\ruleAp$]{\typctx \uplus \typctxtwo \vdash \tmtwo \isub{\var}{\tval} \tmthree 
			\isub{\var}{\tval} \hastype \mtype}
		\end{prooftree}
		\end{equation*}
		where $\sizem{\tderivthree} = \sizem{\tderivthree_{1}} + \size{\tderivthree_{2}} + 1 = \sizem{\tderiv_{1}} + 
		\sizem{\tderivtwo_{1}} + \sizem{\tderiv_{2}} + \sizem{\tderivtwo_{2}} + 1 = \sizem{\tderiv} + \sizem{\tderivtwo}$ and 
		$\size{\tderivthree} = \size{\tderivthree_{1}} + \size{\tderivthree_{2}} + 1 \leq \sizem{\tderiv_{1}} + 
		\sizem{\tderivtwo_{1}} + \sizem{\tderiv_{2}} + \size{\tderivtwo_{2}} + 1 = \size{\tderiv} + \size{\tderivtwo}$.
		
		\item \emph{Abstraction}, \ie $\tm = \la{\vartwo}{\tmtwo}$.
		We can suppose without loss of generality that $\vartwo \notin \fv{\tval} \cup \{\var \}$, hence $\tm 
		\isub{\var}{\tval} = \la{\vartwo}{\tmtwo\isub{\var}\tval}$ and  $\tderiv$ is necessarily of the form (for some $n \in 
		\nat$) 
		\begin{equation*}
		\begin{prooftree}[separation=1em]
		\hypo{}
		\ellipsis{$\tderiv_{i}$}{\typctx_{i}, \vartwo \hastype \mtypethree_{i}, \var \hastype \mtypetwo_{i} \vdash \tmtwo 
			\hastype \mtype_{i}}
		\infer1[\footnotesize$\ruleFun$]{\tyjp{}{\la{\vartwo}{\tmtwo}}{\typctx_{i}, \var \hastype 
				\mtypetwo_{i}}{\ty{\mtypethree_{i}\!}{\!\mtype_{i}}}}
		\delims{\left(}{\right)_{1\leq i \leq n}}
		\infer1[\footnotesize$\ruleManyVal$]{\tyjp{}{\la{\vartwo}{\tmtwo}}{\bigmplus_{i=1}^{n} \typctx_{i} ; \var \hastype 
				\mplus_{i=1}^{n} \mtypetwo_i}{\mplus_{i=1}^{n} \mset{\larrow{\mtypethree_i}{\mtype_i}}}}
		\end{prooftree}
		\end{equation*}
		with $\sizem{\tderiv} = \sum_{i=1}^{n} (\sizem{\tderiv_{i}} + 1)$ and $\size{\tderiv} = \sum_{i=1}^n 
		(\size{\tderiv_i} + 1)$.
		Since $\vartwo \notin \fv{\tval}$, then $\vartwo \notin \domain{\typctxtwo}$ (\refrmk{free-variables}), and so 
		$\namedtyjp{\tderivtwo}{}{\tval}{\typctxtwo, \vartwo \hastype \emptymset}{\mtypetwo}$. 
		Now, there are two subcases:
		\begin{itemize}
			\item \emph{Empty multi type}: If $n = 0$,  then $\mtypetwo = \emptymset = \mtype$ and $\dom{\typctx} = 
			\emptyset$, with $\sizem{\tderiv} = 0 = \size{\tderiv}$. 
			According to \reflemmap{typing-value-splitting}{one} applied to $\tderivtwo$, $\dom{\typctxtwo} = \emptyset$ with 
			$\sizem{\tderivtwo} = 0 = \size{\tderivtwo}$.
			We can then build the derivation 
			\begin{equation*}
			\tderivthree = 
			\begin{prooftree}
			\infer0[\footnotesize$\ruleManyVal$]{\tyjp{}{\la{\vartwo}(\tmtwo\isub{\var}{\tval})}{}{\emptymset}}
			\end{prooftree}
			\end{equation*}
			where $\sizem{\tderivthree} = 0 = \sizem{\tderiv} + \sizem{\tderivtwo}$ and $\size{\tderivthree} = 0 = 
			\size{\tderiv} + \size{\tderivtwo}$, and $\concl{\tderivthree}{\typctx \mplus 
				\typctxtwo}{\tm\isub{\var}{\tval}}{\mtype}$ since $\dom{\typctx \mplus \typctxtwo} = \emptyset$.
			
			\item\emph{Non-empty multi type}: If $n > 0$, then we can decompose $\tderivtwo$ according to the partitioning 
			$\mtypetwo = \biguplus_{i=1}^n \mtypetwo_i$ by repeatedly applying \reflemmap{typing-value-splitting}{two}, and hence 
			for all $1 \leq i \leq n$ there are context $\typctxtwo_{i}$ and a derivation 
			$\namedtyjp{\tderivtwo_i}{}{\tval}{\typctxtwo_{i} ; \vartwo \hastype \emptytype}{\mtypetwo_i}$ such that 
			$\sizem{\tderivtwo} = \sum_{i=1}^n \sizem{\tderivtwo_{i}}$ and $\size{\tderivtwo} = \sum_{i=1}^{n} 
			\size{\tderivtwo_{i}}$.
			By \ih, for all $1 \leq i \leq n$, there is a derivation 
			$\namedtyjp{\tderivthree_{i}}{}{\tmtwo\isub{\var}{\tval}}{\typctx_i \uplus \typctxtwo_i, \vartwo \hastype 
				\mtypethree_i}{\mtype_i}$ such that $\sizem{\tderivthree_{i}} = \sizem{\tderiv_{i}} + \sizem{\tderivtwo_{i}}$ and 
			$\size{\tderivthree_{i}} \leq \size{\tderiv_{i}} + \size{\tderivtwo_{i}}$.
			Since $\typctx \uplus \typctxtwo = \bigmplus_{i=1}^n (\typctx_i \uplus \typctxtwo_i)$, we can build 
			$\tderivthree$ as
			\begin{equation*}
			\begin{prooftree}[separation=1em]
			\hypo{}
			\ellipsis{$\tderivthree_i$}{\typctx_i \uplus \typctxtwo_i, \vartwo \hastype \mtypethree_i \vdash \tmtwo 
				\isub{\var}{\tval} \hastype \mtype_i}
			\infer1[\footnotesize$\ruleFun$]{\tyjp{}{\la{\vartwo}{(\tmtwo \isub{\var}{\tval})}}{\typctx_{i} \mplus 
					\typctxtwo_{i}}{\ty{\mtypethree_{i}\!}{\!\mtype_{i}}}}
			\delims{\left(}{\right)_{1\leq i \leq n}}
			\hypo{}
			\infer2[\footnotesize$\ruleManyVal$]{\typctx \uplus \typctxtwo \vdash \la{\vartwo}{(\tmtwo \isub{\var}{\tval})} 
				\hastype \mtype}
			\end{prooftree}
			\end{equation*}
			noting that $\sizem{\tderivthree} = \sum_{i=1}^n (\sizem{\tderivthree_{i}} + 1) = \sum_{i=1}^n 
			(\sizem{\tderiv_{i}} + \sizem{\tderivtwo_{i}} + 1) = \sum_{i=1}^{n} (\sizem{\tderiv_{i}} + 1) + \sum_{i=1}^{n} 
			\sizem{\tderivtwo_{i}} = \sizem{\tderiv} + \sizem{\tderivtwo} $ and that $\size{\tderivthree} = \sum_{i=1}^{n} 
			(\size{\tderivthree_{i}} + 1) \leq \sum_{i=1}^{n} (\size{\tderiv_{i}} + \size{\tderivtwo_{i}} + 1) = \sum_{i=1}^{n} 
			(\size{\tderiv_{i}} + 1) + \sum_{i=1}^{n} \size{\tderivtwo_{i}} = \size{\tderiv} + \size{\tderivtwo}$.
		\end{itemize}
		
		\item \emph{Explicit substitution}, \ie $\tm = \tmtwo \esub{\vartwo}{\tmthree}$. 
		We can suppose without loss of generality that $\vartwo \notin \fv{\tval} \cup \{\var \}$, hence $\tm 
		\isub{\var}{\tval} = \tmtwo\isub{\var}{\tval} \esub{\vartwo}{\tmthree\isub{\var}\tval}$ and necessarily
		\begin{equation*}
		\tderiv = 
		\begin{prooftree}
		\hypo{}
		\ellipsis{$\tderiv_{1}$}{\typctx_1 , \var \hastype \mtypetwo_1 , \vartwo \hastype \mtypethree \vdash \tmtwo 
			\hastype \mtype}
		\hypo{}
		\ellipsis{$\tderiv_{2}$}{\typctx_2, \var \hastype \mtypetwo_2 \vdash \tmthree \hastype \mtypethree}
		\infer2[\footnotesize$\Es$]{\typctx, \var \hastype \mtypetwo \vdash \tmtwo \esub{\vartwo}{\tmthree} \hastype \mtype}
		\end{prooftree}
		\end{equation*}
		with $\sizem{\tderiv} = \sizem{\tderiv_{1}} + \sizem{\tderiv_{2}}$, $\size{\tderiv} = \size{\tderiv_{1}} + 
		\size{\tderiv_{2}} + 1$, $\typctx = \typctx_1 \uplus \typctx_2$ and $\mtypetwo = \mtypetwo_2 \uplus \mtypetwo_2$. 
		According to \reflemmap{typing-value-splitting}{two} applied to $\tderivtwo$ and to the decomposition $\mtypetwo = 
		\mtypetwo_1 \uplus \mtypetwo_2$, there are contexts $\typctxtwo_1, \typctxtwo_2$ and derivations 
		$\namedtyjp{\tderivtwo_{1}}{}{\tval}{\typctxtwo_{1}}{\mtypetwo_{1}}$ and 
		$\namedtyjp{\tderivtwo_{2}}{}{\tval}{\typctxtwo_{2}}{\mtypetwo_{2}}$ such that $\typctxtwo = \typctxtwo_{1} \mplus 
		\typctxtwo_{2}$, $\sizem{\tderivtwo} = \sizem{\tderivtwo_{1}} + \sizem{\tderivtwo_{2}}$ and $\size{\tderivtwo} = 
		\size{\tderivtwo_{1}} + \size{\tderivtwo_{2}}$.
		
		By \ih, there are derivations $\namedtyjp{\tderivthree_{1}}{}{\tmtwo\isub{\var}{\tval}}{\typctx_{1} \mplus 
			\typctxtwo_{1}, \vartwo \hastype \mtypethree}{\mtype}$ and $\namedtyjp{\tderivthree_{2}}{}{\tmthree 
			\isub{\var}{\tval}}{\typctx_{2} \mplus \typctxtwo_{2}}{\mtypethree}$ such that $\sizem{\tderivthree_{i}} = 
		\sizem{\tderiv_{i}} + \sizem{\tderivtwo_{i}}$ and $\size{\tderivthree_{i}} \leq \size{\tderiv_{i}} + 
		\size{\tderivtwo_{i}}$ for all $i \in \{1,2\}$.
		Since $\typctx \uplus \typctxtwo = \typctx_1 \uplus \typctxtwo_1 \uplus \typctx_2 \uplus \typctxtwo_2$, we can 
		build the derivation
		\begin{equation*}
		\tderivthree = 
		\begin{prooftree}
		\hypo{}
		\ellipsis{$\tderivthree_1$}{\tyjp{}{\tmtwo\isub{\var}{\tval}}{\typctx_{1} \mplus \typctxtwo_{1}, \vartwo \hastype 
				\mtypethree}{\mtype}}
		\hypo{}
		\ellipsis{$\tderivthree_2$}{\typctx_2 \mplus \typctxtwo_2 \vdash \tmthree\isub{\var}{\tval} \hastype \mtypethree}
		\infer2[\footnotesize$\Es$]{\typctx \uplus \typctxtwo \vdash \tmtwo \isub{\var}{\tval} \esub{\vartwo} {\tmthree 
				\isub{\var}{\tval}} \hastype \mtype}
		\end{prooftree}
		\end{equation*}
		verifying that $\sizem{\tderivthree} = \sizem{\tderivthree_{1}} + \sizem{\tderivthree_{2}} = \sizem{\tderiv_{1}} + 
		\sizem{\tderivtwo_{1}} + \sizem{\tderiv_{2}} + \sizem{\tderivtwo_{2}} = \sizem{\tderiv} + \sizem{\tderivtwo}$ and 
		$\size{\tderivthree} = 1 + \size{\tderivthree_{1}} + \size{\tderivthree_{2}} \leq 1 + (\size{\tderiv_{1}} + 
		\size{\tderivtwo_{1}}) + (\size{\tderiv_{2}} + \size{\tderivtwo_{2}}) = \size{\tderiv} + \size{\tderivtwo}$.
		\qedhere
	\end{itemize}	
\end{proof}

\begin{lemma}[Typing of values: merging]
	\label{l:typing-value-complete} 
	Let $\tval$ be a theoretical value.
	\begin{enumerate}
		\item \label{p:typing-value-complete-empty} There is a derivation $\namedtyjp{\tderiv}{}{\tval}{}{\zero}$.
		
		\item \label{p:typing-value-complete-merge} For every derivations 
		$\namedtyjp{\tderiv_{1}}{}{\tval}{\typctx_{1}}{\mtype_{1}}$ and 
		$\namedtyjp{\tderiv_{2}}{}{\tval}{\typctx_{2}}{\mtype_{2}}$, there exists a  derivation 
		$\namedtyjp{\tderiv}{}{\tval}{\typctx_{1} \mplus \typctx_2}{\mtype_{1} \mplus \mtype_{2}}$.
	\end{enumerate}
\end{lemma}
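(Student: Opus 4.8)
The plan is to exploit the fact that $\ruleMany$ is the \emph{only} rule of \Cref{fig:cbvtypes} whose conclusion assigns a multi type to a theoretical value: indeed $\ruleAx$ and $\ruleFun$ both conclude with a \emph{linear} type, while $\ruleAp$ and $\ruleES$ type applications and explicit substitutions, not theoretical values. Hence any derivation $\namedtyjp{\tderiv}{}{\tval}{\typctx}{\mtype}$ necessarily ends with an instance of $\ruleMany$ whose premises form a family $(\typctx_i \vdash \tval \hastype \ltype_i)_{\iI}$ of linear-type derivations of $\tval$ (obtained via $\ruleAx$ if $\tval$ is a variable, via $\ruleFun$ if $\tval$ is an abstraction), with $\typctx = \biguplus_{\iI} \typctx_i$ and $\mtype = \mult{\ltype_i}_{\iI}$. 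Both points then follow by manipulating such premise families.

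For Point~\ref{p:typing-value-complete-empty}, I would instantiate $\ruleMany$ with the empty family of premises ($I = \emptyset$): the conclusion is $\tyjp{}{\tval}{}{\emptymset}$, because the empty sum of type contexts is the empty type context and the empty multiset of linear types is $\emptymset$. This is the expected generalization to arbitrary theoretical values of the already-noted fact that every abstraction is typable with $\emptymset$ through an empty instance of $\ruleMany$.

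For Point~\ref{p:typing-value-complete-merge}, by the observation above both $\tderiv_1$ and $\tderiv_2$ end with $\ruleMany$, say with premise families $(\typctx_{1,i} \vdash \tval \hastype \ltype_{1,i})_{i \in I_1}$ and $(\typctx_{2,j} \vdash \tval \hastype \ltype_{2,j})_{j \in I_2}$, so that $\biguplus_{i \in I_1} \typctx_{1,i} = \typctx_1$, $\mult{\ltype_{1,i}}_{i \in I_1} = \mtype_1$, and symmetrically for the index $2$. I would then take the disjoint union $I \defeq I_1 \sqcup I_2$ of the index sets and apply $\ruleMany$ once to the concatenation of the two premise families; the resulting derivation $\tderiv$ has conclusion $\biguplus_{\iI} \typctx_i \vdash \tval \hastype \mult{\ltype_i}_{\iI}$, which is exactly $\typctx_1 \mplus \typctx_2 \vdash \tval \hastype \mtype_1 \mplus \mtype_2$.

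I do not expect any genuine obstacle here: the single point requiring (minimal) care is the inversion step, namely that a theoretical value can acquire a multi type \emph{only} through $\ruleMany$, which is immediate by inspection of the five rules of \Cref{fig:cbvtypes}. If desired, the very same construction also yields $\size{\tderiv} = \size{\tderiv_1} + \size{\tderiv_2}$, since $\ruleMany$ does not count towards $\size{\cdot}$, but this refinement is not part of the statement and is not needed.
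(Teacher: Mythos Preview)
Your proposal is correct and matches the paper's own proof essentially step for step: both points are handled by inverting on $\ruleMany$ (implicitly in the paper, explicitly in your argument) and then instantiating $\ruleMany$ with the empty family for Point~\ref{p:typing-value-complete-empty} and with the concatenation of the two premise families for Point~\ref{p:typing-value-complete-merge}. Your explicit justification of the inversion step and your use of a disjoint union of index sets are, if anything, slightly more careful than the paper's presentation.
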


\begin{proof}
	\begin{enumerate}
		\item 
		Let $\tderiv$ be the following type derivation (applying the rule $\ruleMany$ with $n = 0$):
		\begin{equation*}
		\tderiv =
		\begin{prooftree}
		\hypo{}
		\infer1[\footnotesize$\ruleMany$]{\tyjp{}{\tval}{}{\emptytype}}
		\end{prooftree}\ .
		\end{equation*}
		
		\item Let 
		\begin{equation*}
		\tderiv_{1} =
		\begin{prooftree}
		\hypo{}
		\ellipsis{$\tderiv_{i}$}{\tyjp{}{\tval}{\typctx_{i}}{\ltype_{i}}}
		\delims{ \left( }{ \right)_{\iI} }
		\infer1[\footnotesize$\ruleMany$]{\tyjp{}{\tval}{\bigmplus_{\iI} \typctx_{i}}{\bigmplus_{\iI}\mult{\ltype_{i}}}}
		\end{prooftree}
		\end{equation*}
		with $\typctx_{1} = \bigmplus_{\iI} \typctx_{i}$ and $\mtype_{1} = \bigmplus_{\iI}\mult{\ltype_{i}}$, and let
		\begin{equation*}
		\tderiv_{2} =
		\begin{prooftree}
		\hypo{}
		\ellipsis{$\tderiv_{j}$}{\tyjp{}{\tval}{\typctx_{j}}{\ltype_{j}}}
		\delims{ \left( }{ \right)_{\jJ} }
		\infer1[\footnotesize$\ruleMany$]{\tyjp{}{\tval}{\bigmplus_{\jJ} \typctx_{j}}{\bigmplus_{\jJ}\mult{\ltype_{j}}}}
		\end{prooftree}
		\end{equation*}
		with $\typctx_{2} = \bigmplus_{\jJ} \typctx_{j}$ and $\mtype_{2} = \bigmplus_{\jJ}\mult{\ltype_{j}}$.
		
		We can derive $\tderiv$ by setting $K = I \cup J$ and then
		\begin{equation*}
		\tderiv =
		\begin{prooftree}
		\hypo{}
		\ellipsis{$\tderiv_{k}$}{\tyjp{}{\tval}{\typctx_{k}}{\ltype_{k}}}
		\delims{ \left( }{ \right)_{\kK} }
		\infer1[\footnotesize$\ruleMany$]{\tyjp{}{\tval}{\bigmplus_{\kK} \typctx_{k}}{\bigmplus_{\kK}\mult{\ltype_{k}}}}
		\end{prooftree}
		\end{equation*}
		trivially verifying the statement.
		\qedhere
	\end{enumerate}
\end{proof}

\begin{lemma}[Anti-substitution]
	\label{l:anti-substitution}
	Let $\tm$ be a term, $\val$ be a value, and 
	$\namedtyjp{\tderiv}{}{\tm\isub{\var}{\val}}{\typctx}{\mtype}$
	be a derivation. 
	Then there are two  derivations $\namedtyjp{\tderivtwo}{}{\tm}{\typctxtwo, \var \hastype 
		\mtypetwo}{\mtype}$ 
	and $\namedtyjp{\tderivthree}{}{\val}{\typctxthree}{\mtypetwo}$ such that $\typctx = \typctxtwo \mplus \typctxthree$.
\end{lemma}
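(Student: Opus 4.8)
The plan is to prove the statement by induction on the term $\tm$, reading off the shape of the derivation $\tderiv$ of $\tm\isub{\var}{\val}$ from the shape of $\tm$; this is the inverse direction of the Substitution lemma (\reflemma{substitution}), so the case analysis follows the same pattern, the only simplification being that no quantitative bound on derivation size is claimed here, hence there is nothing to check about sizes. There are two base cases. If $\tm = \var$, then $\tm\isub{\var}{\val} = \val$, so $\tderiv$ already derives $\typctx \vdash \val \hastype \mtype$: I set $\mtypetwo \defeq \mtype$, $\typctxthree \defeq \typctx$, $\tderivthree \defeq \tderiv$, and let $\tderivtwo$ be the derivation of $\var \hastype \mtype \vdash \var \hastype \mtype$ built by a $\ruleMany$ rule over one axiom per linear type in $\mtype$ (this works also for $\mtype = \emptymset$, via $\ruleMany$ with no premises); then $\typctx = \emptyset \mplus \typctx$, taking $\var \hastype \emptymset$ in $\typctxtwo$. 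If $\tm = \vartwo \neq \var$, then $\tm\isub{\var}{\val} = \vartwo$ is unchanged: I set $\mtypetwo \defeq \emptymset$, $\typctxtwo \defeq \typctx$, $\tderivtwo \defeq \tderiv$, $\typctxthree$ empty, and $\tderivthree$ the empty-premises $\ruleMany$ derivation $\vdash \val \hastype \emptymset$ given by \reflemmap{typing-value-complete}{empty}.

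In the application case $\tm = \tmtwo\tmthree$ the derivation $\tderiv$ ends with a $\ruleAp$ rule whose premises type $\tmtwo\isub{\var}{\val}$ and $\tmthree\isub{\var}{\val}$. I apply the \ih to each premise, obtaining derivations of $\tmtwo$ and of $\tmthree$ in which $\var$ is assigned multi types $\mtype_1$ and $\mtype_2$, together with derivations of $\val$ of type $\mtype_1$ and $\mtype_2$ respectively. I then rebuild $\tderivtwo$ with a single $\ruleAp$ rule --- which itself sums the two declarations of $\var$ into $\var \hastype \mtype_1 \mplus \mtype_2$ --- and merge the two derivations of $\val$ into one of type $\mtypetwo \defeq \mtype_1 \mplus \mtype_2$ via \reflemmap{typing-value-complete}{merge}; the equality $\typctxtwo \mplus \typctxthree = \typctx$ then follows from commutativity and associativity of $\mplus$. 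The explicit-substitution case $\tm = \tmtwo\esub{\vartwo}{\tmthree}$ is handled identically with $\ruleES$ in place of $\ruleAp$, after $\alpha$-renaming so that $\vartwo \notin \fv{\val} \cup \{\var\}$: by \refrmk{free-variables} the bound variable $\vartwo$ then does not occur in the context of the reconstructed derivation of $\val$, so when the context of the first premise is split the declaration of $\vartwo$ stays entirely on the $\tmtwo$-side, as needed to reapply $\ruleES$.

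The abstraction case $\tm = \la\vartwo\tmtwo$ will be the delicate one and the main obstacle, since $\tderiv$ ends with a $\ruleMany$ rule over $n$ premises, each itself a $\ruleFun$ rule whose premise types $\tmtwo\isub{\var}{\val}$ in a context that also declares $\vartwo$. Assume, as usual, $\vartwo \notin \fv{\val} \cup \{\var\}$. When $n = 0$ one has $\mtype = \emptymset$ and $\typctx$ empty, and I just take $\mtypetwo \defeq \emptymset$ with the empty derivations of $\la\vartwo\tmtwo$ and of $\val$. When $n \geq 1$, for each $i$ I apply the \ih to the premise of the $i$-th $\ruleFun$ rule; \refrmk{free-variables} together with $\vartwo \notin \fv{\val}$ guarantees that $\vartwo$ appears only on the $\tmtwo$-side of the resulting context splitting, so I can reapply $\ruleFun$ (re-binding $\vartwo$) and then reassemble the $n$ results with one $\ruleMany$ rule, which sums into $\mtypetwo \defeq \biguplus_i \mtype_i$ the $n$ multi types $\mtype_i$ that the \ih assigns to $\var$; the $n$ derivations of $\val$ returned by the \ih are combined into a single derivation of type $\mtypetwo$ by iterating \reflemmap{typing-value-complete}{merge}. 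Throughout, the only genuine subtleties are propagating freshness of the bound variables so that the reconstructed contexts for $\val$ never mention them, and tracking how $\mplus$ distributes over the context splittings; both are routine, so the whole argument is essentially the mirror image of the proof of \reflemma{substitution}.
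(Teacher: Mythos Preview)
Your proposal is correct and follows essentially the same approach as the paper's proof: induction on $\tm$, with the same case split (the two variable base cases, application, explicit substitution, and abstraction with its $n=0$/$n\geq 1$ subcases), the same use of \reflemmap{typing-value-complete}{empty} and \reflemmap{typing-value-complete}{merge} to (re)build the derivation for $\val$, and the same reliance on \refrmk{free-variables} to keep bound variables out of the $\val$-side contexts. The only difference is presentational compression; nothing of substance is missing.
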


\begin{proof}
	By induction on the term $\tm$.
	Cases:
	\begin{itemize}
		\item \emph{Variable}, then are two sub-cases (let $\mtype = \mset{\ltype_1, \dots, \ltype_n}$ for some $n \in 
		\nat$):
		\begin{enumerate}
			\item $\tm = \var$, then $\tm \isub{\var}{\tval} = \tval$. 
			Let $\typctxthree = \typctx$, let $\typctxtwo$ be the empty context (\ie $\dom{\typctxtwo} = \emptyset$), let 
			$\mtypetwo = \mtype$ and let $\tderivtwo$ be the derivation 
			\begin{equation*}
			\tderivtwo = 
			\begin{prooftree}
			\infer0[\footnotesize$\Ax$]{\tyjp{}{\var}{\var \hastype \mset{\ltype_1}}{\ltype_1}}
			\hypo{\overset{n \in \nat}{\ldots}}
			\infer0[\footnotesize$\Ax$]{\tyjp{}{\var}{\var \hastype \mset{\ltype_n}}{\ltype_n}}
			\infer3[\footnotesize$\ruleManyVar$]{\tyjp{}{\var}{\var \hastype 
					\mset{\ltype_1,\dots,\ltype_n}}{\mset{\ltype_1,\dots,\ltype_n}}}
			\end{prooftree}
			\end{equation*}
			Thus, $\concl{\tderivtwo}{\typctxtwo, \var \hastype \mtypetwo}{\tm}{\mtype}$.
			Let $\tderivthree = \tderiv$: so, $\namedtyjp{\tderivthree}{}{\tval}{\typctxthree}{\mtypetwo}$ and $\typctxthree 
			\mplus \typctxtwo = \typctx$.
			
			\item $\tm = \varthree \neq \var$, then $\tm \isub{\var}{\tval} = \varthree$ and 
			the derivation $\tderiv$ has necessarily the form (for some $n~\in~\nat$)
			\begin{equation*}
			\tderiv = 
			\begin{prooftree}
			\infer0[\footnotesize$\Ax$]{\tyjp{}{\varthree}{\varthree \hastype \mset{\ltype_1}}{\ltype_1}}
			\hypo{\overset{n \in \nat}{\ldots}}
			\infer0[\footnotesize$\Ax$]{\tyjp{}{\varthree}{\varthree \hastype \mset{\ltype_n}}{\ltype_n}}
			\infer3[\footnotesize$\ruleManyVar$]{\tyjp{}{\varthree}{\varthree \hastype 
					\mset{\ltype_1,\dots,\ltype_n}}{\mset{\ltype_1,\dots,\ltype_n}}}
			\end{prooftree}
			\end{equation*}
			where $\mtype = \mset{\ltype_1, \dots, \ltype_n}$ and $\typctx = \varthree \hastype \mtype$ (while $\typctx(\var) 
			= \emptymset$).
			Let $\typctxthree$ be the empty context (\ie $\dom{\typctxthree} = 0$) and  $\mtypetwo = \emptymset$.
			By \reflemmap{typing-value-complete}{empty}, there is a derivation 
			$\namedtyjp{\tderivthree}{}{\tval}{}{\emptytype}$ (and hence 
			$\namedtyjp{\tderivthree}{}{\tval}{\typctxthree}{\mtypetwo}$).  
			Let $\typctxtwo = \typctx$  and $\tderivtwo = \tderiv$:
			therefore, $\typctxthree \mplus \typctxtwo = \typctx$
			and $\namedtyjp{\tderivtwo}{}{\tm }{\typctxtwo, \var \hastype \mtypetwo}{\mtype}$.
		\end{enumerate}
		
		\item \emph{Application}, \ie $\tm = \tm_1\tm_2$. 
		Then $\tm \isub{\var}{\tval} = \tm_1 \isub{\var}{\tval} \tm_2 \isub{\var}{\tval}$ and necessarily
		\begin{equation*}
		\tderiv = 
		\begin{prooftree}
		\hypo{}
		\ellipsis{$\tderiv_{1}$}{\typctx_1 \vdash \tm_1\isub{\var}{\tval} \hastype \mset{\larrow{\mtypethree}{\mtype}}}
		\hypo{}
		\ellipsis{$\tderiv_{2}$}{\typctx_2 \vdash \tm_2\isub{\var}{\tval} \hastype \mtypethree}
		\infer2[\footnotesize$\ruleAp$]{\typctx \vdash \tm_1\isub{\var}{\tval} \tm_2\isub{\var}{\tval} \hastype \mtype}
		\end{prooftree}
		\end{equation*}
		with  $\typctx = \typctx_1 \mplus \typctx_2$. 		
		By \ih, for all $i \in \{1,2\}$, there are derivations $\namedtyjp{\tderivtwo_i}{}{\tm_i}{\typctxtwo_i, \var 
			\hastype \mtypetwo_i}{\mult{\larrow{\mtypethree}{\mtype}}}$ and $\namedtyjp{\tderivthree_i}{}{\tval 
		}{\typctxthree_i}{\mtypetwo_i}$ such that $\typctx_i = \typctxtwo_i \mplus \typctxthree_i$.
		According to \reflemmap{typing-value-complete}{merge} applied to $\tderivthree_1$ and $\tderivthree_2$, there is a 
		derivation $\namedtyjp{\tderivthree}{}{\tval}{\typctxthree}{\mtypetwo}$ where $\mtypetwo = \mtypetwo_1 \mplus 
		\mtypetwo_2$ and $\typctxthree = \typctxthree_1 \mplus \typctxthree_2$.
		We can build the derivation (where $\typctxtwo = \typctxtwo_1 \mplus \typctxtwo_2$)
		\begin{equation*}
		\tderivtwo = 
		\begin{prooftree}
		\hypo{}
		\ellipsis{$\tderivtwo_1$}{\typctxtwo_1, \var \hastype \mtypetwo_1 \vdash \tm_1 \hastype 
			\mset{\larrow{\mtypethree}{\mtype}}}
		\hypo{}
		\ellipsis{$\tderivtwo_2$}{\typctxtwo_2, \var \hastype \mtypetwo_2 \vdash \tm_2 \hastype \mtypethree}
		\infer2[\footnotesize$\ruleAp$]{\typctxtwo, \var \hastype \mtypetwo \vdash \tm \hastype \mtype}
		\end{prooftree}
		\end{equation*}

		\item \emph{Abstraction}, \ie $\tm = \la{\vartwo}{\tmtwo}$.
		We can suppose without loss of generality that $\vartwo \notin \fv{\tval} \cup \{\var \}$, hence $\tm 
		\isub{\var}{\tval} = \la{\vartwo}{\tmtwo\isub{\var}\tval}$ and necessarily, for some $n \in \nat$,
		\begin{equation*}
		\tderiv =
		\begin{prooftree}[separation=1em]
		\hypo{}
		\ellipsis{$\tderiv_i$}{\typctx_i, \vartwo \hastype \mtypethree_i \vdash \tmtwo\isub{\var}{\tval} \hastype \mtype_i}
		
		\infer1[\footnotesize$\ruleFun$]{\tyjp{}{\la{\vartwo}{\tmtwo\isub{\var}{\tval}}}{\typctx_{i}}{\ty{\mtypethree_{i}\!}{
					\!\mtype_{1}}}}
		\delims{\left(}{\right)_{1 \leq i \leq n}}
		\hypo{}
		\infer2[\footnotesize$\ruleManyVal$]{\tyjp{}{\la{\vartwo}{\tmtwo\isub{\var}{\tval}}}{\typctx}{\mtype}}
		\end{prooftree}
		\end{equation*}
		with $\typctx = \bigmplus_{i=1}^{n} \typctx_i$, $\mtype = \bigmplus_{i=1}^{n} 
		\mset{\larrow{\mtypethree_i\!}{\!\mtype_i}}$.
		
		There are two sub-cases:
		\begin{itemize}
			\item \emph{Empty multi type}: If $n = 0$,  then $\mtype = \emptymset$ and $\dom{\typctx} = \emptyset$. 
			We can  build the derivation 
			\begin{equation*}
			\tderivtwo = 
			\begin{prooftree}
			\infer0[\footnotesize$\ruleManyVal$]{\tyjp{}{\la{\vartwo}\tmtwo}{}{\emptymset}}
			\end{prooftree}
			\end{equation*}
			Let $\mtypetwo = \emptymset$ and $\typctxtwo$ be the empty context (\ie $\dom{\typctxtwo} = \emptyset$): then 
			$\concl{\tderivtwo}{\typctxtwo, \var \hastype \mtypetwo}{\tm}{\mtype}$.			
			According to \reflemmap{typing-value-complete}{empty}, there is a derivation 
			$\concl{\tderivthree}{}{\tval}{\emptymset}$. 
			Let $\typctxthree$ be the empty context (\ie $\dom{\typctxthree} = \emptyset$): so, 
			$\concl{\tderivthree}{\typctxthree}{\tval}{\mtypetwo}$ with $\typctx = \typctxtwo \mplus \typctxthree$.
			
			\item\emph{Non-empty multi type}: If $n > 0$ then by \ih, for all $1 \leq i \leq n$, there are derivations 
			$\concl{\tderivtwo_i}{\typctxtwo_i, \vartwo \hastype \mtypethree_i, \var \hastype \mtypetwo_i}{\tmtwo}{\mtype_i}$ and 
			$\concl{\tderivthree_i}{\typctxthree_i}{\tval}{\mtypetwo_i}$ such that $\typctx_i = \typctxtwo_i \mplus \typctxthree_i$.
			We can build the derivation
			\begin{equation*}
			\tderivtwo = 
			\begin{prooftree}[separation=1em]
			\hypo{}
			\ellipsis{$\tderivtwo_i$}{\typctxtwo_i ; \vartwo \hastype \mtypethree_i ; \var \hastype \mtypetwo_i \vdash \tmtwo 
				\hastype \mtype_i}
			\infer1[\footnotesize$\ruleFun$]{\tyjp{}{\la{\vartwo}{\tmtwo}}{\typctxtwo_{i} ; \var \hastype 
					\mtypetwo_{i}}{\ty{\mtypethree_{i}\!}{\!\mtype_{i}}}}
			\delims{ \left( }{ \right)_{1 \leq i \leq n} }
			\infer1[\footnotesize$\ruleManyVal$]{\tyjp{}{\la{\vartwo}{\tmtwo}}{\bigmplus_{i=1}^n \typctxtwo_i ; \var \hastype 
					\mplus_{i=1}^n \mtypetwo_i}{\bigmplus_{i=1}^{n} \mult{\ty{\mtypethree_{i}\!}{\!\mtype_{i}}}}}
			\end{prooftree}
			\end{equation*}
			By repeatedly applying \reflemmap{typing-value-complete}{merge}, there is a derivation 
			$\concl{\tderivthree}{\typctxthree}{\tval}{\mtypetwo}$ with $\typctxthree = \bigmplus_{i=1}^n \typctxthree_i$.
			So, $\typctx = \bigmplus_{i=1}^n \typctx_i = \bigmplus_{i=1}^n (\typctxtwo_i \mplus \typctxthree_i) = \typctxtwo 
			\mplus \typctxthree$.
		\end{itemize}
		
		\item \emph{Explicit substitution}, \ie $\tm = \tmtwo \esub{\vartwo}{\tmthree}$. 
		We can suppose without loss of generality that $\vartwo \notin \fv{\tval} \cup \{\var \}$, hence $\tm 
		\isub{\var}{\tval} = \tmtwo\isub{\var}{\tval} \esub{\vartwo}{\tmthree\isub{\var}\tval}$ and necessarily
		\begin{equation*}
		\tderiv = 
		\begin{prooftree}
		\hypo{}
		\ellipsis{$\tderiv_1$}{\tyjp{}{\tmtwo\isub{\var}{\tval}}{\typctx_{1}, \vartwo \hastype \mtypethree}{\mtype}}
		\hypo{}
		\ellipsis{$\tderiv_2$}{\typctx_2 \vdash \tmthree\isub{\var}{\tval} \hastype \mtypethree}
		\infer2[\footnotesize$\Es$]{\typctx \vdash \tmtwo \isub{\var}{\tval} \esub{\vartwo} {\tmthree \isub{\var}{\tval}} 
			\hastype \mtype}
		\end{prooftree}
		\end{equation*}
		By \ih applied to $\tderiv_1$ and \refrmk{free-variables}, there are derivations 
		$\concl{\tderivtwo_1}{\typctxtwo_1, \vartwo \hastype \mtypethree, \var \hastype \mtypetwo_1}{\tmtwo}{\mtype}$ and
		$\concl{\tderivthree_1}{\typctxthree_1}{\tval}{\mtypetwo_1}$ with $\typctx_1 = \typctxtwo_1 \mplus \typctxthree_1$.
		By \ih applied to $\tderiv_2$ , there are derivations $\concl{\tderivtwo_2}{\typctxtwo_2, \var \hastype 
			\mtypetwo_2}{\tmthree}{\mtype}$ and
		$\concl{\tderivthree_2}{\typctxthree_2}{\tval}{\mtypetwo_2}$ with $\typctx_2 = \typctxtwo_2 \mplus \typctxthree_2$.
		According to \reflemmap{typing-value-complete}{merge}, there is a derivation 
		$\namedtyjp{\tderivthree}{}{\tval}{\typctxthree}{\mtypetwo}$ with $\typctxthree = \typctxthree_{1} \mplus 
		\typctxthree_{2}$ and $\mtypetwo = \mtypetwo_1 \mplus \mtypetwo_2$.
		We can build the derivation (where $\typctxtwo = \typctxtwo_1 \mplus \typctxtwo_2$)
		\begin{equation*}
		\tderivtwo = 
		\begin{prooftree}
		\hypo{}
		\ellipsis{$\tderivtwo_{1}$}{\typctxtwo_1 , \var \hastype \mtypetwo_1 , \vartwo \hastype \mtypethree \vdash \tmtwo 
			\hastype \mtype}
		\hypo{}
		\ellipsis{$\tderivtwo_{2}$}{\typctxtwo_2, \var \hastype \mtypetwo_2 \vdash \tmthree \hastype \mtypethree}
		\infer2[\footnotesize$\Es$]{\typctxtwo, \var \hastype \mtypetwo \vdash \tmtwo \esub{\vartwo}{\tmthree} \hastype 
			\mtype}
		\end{prooftree}
		\end{equation*}
		verifying that $\typctx = \typctx_1 \mplus \typctx_2 = \typctxtwo_1 \mplus \typctxthree_1 \mplus \typctxtwo_2 
		\mplus \typctxthree_2 = \typctxtwo \mplus \typctxthree$.
		\qedhere
	\end{itemize}	
\end{proof}

\begin{proposition}[Qualitative subjects]
	\label{propappendix:qual-subject}
	\NoteState{prop:qual-subject}
	Let $\tm \tovsub \tm'$.
	\begin{enumerate}
		\item \label{pappendix:qual-subject-reduction}
		\emph{Reduction}: 	if $\namedtyjp{\tderiv}{}{\tm}{\typctx}{\mtype}$ then there is $\namedtyjp{\tderiv'}{}{\tm'}{\typctx}{\mtype}$ such that $\size{\tderiv} \geq \size{\tderiv'}$.
		
		\item \label{pappendix:qual-subject-expansion}
		\emph{Expansion}: if $\namedtyjp{\tderiv'}{}{\tm'}{\typctx}{\mtype}$ then there is a derivation $\namedtyjp{\tderiv}{}{\tm}{\typctx}{\mtype}$.
	\end{enumerate}
\end{proposition}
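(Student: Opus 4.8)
The plan is to use the standard scheme for subject reduction and expansion in an explicit-substitution calculus: prove both statements first for the two root rules $\rtom$ and $\rtoe$, and then lift them to the whole reduction $\tovsub$ by induction on the general context $\ctx$ such that $\tm = \ctxp{\tmthree}$, $\tm' = \ctxp{\tmthree'}$ and $\tmthree \rootRew{a} \tmthree'$ with $a \in \set{\msym,\esym}$. The two key tools are already available: the Substitution Lemma (\Cref{l:substitution}) drives the reduction direction of the exponential root rule, and the Anti-substitution Lemma (\Cref{l:anti-substitution}) drives its expansion direction; both in turn rest on the splitting and merging lemmas for the typing of values (\Cref{l:typing-value-splitting} and \Cref{l:typing-value-complete}), which make a $\ruleMany$-typing of a value distribute over and reassemble along a multiset sum.

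For the root cases I would first observe, by an easy induction on the substitution context $\subctx$, that a derivation of a term $\subctxp{\tmtwo}$ factors as a stack of $\ruleES$-rules (one per explicit substitution in $\subctx$) above a derivation of $\tmtwo$, and that those $\ruleES$-rules merely thread the type context and the conclusion type; hence each root transformation on $\subctxp{\tmtwo}$ reduces to the corresponding transformation on $\tmtwo$ recomposed with the unchanged stack. For a multiplicative root step $\subctxp{\la\var\tmthree}\tmfour \rtom \subctxp{\tmthree\esub\var\tmfour}$, the $\ruleAp$ at the root of the source derivation forces $\subctxp{\la\var\tmthree}$ to receive a singleton multi type $\mset{\larrow{\mtypetwo}{\mtype}}$, so below the $\subctx$-stack the value $\la\var\tmthree$ is typed by a $\ruleMany$ with exactly one premise, itself a $\ruleFun$ over a derivation of $\tmthree$ of type $\mtype$ in a context extended by $\var\hastype\mtypetwo$; for reduction I recombine that derivation of $\tmthree$ with the derivation of $\tmfour$ of type $\mtypetwo$ (the right premise of the $\ruleAp$) into a fresh $\ruleES$-rule and re-wrap the stack, trading one $\ruleAp$ and one $\ruleFun$ for one $\ruleES$, so the size strictly decreases; for expansion the same move is read backwards, with no size obligation. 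For an exponential root step $\tmthree\esub\var{\subctxp\val} \rtoe \subctxp{\tmthree\isub\var\val}$, the root $\ruleES$ of the source derivation supplies a derivation of $\tmthree$ of type $\mtype$ in a context $\typctx_1, \var\hastype\mtypetwo$ and a derivation of $\subctxp\val$ of type $\mtypetwo$, whose stack peels off to leave a derivation of the value $\val$ of type $\mtypetwo$; applying \Cref{l:substitution} to these two derivations yields one for $\tmthree\isub\var\val$ of type $\mtype$ and of no larger size, which re-wrapped by the stack is the target (here too the size strictly drops, since the $\ruleES$ and the $\ruleMany$ of $\val$ get absorbed). For expansion I instead apply \Cref{l:anti-substitution} to the given derivation of $\subctxp{\tmthree\isub\var\val}$ (after peeling its stack), obtaining a derivation of $\tmthree$ with an extra $\var\hastype\mtypetwo$ and a derivation of $\val$ of type $\mtypetwo$ whose type contexts sum to the original one, and recombine them into the $\ruleES$-headed source derivation. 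In all cases the conclusion's type context and multi type are preserved, because \Cref{l:substitution} and \Cref{l:anti-substitution} preserve them up to the multiset sum $\typctx \uplus \typctxtwo$ that the $\ruleES$-rule itself produces.

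The induction on $\ctx$ is routine except in one case. When the outermost constructor of $\ctx$ is an application or an explicit substitution, the root rule of the derivation of $\ctxp{\tmthree}$ is the matching $\ruleAp$ or $\ruleES$; one applies the induction hypothesis to the single premise typing the sub-context that contains the redex, keeps the other premise, and sizes add, so the reduction bound follows (and the expansion direction reassembles symmetrically). The delicate case is $\ctx = \la\var{\ctx'}$, i.e. the redex is fired under an abstraction and $\tm = \la\var\tmtwo \tovsub \la\var{\tmtwo'} = \tm'$ with $\tmtwo \tovsub \tmtwo'$: a derivation of $\la\var\tmtwo$ ends in a $\ruleMany$-rule whose premises are $\ruleFun$-rules over derivations of $\tmtwo$, so one applies the induction hypothesis to each of these derivations and glues the results back with $\ruleFun$ and $\ruleMany$. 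The subtlety here, and the reason the reduction statement claims the non-strict $\size{\tderiv} \geq \size{\tderiv'}$ rather than a strict decrease, is that this $\ruleMany$ may have zero premises, i.e. $\la\var\tmtwo$ may be typed with $\emptymset$ by a $0$-premise $\ruleMany$ of size $0$, in which case there is nothing to recurse on and the same $0$-premise $\ruleMany$ types $\la\var{\tmtwo'}$. Granting the two substitution lemmas, this $\ruleMany$ bookkeeping in the abstraction case, together with checking that the $\ruleES$-stacks coming from the substitution contexts thread through correctly in the presence of the distance rules (where \Cref{rmk:free-variables} keeps the substitution-context binders from clashing with the rest of the context), is the only genuinely fiddly point; everything else is a direct structural recursion.
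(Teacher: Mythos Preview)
Your proposal is correct and follows essentially the same approach as the paper: induction on the evaluation context $\ctx$, with the root multiplicative case handled by rearranging $\ruleAp$/$\ruleFun$ into $\ruleES$, the root exponential case handled via the Substitution and Anti-substitution Lemmas, and the abstraction case $\ctx = \la\var{\ctx'}$ singled out as the one where the $\ruleMany$ may have zero premises, which is exactly why the size inequality is non-strict. One minor slip: in your exponential root case you attribute part of the size drop to ``the $\ruleMany$ of $\val$'' being absorbed, but recall that $\ruleMany$ is not counted in $\size{\tderiv}$ (\Cref{def:two-sizes}); the strict decrease there comes solely from losing the outer $\ruleES$, combined with the bound $\size{\tderivthree} \leq \size{\tderiv} + \size{\tderivtwo}$ from \Cref{l:substitution}.
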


\begin{proof}
	\begin{enumerate}
		\item By induction on	the evaluation context $\ctx$ in the step $\tm = \ctxp{\tmtwo} \tovsub \ctxp{\tmtwo'} = \tm'$ with $\tmtwo \rtom \tmtwo'$ or $\tmtwo' \rtoe \tmtwo'$. 
		The proof is analogous to the one for open quantitative subject reduction (\Cref{prop:weak-subject-reduction}), except that now there is one more case:
		\begin{itemize}
			\item \emph{Abstraction}, \ie $\ctx = \la{\var}{\ctxtwo}$. 
			Then, $\tm = \ctxp{\tmtwo} = \la{\var}{\ctxtwop{\tmtwo}} \allowbreak\rootRew{a} \la{\var}{\ctxtwop{\tmtwo'}} = \ctxp{\tmtwo'} = \tm'$ with $\tmtwo \rootRew{a} \tmtwo'$ and $a \in \{\msym, \esym\}$.
			Then, the derivation $\tderiv$ is necessarily (for some $n \in \nat$)
			\begin{equation*}
			\tderiv = 
			\begin{prooftree}[separation=1em]
			\hypo{}
			\ellipsis{$\tderivtwo_i$}{\typctx_i, \var \hastype \mtypetwo_1 \vdash \ctxtwop{\tmtwo} \hastype \mtypethree_i}
			\infer1[\footnotesize$\lambda$]{\typctx_i \vdash \la{\var}\ctxtwop{\tmtwo} \hastype \larrow{\mtypetwo_i}{\mtypethree_i}}
			\delims{ \left( }{ \right)_{1 \leq i \leq n} }
			\infer1[\footnotesize$\ruleManyVal$]{\bigmplus_{i=1}^n \typctx_i \vdash \la{\var}\ctxtwop{\tmtwo} \hastype \bigmplus_{i=1}^{n} \mset{\larrow{\mtypetwo_i}{\mtypethree_i}} }
			\end{prooftree}
			\end{equation*}
			By \ih, for all $1 \leq i \leq n$, there is a derivation $\concl{\tderivtwo_i'}{\typctx_1, \var \hastype \mtypetwo_i}{\ctxtwop{\tmtwo'}}{\mtypethree_i}$ with: 
			$\size{\tderivtwo_i'} \leq \size{\tderivtwo_i}$.
			We can then build the derivation 
			\begin{equation*}
			\tderiv' = 
			\begin{prooftree}[separation=1em]
			\hypo{}
			\ellipsis{$\tderivtwo'_i$}{\typctx_i, \var \hastype \mtypetwo_1 \vdash \ctxtwop{\tmtwop} \hastype \mtypethree_i}
			\infer1[\footnotesize$\lambda$]{\typctx_i \vdash \la{\var}\ctxtwop{\tmtwop} \hastype \larrow{\mtypetwo_i}{\mtypethree_i}}
			\delims{ \left( }{ \right)_{1 \leq i \leq n} }
			\infer1[\footnotesize$\ruleManyVal$]{\bigmplus_{i=1}^n \typctx_i \vdash \la{\var}\ctxtwop{\tmtwop} \hastype \bigmplus_{i=1}^{n} \mset{\larrow{\mtypetwo_i}{\mtypethree_i}} }
			\end{prooftree}
			\end{equation*}
			where $\size{\tderiv'} = 1 + \sum_{i=1}^n \size{\tderivtwo_i'} \leq 1 + \sum_{i=1}^n \size{\tderivtwo_i} = \size{\tderiv}$. Note that $\size{\tderiv'} = \size{\tderiv}$ if $n = 0$.
		\end{itemize}
		
		\item By induction on	the evaluation context $\ctx$ in the step $\tm = \ctxp{\tmtwo} \tovsub \ctxp{\tmtwo'} = \tm'$ with $\tmtwo \rtom \tmtwo'$ or $\tmtwo' \rtoe \tmtwo'$. Cases:
		\begin{itemize}
				\item \emph{Hole}, \ie $\ctx = \ctxhole$.
		Then there are two sub-cases:
		\begin{enumerate}
			\item \emph{Multiplicative}, \ie $\tm = \subctxp{\la\var\tmtwo}\tmthree \rtom  \subctxp{\tmtwo \esub{\var}{\tmthree}} = \tm'$.
			Then $\tderiv'$ has necessarily the form:
			\begin{equation*}
			\tderiv' = 
			\begin{prooftree}
			\hypo{}
			\ellipsis{$\tderivtwo$}{\tyjp{}{\tmtwo}{\typctx' ; \var \hastype \mtypetwo}{\mtype}}
			\hypo{}
			\ellipsis{$\tderivthree$}{\tyjp{}{\tmthree}{\typctxtwo}{\mtypetwo}}
			\infer2[\footnotesize$\ruleES$]{\typctx'\uplus\typctxtwo \vdash\tmtwo\esub\var\tmthree \hastype \mtype}
			\hypo{}
			\ellipsis{$\tderiv_1$}{\quad}
			\infer2[\footnotesize$\ruleES$]{}
			\ellipsis{}{\quad}
			\hypo{}
			\ellipsis{$\tderiv_n$}{\quad}
			\infer2[\footnotesize$\Es$]{\typctx\uplus\typctxtwo \vdash \subctxp{\tmtwo\esub\var\tmthree} \hastype \mtype}
			\end{prooftree}
			\end{equation*}
			
			We can build $\tderiv$ as follows:
			\begin{equation*}
			\tderiv = 
			\begin{prooftree}[separation = 1em]
			\hypo{}
			\ellipsis{$\tderivtwo$}{\typctx', \var \hastype \mtypetwo \vdash \tmtwo \hastype \mtype}
			\infer1[\footnotesize$\ruleFun$]{\typctx' \vdash \la\var\tmtwo \hastype \larrow{\mtypetwo}{\mtype}}
			\infer1[\footnotesize$\ruleManyVal$]{\typctx' \vdash \la\var\tmtwo \hastype \mset{\larrow{\mtypetwo}{\mtype}}}
			\hypo{}
			\ellipsis{$\tderiv_1$}{\quad}
			\infer2[\footnotesize$\ruleES$]{}
			\ellipsis{}{\quad}
			\hypo{}
			\ellipsis{$\tderiv_n$}{\quad}
			\infer2[\footnotesize$\ruleES$]{\typctx \vdash \subctxp{\la\var\tmtwo} \hastype \mset{\larrow{\mtypetwo}{\mtype}}}
			\hypo{}
			\ellipsis{$\tderivthree$}{\typctxtwo\vdash\tmthree \hastype \mtypetwo}
			\infer2[\footnotesize$\ruleApp$]{\typctx \uplus \typctxtwo \vdash \subctxp{\la\var\tmtwo}\tmthree \hastype \mtype}
			\end{prooftree}
			\end{equation*}
			
			\item \emph{Exponential}, \ie $\tm = \tmtwo\esub\var{\subctxp{\tval}} \rtoe \subctxp{\tmtwo \isub{\var}{\tval}} = \tmp$.
			Then the derivation $\tderiv$ has necessarily the form:
			\begin{equation*}
			\tderiv' = 
			\begin{prooftree}
			\hypo{}
			\ellipsis{$\tderiv''$}{\typctxtwo \mplus\typctxthree' \vdash \tmtwo\isub\var \tval \hastype \mtype}
			\hypo{}
			\ellipsis{$\tderiv_1$}{\quad}
			\infer2[\footnotesize{$\Es$}]{}
			\ellipsis{}{}
			\hypo{}
			\ellipsis{$\tderiv_n$}{\quad}
			\infer2[\footnotesize$\Es$]{\typctxtwo\uplus\typctxthree \vdash \subctxp{\tmtwo\isub\var \tval} \hastype \mtype}
			\end{prooftree}
			\end{equation*}
			where $\typctx = \typctxtwo \uplus \typctxthree$.
			By the anti-substitution lemma (\reflemma{anti-substitution}), there are  derivations $\namedtyjp{\tderivtwo}{}{\tmtwo}{\typctxtwo, \var\hastype\mtypetwo}{\mtype}$ and $\namedtyjp{\tderivthree}{}{\tval}{\typctxthree'}{\mtypetwo}$.
			We can then build the following derivation $\tderiv$:
			\begin{equation*}
			\tderiv = 
			\begin{prooftree}
			\hypo{}
			\ellipsis{$\tderivtwo$}{\tyjp{}{\tmtwo}{\typctxtwo, \var \hastype \mtypetwo}{\mtype}}
			\hypo{}
			\ellipsis{$\tderivthree$}{\tyjp{}{\tval}{\typctxthree'}{\mtypetwo}}
			\hypo{}
			\ellipsis{$\tderiv_1$}{\quad}
			\infer2[\footnotesize$\Es$]{}
			\ellipsis{}{\quad}
			\hypo{}
			\ellipsis{$\tderiv_n$}{\quad}
			\infer2[\footnotesize$\Es$]{\typctxthree \vdash \subctxp{\tval} \hastype \mtypetwo}
			\infer2[\footnotesize$\Es$]{\typctxtwo \mplus \typctxthree \vdash\tmtwo\esub\var{\subctxp{\tval}}\hastype \mtype}
			\end{prooftree}
			\end{equation*}
		\end{enumerate}
		
		\item \emph{Application left}, \ie $\ctx = \ctxtwo \tmthree$.
		Then, $\tm = \ctxp{\tmtwo} = \ctxtwop{\tmtwo} \tmthree \rootRew{a} \ctxtwop{\tmtwo'} \tmthree = \ctxp{\tmtwo'} = \tm'$ with $\tmtwo \rootRew{a} \tmtwo'$ and $a \in \{\msym, \esym\}$.
		The derivation $\tderiv'$ is necessarily
		\begin{equation*}
		\tderiv' = 
		\begin{prooftree}
		\hypo{}
		\ellipsis{$\tderivtwo'$}{\tyjp{}{\ctxtwop{\tmtwo'}}{\typctxtwo}{\mult{\ty{\mtypetwo}{\mtype}}}}
		\hypo{}
		\ellipsis{$\tderivthree$}{\tyjp{}{\tmthree}{\typctxthree}{\mtypetwo}}
		\infer2[\footnotesize$\ruleAp$]{\tyjp{}{\ctxtwop{\tmtwo'} \tmthree}{\typctxtwo \mplus \typctxthree}{\mtype}}
		\end{prooftree}
		\end{equation*}
		where $\typctx = \typctxtwo \uplus \typctxthree$.
		By \ih, there is a derivation $\namedtyjp{\tderivtwo}{}{\ctxtwop{\tmtwo}}{\typctxtwo}{\mult{\ty{\mtypetwo}{\mtype}}}$.
		We can then build the following derivation: 
		\begin{equation*}
		\tderiv = 
		\begin{prooftree}
		\hypo{}
		\ellipsis{$\tderivtwo$}{\tyjp{}{\ctxtwop{\tmtwo}}{\typctxtwo}{\mult{\ty{\mtypetwo}{\mtype}}}}
		\hypo{}
		\ellipsis{$\tderivthree$}{\tyjp{}{\tmthree}{\typctxthree}{\mtypetwo}}
		\infer2[\footnotesize$\ruleAp$]{\tyjp{}{\ctxtwop{\tmtwo} \tmthree}{\typctxtwo \mplus \typctxthree}{\mtype}}
		\end{prooftree}
		\end{equation*}

		\item \emph{Application right}, \ie $\ctx = \tmthree \ctxtwo$.
		Analogous to the previous case.
		
		\item \emph{Explicit substitution left}, \ie $\ctx = \ctxtwo\esub{\var}{\tmthree}$. 
		Then, $\tm = \ctxp{\tmtwo} = \ctxtwop{\tmtwo} \esub{\var}{\tmthree} \rootRew{a} \ctxtwop{\tmtwo'}\esub{\var}{\tmthree} = \ctxp{\tmtwo'} = \tm'$ with $\tmtwo \rootRew{a} \tmtwo'$ and $a \in \{\msym, \esym\}$.
		The derivation $\tderiv'$ is necessarily
		\begin{equation*}
		\tderiv' = 
		\begin{prooftree}
		\hypo{}
		\ellipsis{$\tderivtwo'$}{\tyjp{}{\ctxtwop{\tmtwo'}}{\typctxtwo ; \var \hastype \mtypetwo}{\mtype}}
		\hypo{}
		\ellipsis{$\tderivthree$}{\tyjp{}{\tmthree}{\typctxthree}{\mtypetwo}}
		\infer2[\footnotesize$\Es$]{\typctxtwo \mplus \typctxthree \vdash \ctxtwop{\tmtwo'} \esub{\var}{\tmthree} \hastype \mtype}
		\end{prooftree}
		\end{equation*}
		where $\typctx = \typctxtwo \mplus \typctxthree$.
		By \ih, there is a derivation $\namedtyjp{\tderivtwo}{}{\ctxtwop{\tmtwo}}{\typctxtwo, \var \hastype \mtypetwo}{\mtype}$.
		We can then build the following derivation: 
		\begin{equation*}
		\tderiv = 
		\begin{prooftree}
		\hypo{}
		\ellipsis{$\tderivtwo$}{\tyjp{}{\ctxtwop{\tmtwo}}{\typctxtwo ; \var \hastype \mtypetwo}{\mtype}}
		\hypo{}
		\ellipsis{$\tderivthree$}{\tyjp{}{\tmthree}{\typctxthree}{\mtypetwo}}
		\infer2[\footnotesize$\Es$]{\tyjp{}{\ctxtwop{\tmtwo} \esub{\var}{\tmthree}}{\typctxtwo \mplus \typctxthree}{\mtype}}
		\end{prooftree}
		\end{equation*}

		\item \emph{Explicit substitution right}, \ie $\ctx = \tmthree \esub{\var}{\ctxtwo}$. 
		Analogous to the previous case.

			\item \emph{Abstraction}, \ie $\ctx = \la{\var}{\ctxtwo}$. 
			Then, $\tm = \ctxp{\tmtwo} = \la{\var}{\ctxtwop{\tmtwo}} \allowbreak\rootRew{a} \la{\var}{\ctxtwop{\tmtwo'}} = \ctxp{\tmtwo'} = \tm'$ with $\tmtwo \rootRew{a} \tmtwo'$ and $a \in \{\msym, \esym\}$.
			Then, the derivation $\tderiv'$ is necessarily (for some $n \in \nat$)
			\begin{equation*}
			\tderiv' = 
			\begin{prooftree}[separation=1em]
			\hypo{}
			\ellipsis{$\tderivtwo'_i$}{\typctx_i, \var \hastype \mtypetwo_1 \vdash \ctxtwop{\tmtwop} \hastype \mtypethree_i}
			\infer1[\footnotesize$\lambda$]{\typctx_i \vdash \la{\var}\ctxtwop{\tmtwop} \hastype \larrow{\mtypetwo_i}{\mtypethree_i}}
			\delims{ \left( }{ \right)_{1 \leq i \leq n} }
			\infer1[\footnotesize$\ruleManyVal$]{\bigmplus_{i=1}^n \typctx_i \vdash \la{\var}\ctxtwop{\tmtwop} \hastype \bigmplus_{i=1}^{n} \mset{\larrow{\mtypetwo_i}{\mtypethree_i}} }
			\end{prooftree}
			\end{equation*}
			By \ih, for all $1 \leq i \leq n$, there is a derivation $\concl{\tderivtwo_i}{\typctx_1, \var \hastype \mtypetwo_i}{\ctxtwop{\tmtwo}}{\mtypethree_i}$.
			We can then build the following derivation:
			\begin{equation*}
			\tderiv = 
			\begin{prooftree}[separation=1em]
			\hypo{}
			\ellipsis{$\tderivtwo_i$}{\typctx_i, \var \hastype \mtypetwo_1 \vdash \ctxtwop{\tmtwo} \hastype \mtypethree_i}
			\infer1[\footnotesize$\lambda$]{\typctx_i \vdash \la{\var}\ctxtwop{\tmtwo} \hastype \larrow{\mtypetwo_i}{\mtypethree_i}}
			\delims{ \left( }{ \right)_{1 \leq i \leq n} }
			\infer1[\footnotesize$\ruleManyVal$]{\bigmplus_{i=1}^n \typctx_i \vdash \la{\var}\ctxtwop{\tmtwo} \hastype \bigmplus_{i=1}^{n} \mset{\larrow{\mtypetwo_i}{\mtypethree_i}} }
			\end{prooftree}
			\end{equation*}
			\qedhere
		\end{itemize} 

	\end{enumerate}
\end{proof}

\section{Proofs of \Cref{sect:open}}

\subsection{Correctness} 

\begin{proposition}[Open quantitative subject reduction]
	\label{propappendix:weak-subject-reduction}
	\NoteState{prop:weak-subject-reduction}
		Let $\namedtyjp{\tderiv}{}{\tm}{\typctx}{\mtype}$ be a derivation. If $\tm \tovsubo \tm'$ then there is 
$\namedtyjp{\tderiv'}{}{\tm'}{\typctx}{\mtype}$ with $\size{\tderiv} > \size{\tderiv'}$.
\end{proposition}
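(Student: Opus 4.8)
The plan is to prove the statement by induction on the open evaluation context $\weakctx$ such that $\tm = \weakctxp{\tmtwo} \tovsubo \weakctxp{\tmtwo'} = \tm'$ with $\tmtwo \rootRew{a} \tmtwo'$ for some $a \in \set{\msym,\esym}$. The structural point that makes the size decrease \emph{strict} here, whereas for the \full reduction (\Cref{prop:qual-subject}) it need not be, is that open contexts never enter abstractions: there is no case $\weakctx = \la\var{\weakctx'}$, hence no premiseless $\ruleMany$ rule sitting below an abstraction that would let the size stay constant. The four inductive cases $\weakctx = \weakctx'\tmthree$, $\weakctx = \tmthree\weakctx'$, $\weakctx = \weakctx'\esub\var\tmthree$, or $\weakctx = \tmthree\esub\var{\weakctx'}$ are all uniform: $\tm$ is an application (resp.\ an \ES), so $\tderiv$ must end with $\ruleAp$ (resp.\ $\ruleES$), and exactly one of its two premises is a derivation $\tderivtwo$ of the subterm containing the fired redex; that subterm $\tovsubo$-reduces, so the \ih applied to $\tderivtwo$ yields $\tderivtwo'$ with the same conclusion and $\size{\tderivtwo'} < \size{\tderivtwo}$, and replacing $\tderivtwo$ by $\tderivtwo'$ in $\tderiv$, keeping the other premise untouched, gives $\tderiv'$ with $\size{\tderiv'} < \size{\tderiv}$, since the last rule contributes the same $+1$ to both sizes.

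The content is in the base case $\weakctx = \ctxhole$, where $\tm$ is itself a redex; there are two sub-cases. If $\tm = \subctxp{\la\var\tmtwo}\tmthree \rtom \subctxp{\tmtwo\esub\var\tmthree} = \tm'$, then $\tderiv$ ends with an $\ruleAp$ rule typing $\subctxp{\la\var\tmtwo}$ with a \emph{singleton} multi type $\mset{\larrow\mtypetwo\mtype}$ and $\tmthree$ with $\mtypetwo$. By inspection of the typing rules, a derivation of $\subctxp{\la\var\tmtwo}$ with a singleton multi type necessarily consists of the $\ruleES$ rules handling the explicit substitutions of $\subctx$, wrapped around a one-premise $\ruleMany$, itself wrapped around a $\ruleFun$ applied to a derivation $\tderivtwo$ whose conclusion is $\typctxtwo, \var\hastype\mtypetwo \vdash \tmtwo \hastype \mtype$. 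I would then reassemble $\tderiv'$ by first applying $\ruleES$ to $\tderivtwo$ and to the derivation of $\tmthree$, and then re-applying the $\ruleES$ rules of $\subctx$ around the result; the type contexts recombine exactly to $\typctx$, and since $\ruleMany$ is not counted in the size one trades one $\ruleAp$ and one $\ruleFun$ for one $\ruleES$, so that $\size{\tderiv'} = \size{\tderiv} - 1$. If instead $\tm = \tmtwo\esub\var{\subctxp\val} \rtoe \subctxp{\tmtwo\isub\var\val} = \tm'$, then $\tderiv$ ends with an $\ruleES$ rule typing $\tmtwo$ with a context of the form $\typctxtwo, \var\hastype\mtypetwo$ and $\subctxp\val$ with $\mtypetwo$; the derivation of $\subctxp\val$ is, again by inspection, the $\ruleES$ rules of $\subctx$ wrapped around a derivation of the value $\val$ with type $\mtypetwo$. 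The substitution lemma (\reflemma{substitution}) applied to the derivation of $\tmtwo$ and to this derivation of $\val$ produces a derivation of $\tmtwo\isub\var\val$ with the appropriate context and size at most the sum of the sizes of the derivations of $\tmtwo$ and $\val$; wrapping the $\ruleES$ rules of $\subctx$ around it gives $\tderiv'$, whose context recombines to $\typctx$ and whose size satisfies $\size{\tderiv'} \le \size{\tderiv} - 1$, because the outermost $\ruleES$ of $\tderiv$ disappears while the substitution lemma does not increase size.

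I expect no conceptual obstacle: the substitution lemma does all the real work in the exponential sub-case, and everything else is routine reassembly of derivations together with context and size bookkeeping. The one point requiring a little care is the uniform ``inversion'' handling of the substitution context $\subctx$ in the two base sub-cases --- the fact that a derivation of $\subctxp{\la\var\tmtwo}$ with a singleton arrow multi type (resp.\ of $\subctxp\val$ with some multi type) factors through the $\ruleFun$/$\ruleMany$ derivation of the body $\tmtwo$ (resp.\ the derivation of the value $\val$) surrounded by the $\ruleES$ derivations coming from the explicit substitutions of $\subctx$ --- which is settled by a straightforward induction on $\subctx$, after which the size counts go through directly.
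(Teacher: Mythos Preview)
Your proposal is correct and follows essentially the same approach as the paper: induction on the open evaluation context, with the base case split into the multiplicative sub-case (where the explicit derivation reassembly trades one $\ruleAp$ and one $\ruleFun$ for one $\ruleES$, giving $\size{\tderiv'} = \size{\tderiv}-1$) and the exponential sub-case (handled via the substitution lemma, \reflemma{substitution}, together with the loss of the outer $\ruleES$), and with the four inductive cases being routine propagation of the \ih through the last rule. Your remark that the inversion through the substitution context $\subctx$ is the only point needing care matches the paper, which spells out that inversion explicitly rather than phrasing it as a separate induction on $\subctx$.
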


\begin{proof}
	By induction on the open evaluation context $\weakctx$ such that $\tm = \weakctxp{\tmtwo} \tow \weakctxp{\tmtwo'} = \tm'$ with $\tmtwo \rtom \tmtwo'$ or $\tmtwo' \rtoe \tmtwo'$. 
	Cases for $\weakctx$:
	\begin{itemize}
		\item \emph{Hole}, \ie $\weakctx = \ctxhole$.
		Then there are two sub-cases:
		\begin{enumerate}
			\item \emph{Multiplicative}, \ie $\tm = \subctxp{\la\var\tmtwo}\tmthree \rtom  \subctxp{\tmtwo \esub{\var}{\tmthree}} = \tm'$.
			Then $\tderiv$ has necessarily the form:
			\begin{equation*}
			\tderiv = 
			\begin{prooftree}[separation = 1em]
			\hypo{}
			\ellipsis{$\tderivtwo$}{\typctx', \var \hastype \mtypetwo \vdash \tmtwo \hastype \mtype}
			\infer1[\footnotesize$\ruleFun$]{\typctx' \vdash \la\var\tmtwo \hastype \larrow{\mtypetwo}{\mtype}}
			\infer1[\footnotesize$\ruleManyVal$]{\typctx' \vdash \la\var\tmtwo \hastype \mset{\larrow{\mtypetwo}{\mtype}}}
			\hypo{}
			\ellipsis{$\tderiv_1$}{\quad}
			\infer2[\footnotesize$\ruleES$]{}
			\ellipsis{}{\quad}
			\hypo{}
			\ellipsis{$\tderiv_n$}{\quad}
			\infer2[\footnotesize$\ruleES$]{\typctx \vdash \subctxp{\la\var\tmtwo} \hastype \mset{\larrow{\mtypetwo}{\mtype}}}
			\hypo{}
			\ellipsis{$\tderivthree$}{\typctxtwo\vdash\tmthree \hastype \mtypetwo}
			\infer2[\footnotesize$\ruleApp$]{\typctx \uplus \typctxtwo \vdash \subctxp{\la\var\tmtwo}\tmthree \hastype \mtype}
			\end{prooftree}
			\end{equation*}
			
			with $\size{\tderiv} = 2 + n + \size{\tderivtwo} + \size{\tderivthree} + \sum_{i=1}^{n} \size{\tderiv_{i}}$.
			We can then build $\tderiv'$ as follows:
			\begin{equation*}
			\tderiv' = 
			\begin{prooftree}
			\hypo{}
			\ellipsis{$\tderivtwo$}{\tyjp{}{\tmtwo}{\typctx' ; \var \hastype \mtypetwo}{\mtype}}
			\hypo{}
			\ellipsis{$\tderivthree$}{\tyjp{}{\tmthree}{\typctxtwo}{\mtypetwo}}
			\infer2[\footnotesize$\ruleES$]{\typctx'\uplus\typctxtwo \vdash\tmtwo\esub\var\tmthree \hastype \mtype}
			\hypo{}
			\ellipsis{$\tderiv_1$}{\quad}
			\infer2[\footnotesize$\ruleES$]{}
			\ellipsis{}{\quad}
			\hypo{}
			\ellipsis{$\tderiv_n$}{\quad}
			\infer2[\footnotesize$\Es$]{\typctx\uplus\typctxtwo \vdash \subctxp{\tmtwo\esub\var\tmthree} \hastype \mtype}
			\end{prooftree}
			\end{equation*}
			where $\size{\tderiv'} = 1+ n + \size{\tderivtwo} + \size{\tderivthree} + \sum_{i=1}^{n} \size{\tderiv_{i}} = \size{\tderiv} - 1$.
			
			\item \emph{Exponential}, \ie $\tm = \tmtwo\esub\var{\subctxp{\tval}} \rtoe \subctxp{\tmtwo \isub{\var}{\tval}} = \tmp$.
			Then the derivation $\tderiv$ has necessarily the form:
			\begin{equation*}
			\tderiv = 
			\begin{prooftree}
			\hypo{}
			\ellipsis{$\tderivtwo$}{\tyjp{}{\tmtwo}{\typctxtwo, \var \hastype \mtypetwo}{\mtype}}
			\hypo{}
			\ellipsis{$\tderivthree$}{\tyjp{}{\tval}{\typctxthree'}{\mtypetwo}}
			\hypo{}
			\ellipsis{$\tderiv_1$}{\quad}
			\infer2[\footnotesize$\Es$]{}
			\ellipsis{}{\quad}
			\hypo{}
			\ellipsis{$\tderiv_n$}{\quad}
			\infer2[\footnotesize$\Es$]{\typctxthree \vdash \subctxp{\tval} \hastype \mtypetwo}
			\infer2[\footnotesize$\Es$]{\typctxtwo\uplus\typctxthree \vdash\tmtwo\esub\var{\subctxp{\tval}}\hastype \mtype}
			\end{prooftree}
			\end{equation*}
			where $\typctx = \typctxtwo \uplus \typctxthree$ and $\size{\tderiv} = 1 + n + \size{\tderivtwo} + \size{\tderivthree} + \sum_{i=1}^{n} \size{\tderiv_{i}}$.
			By the substitution lemma (\reflemma{substitution}), there is a derivation $\namedtyjp{\tderiv''}{}{\tmtwo\isub{\var}{\tval}}{\typctxtwo \mplus \typctxthree'}{\mtype}$
			such that  $\size{\tderiv''} \leq \size{\tderivtwo} + \size{\tderivthree}$.
			We can then build the following derivation $\tderiv'$:
			\begin{equation*}
			\tderiv' = 
			\begin{prooftree}
			\hypo{}
			\ellipsis{$\tderiv''$}{\typctxtwo \mplus\typctxthree' \vdash \tmtwo\isub\var \tval \hastype \mtype}
			\hypo{}
			\ellipsis{$\tderiv_1$}{\quad}
			\infer2[\footnotesize{$\Es$}]{}
			\ellipsis{}{}
			\hypo{}
			\ellipsis{$\tderiv_n$}{\quad}
			\infer2[\footnotesize$\Es$]{\typctxtwo\mplus\typctxthree \vdash \subctxp{\tmtwo\isub\var \val} \hastype \mtype}
			\end{prooftree}
			\end{equation*}
			where  $\size{\tderiv'} = n + \size{\tderiv''} + \sum_{i=1}^{n} \size{\tderiv_{i}} \leq n + \size{\tderivtwo} + \size{\tderivthree} + \sum_{i=1}^{n} \size{\tderiv_{i}} < 1 + n + \size{\tderivtwo} + \size{\tderivthree} + \sum_{i=1}^{n} \size{\tderiv_{i}} = \size{\tderiv}$ ($\tderiv'$ contains at least one rule $\Es$ less than $\tderiv$).
		\end{enumerate}
		
		\item \emph{Application left}, \ie $\weakctx = \weakctxtwo\tmthree$.
		Then, $\tm = \weakctxp{\tmtwo} = \weakctxtwop{\tmtwo} \tmthree \rootRew{a} \weakctxtwop{\tmtwo'} \tmthree = \weakctxp{\tmtwo'} = \tm'$ with $\tmtwo \rootRew{a} \tmtwo'$ and $a \in \{\msym, \esym\}$.
		The derivation $\tderiv$ is necessarily
		\begin{equation*}
		\tderiv = 
		\begin{prooftree}
		\hypo{}
		\ellipsis{$\tderivtwo$}{\tyjp{}{\weakctxtwop{\tmtwo}}{\typctxtwo}{\mult{\ty{\mtypetwo}{\mtype}}}}
		\hypo{}
		\ellipsis{$\tderivthree$}{\tyjp{}{\tmthree}{\typctxthree}{\mtypetwo}}
		\infer2[\footnotesize$\ruleAp$]{\tyjp{}{\weakctxtwop{\tmtwo} \tmthree}{\typctxtwo \mplus \typctxthree}{\mtype}}
		\end{prooftree}
		\end{equation*}
		where $\typctx = \typctxtwo \uplus \typctxthree$, $\sizem{\tderiv} = 1 + \sizem{\tderivtwo} + \sizem{\tderivthree}$ and $\size{\tderiv} = 1 + \size{\tderivtwo} + \size{\tderivthree}$.
		By \ih, there is a derivation $\namedtyjp{\tderivtwo'}{}{\weakctxtwop{\tmtwo'}}{\typctxtwo}{\mult{\ty{\mtypetwo}{\mtype}}}$ with $\size{\tderivtwo'} < \size{\tderivtwo}$.
		We can then build the derivation 
		\begin{equation*}
		\tderiv' = 
		\begin{prooftree}
		\hypo{}
		\ellipsis{$\tderivtwo'$}{\tyjp{}{\weakctxtwop{\tmtwo'}}{\typctxtwo}{\mult{\ty{\mtypetwo}{\mtype}}}}
		\hypo{}
		\ellipsis{$\tderivthree$}{\tyjp{}{\tmthree}{\typctxthree}{\mtypetwo}}
		\infer2[\footnotesize$\ruleAp$]{\tyjp{}{\weakctxtwop{\tmtwo'} \tmthree}{\typctxtwo \mplus \typctxthree}{\mtype}}
		\end{prooftree}
		\end{equation*}
		noting that $\size{\tderiv'} = 1 + \size{\tderivtwo'} + \size{\tderivthree} <_{\ih} 1 + \size{\tderivtwo} + \size{\tderivthree} = \size{\tderiv}$.	
		\item \emph{Application right}, \ie $\weakctx = \tmthree \weakctxtwo$.
		Analogous to the previous case.
		
		\item \emph{Explicit substitution left}, \ie $\weakctx = \weakctxtwo\esub{\var}{\tmthree}$. 
		Then, $\tm = \weakctxp{\tmtwo} = \weakctxtwop{\tmtwo} \esub{\var}{\tmthree} \rootRew{a} \weakctxtwop{\tmtwo'}\esub{\var}{\tmthree} = \weakctxp{\tmtwo'} = \tm'$ with $\tmtwo \rootRew{a} \tmtwo'$ and $a \in \{\msym, \esym\}$.
		The derivation $\tderiv$ is necessarily
		\begin{equation*}
		\tderiv = 
		\begin{prooftree}
		\hypo{}
		\ellipsis{$\tderivtwo$}{\tyjp{}{\weakctxtwop{\tmtwo}}{\typctxtwo ; \var \hastype \mtypetwo}{\mtype}}
		\hypo{}
		\ellipsis{$\tderivthree$}{\tyjp{}{\tmthree}{\typctxthree}{\mtypetwo}}
		\infer2[\footnotesize$\Es$]{\tyjp{}{\weakctxtwop{\tmtwo} \esub{\var}{\tmthree}}{\typctxtwo \mplus \typctxthree}{\mtype}}
		\end{prooftree}
		\end{equation*}
		where $\typctx = \typctxtwo \uplus \typctxthree$ and $\size{\tderiv} = 1 + \size{\tderivtwo} + \size{\tderivthree}$.
		By \ih, there is a derivation $\namedtyjp{\tderivtwo'}{}{\weakctxtwop{\tmtwo'}}{\typctxtwo, \var \hastype \mtypetwo}{\mtype}$ with $\size{\tderivtwo'} < \size{\tderivtwo}$ .
		We can then build the derivation 
		\begin{equation*}
		\tderiv' = 
		\begin{prooftree}
		\hypo{}
		\ellipsis{$\tderivtwo'$}{\tyjp{}{\weakctxtwop{\tmtwo'}}{\typctxtwo ; \var \hastype \mtypetwo}{\mtype}}
		\hypo{}
		\ellipsis{$\tderivthree$}{\tyjp{}{\tmthree}{\typctxthree}{\mtypetwo}}
		\infer2[\footnotesize$\Es$]{\typctxtwo \uplus \typctxthree \vdash \weakctxtwop{\tmtwo'} \esub{\var}{\tmthree} \hastype \mtype}
		\end{prooftree}
		\end{equation*}
		noting that $\size{\tderiv'} = \size{\tderivtwo'} + \size{\tderivthree} <_{\ih} \size{\tderivtwo} + \size{\tderivthree} = \size{\tderiv}$.
		
		\item \emph{Explicit substitution right}, \ie $\weakctx = \tmthree \esub{\var}{\weakctxtwo}$. 
		Analogous to the previous case.
		\qedhere
	\end{itemize}
\end{proof}

\begin{theorem}[Open correctness]
	\label{thmappendix:open-correctness}
	\NoteState{thm:open-correctness}
		Let $\derive{\tderiv}{\tm}$ be a derivation.
	Then there is a $\osym$-normalizing evaluation $\deriv \colon \tm \tovsubo^* \fire$ with $\size{\deriv} \leq \size{\tderiv}$.
\end{theorem}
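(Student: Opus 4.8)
The plan is to proceed by induction on the size $\size{\tderiv}$ of the type derivation, following the standard reducibility-free scheme for multi types: a \emph{normal form} lemma provides the base case, and \emph{quantitative subject reduction} drives the inductive step while also delivering the quantitative bound.

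First I would treat the base case, where $\tm$ is $\osym$-normal. By harmony (\Cref{prop:properties-open-reduction}), $\tm$ is then a fireball $\fire$, so the empty evaluation $\deriv \colon \tm \tovsubo^* \tm = \fire$ witnesses the statement, with $\size{\deriv} = 0 \leq \size{\tderiv}$. Otherwise $\tm$ is not $\osym$-normal, hence there is a term $\tm'$ with $\tm \tovsubo \tm'$. Writing $\namedtyjp{\tderiv}{}{\tm}{\typctx}{\mtype}$, open quantitative subject reduction (\Cref{prop:weak-subject-reduction}) gives a derivation $\namedtyjp{\tderiv'}{}{\tm'}{\typctx}{\mtype}$ with $\size{\tderiv} > \size{\tderiv'}$. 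Since $\size{\tderiv'} < \size{\tderiv}$, the induction hypothesis applies to $\tderiv'$, yielding a $\osym$-normalizing evaluation $\deriv' \colon \tm' \tovsubo^* \fire$ with $\size{\deriv'} \leq \size{\tderiv'}$. Prepending the step $\tm \tovsubo \tm'$ to $\deriv'$ produces a $\osym$-normalizing evaluation $\deriv \colon \tm \tovsubo^* \fire$ with $\size{\deriv} = 1 + \size{\deriv'} \leq 1 + \size{\tderiv'} \leq \size{\tderiv}$, as required.

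There is essentially no obstacle here: the two genuine points of care are invoking harmony to identify the reached normal form as a fireball, and observing that the \emph{strict} decrease of the derivation size along a $\tovsubo$-step is simultaneously what well-founds the induction and what produces the bound $\size{\deriv} \leq \size{\tderiv}$. Note in particular that the diamond property of $\tovsubo$ is not needed for this direction, since we only assert the existence of one normalizing evaluation, not that every $\osym$-reduction from $\tm$ is normalizing (that would follow from combining \Cref{thm:open-correctness} with diamond/uniform normalization).
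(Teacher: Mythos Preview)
Your proof is correct and follows essentially the same approach as the paper: induction on $\size{\tderiv}$, with harmony identifying the $\osym$-normal form as a fireball in the base case, and open quantitative subject reduction (\Cref{prop:weak-subject-reduction}) providing both the well-founded decrease and the length bound in the inductive step. Your write-up is in fact slightly more explicit than the paper's, since you spell out the inequality $\size{\deriv} = 1 + \size{\deriv'} \leq 1 + \size{\tderiv'} \leq \size{\tderiv}$ that the paper leaves implicit.
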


\begin{proof}
	Given the derivation $\concl{\tderiv}{\typctx}{\tm}{\mtype}$, the proof is by induction on the size $\size{\tderiv}$ of $\tderiv$.
	If $\tm$ is normal for $\tovsubo$, then $\tm = \fire$ is a fireball.	
	Otherwise, $\tm$ is not normal for $\tovsubo$ and so $\tm \tovsubo \tmtwo$.
	According to open subject reduction (\Cref{prop:weak-subject-reduction}), there is a derivation $\concl{\tderivtwo}{\typctx}{\tmtwo}{\mtype}$ such that $\size{\tderivtwo} < \size{\tderiv}$.
	By \ih, there exists a fireball $\fire$ and a reduction sequence $\deriv' \colon \tmtwo \tovsubo^* \fire$.
	The $\osym$-evaluation $\deriv$ of the statement is obtained by concatenating the first step $\tm \tovsubo \tmtwo$ and $\deriv'$.		\qedhere
\end{proof}

\subsection{Completeness}

\begin{proposition}[Typability of open normal forms]
	\label{propappendix:precise-open-typability-nf}
	\NoteState{prop:precise-open-typability-nf}
	\begin{enumerate}
		\item \emph{Inert:}\label{pappendix:precise-open-typability-nf-inert} 
		for every inert term $\itm$ and multi type $\mtype$,	there exist a type context $\typctx$ 	and a derivation $\concl{\tderiv}{\typctx}{\itm}{\mtype}$.

		\item\label{pappendix:precise-open-typability-nf-fireball}
		\emph{Fireball:}
		for every fireball $\fire$ there exists a type context $\typctx$ 	and a derivation $\concl{\tderiv}{\typctx}{\fire}{\zero}$.
	\end{enumerate}
\end{proposition}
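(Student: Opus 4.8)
The plan is to prove the two items \emph{simultaneously}, by mutual induction on the structure of inert terms and fireballs, following the grammars $\itm \grameq \var \mid \itm \fire \mid \itm \esub{\var}{\itmtwo}$ and $\fire \grameq \val \mid \itm \mid \fire \esub{\var}{\itm}$. No rewriting argument is involved: everything is a direct manipulation of the rules in \Cref{fig:cbvtypes}. The only formulation choice that makes the induction close is the universal quantification over $\mtype$ in item~1: it provides exactly the extra strength needed to feed an arrow type to the head variable of an application, and it is also what lets the $\ruleES$ cases of both items re-use whatever type the outer sub-derivation has already fixed for the bound variable.

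\textbf{Item~1 (inert terms).} Fix a multi type $\mtype = \mset{\ltype_1,\dots,\ltype_n}$ and induct on $\itm$. If $\itm = \var$, stack $n$ instances of $\ruleAx$ under one $\ruleMany$ rule to get $\concl{\tderiv}{\var \hastype \mtype}{\var}{\mtype}$, taking $\typctx = \var \hastype \mtype$ (the case $n = 0$ being the empty $\ruleMany$ rule and the empty context). If $\itm = \itm_1\fire$ with $\itm_1$ inert and $\fire$ a fireball, the induction hypothesis of item~2 applied to $\fire$ yields $\concl{\tderivthree}{\typctxthree}{\fire}{\zero}$, and the induction hypothesis of item~1 applied to $\itm_1$ \emph{with multi type $\mset{\larrow{\zero}{\mtype}}$} yields $\concl{\tderivtwo}{\typctxtwo}{\itm_1}{\mset{\larrow{\zero}{\mtype}}}$; an $\ruleAp$ rule then gives the claim with $\typctx = \typctxtwo \uplus \typctxthree$. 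If $\itm = \itm_1\esub{\var}{\itm_2}$ with $\itm_1,\itm_2$ inert, the induction hypothesis of item~1 applied to $\itm_1$ with the \emph{given} $\mtype$ yields $\concl{\tderivtwo}{\typctxtwo}{\itm_1}{\mtype}$; set $\mtypetwo := \typctxtwo(\var)$ and let $\typctxtwo'$ be $\typctxtwo$ with the declaration of $\var$ removed (so $\typctxtwo = \typctxtwo', \var \hastype \mtypetwo$); the induction hypothesis of item~1 applied to $\itm_2$ with multi type $\mtypetwo$ yields $\concl{\tderivthree}{\typctxthree}{\itm_2}{\mtypetwo}$, and an $\ruleES$ rule closes the case with $\typctx = \typctxtwo' \uplus \typctxthree$.

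\textbf{Item~2 (fireballs).} Induct on $\fire$. If $\fire = \val$, a $\ruleMany$ rule with no premises gives $\concl{\tderiv}{\,}{\val}{\zero}$. If $\fire = \itm$, invoke item~1 with $\mtype := \zero$. If $\fire = \fire'\esub{\var}{\itm}$, the induction hypothesis of item~2 on $\fire'$ gives $\concl{\tderivtwo}{\typctxtwo}{\fire'}{\zero}$; splitting $\typctxtwo = \typctxtwo', \var \hastype \mtypetwo$ as above and applying item~1 to $\itm$ with multi type $\mtypetwo$ gives $\concl{\tderivthree}{\typctxthree}{\itm}{\mtypetwo}$; an $\ruleES$ rule gives the claim with $\typctx = \typctxtwo' \uplus \typctxthree$.

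\textbf{The hard part} is not any computation but the \emph{statement} of item~1: one must realise it has to range over \emph{all} multi types, not just $\zero$. The naïve attempt ``every inert term is typable with $\zero$'' collapses already at $\itm_1\fire$, where $\itm_1$ is forced to receive a non-empty (arrow) multi type; quantifying over $\mtype$ repairs this. Everything else is bookkeeping of type contexts via $\uplus$, using also the convention $\typctx, \var \hastype \zero = \typctx$ so that the $\ruleES$ cases are uniform whether or not $\var$ actually occurs (and then $\mtypetwo = \zero$ is handled by item~1 itself, instantiated at $\zero$).
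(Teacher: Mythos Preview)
Your proof is correct and follows essentially the same approach as the paper's: a simultaneous mutual induction on the grammars of inert terms and fireballs, with the same case analysis and the same use of the universal quantification over $\mtype$ in item~1 to supply the arrow type $\mset{\larrow{\zero}{\mtype}}$ in the application case and the matching type $\mtypetwo$ in the $\ruleES$ cases. The only cosmetic difference is that the paper writes the induction hypothesis in the $\ruleES$ cases directly as $\concl{\tderivtwo}{\typctxtwo, \var \hastype \mtypetwo}{\itm_1}{\mtype}$, whereas you first obtain an unsplit context and then set $\mtypetwo \defeq \typctxtwo(\var)$; these are the same thing.
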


\begin{proof}
	We prove simultaneously \Cref{p:precise-open-typability-nf-fireball,p:precise-open-typability-nf-inert} by 
		mutual induction on the definition of fireball and inert term.
		Note that \Cref{p:precise-open-typability-nf-inert} is stronger than \Cref{p:precise-open-typability-nf-fireball}.
		Cases for inert terms:
		\begin{itemize}
			\item \emph{Variable}, \ie $\tm = \var$, which is an inert term. 
			Let $\mtype$ be a multi type: hence, $\mtype = \mset{\ltype_1, \dots, \ltype_n}$ for some $n \in \nat$ and some $\ltype_1, \dots, \ltype_n$ linear types.
			We can then build  the derivation 
			\begin{equation*}
			\tderiv = 
			\begin{prooftree}[separation = 1em]
			\infer0[\footnotesize$\ruleAx$]{\var \hastype \mset{\ltype_1} \vdash \var \hastype \ltype_1}
			\hypo{\dots}
			\infer0[\footnotesize$\ruleAx$]{\var \hastype \mset{\ltype_n} \vdash \var \hastype \ltype_n}
			\infer3[\footnotesize$\ruleManyVar$]{\var \hastype \mset{\ltype_1, \dots, \ltype_n} \vdash \var \hastype \mset{\ltype_1, \dots, \ltype_n}}
			\end{prooftree}
			\end{equation*}

			\item \emph{Inert application}, \ie $\itm = \itmtwo \fire$ for some inert term $\itmtwo$ and fireball $\fire$.
			Let $\mtype$ be a multi type.
			By \ih for fireballs, there is a derivation $\concl{\tderivthree}{\typctxthree}{\fire}{\emptytype}$  for some type context $\typctxthree$.
			By \ih for inert terms, since $\mset{\larrow{\emptytype}{\mtype}}$ is a multi type, there is a derivation $\concl{\tderivtwo}{\typctxtwo}{\itmtwo}{\mset{\larrow{\emptytype}{\mtype}}}$ for some type context $\typctxtwo$.
			We can then build the derivation 
			\begin{equation*}
			\tderiv  =
			\begin{prooftree}
			\hypo{}
			\ellipsis{$\tderivtwo$}{\typctxtwo \vdash \itmtwo \hastype \mset{\larrow{\emptytype}{\mtype}}}
			\hypo{}
			\ellipsis{$\tderivthree$}{\typctxthree \vdash \fire \hastype \emptytype}
			\infer2[\footnotesize$\ruleApp$]{\typctxtwo \mplus \typctxthree \vdash \itmtwo \fire \hastype \mtype}				
			\end{prooftree}
			\end{equation*}
			
			\item \emph{Explicit substitution on inert}, \ie $\itm = \itmtwo \esub{\var}{\itmthree}$ for some inert terms $\itmtwo$ and $\itmthree$.
			Let $\mtype$ be a multi type.
			By \ih for inert terms applied to $\itmtwo$, there is a derivation $\concl{\tderivtwo}{\typctxtwo, \var \hastype \mtypetwo}{\itmtwo}{\mtype}$ for some multi type $\mtypetwo$ and type context $\typctxtwo$.
			By \ih for inert terms applied to $\itmthree$, there is a derivation $\concl{\tderivthree}{\typctxthree}{\itmthree}{\mtypetwo}$  for some type context $\typctxthree$.
			We can then build the derivation 
			\begin{equation*}
				\tderiv =
				\begin{prooftree}
				\hypo{}
				\ellipsis{$\tderivtwo$}{\typctxtwo, \var \hastype \mtypetwo \vdash \itmtwo \hastype \mtype}
				\hypo{}
				\ellipsis{$\tderivthree$}{\typctxthree \vdash \itmthree \hastype \mtypetwo}
				\infer2[\footnotesize$\ruleES$]{\typctxtwo \mplus \typctxthree \vdash \itmtwo \esub{\var} {\itmthree} \hastype \mtype}				
				\end{prooftree}
			\end{equation*}
		\end{itemize}
	
		Cases for fireballs:
		\begin{itemize}
			\item \emph{Inert term}, \ie $\fire  = \itm$. It follows from the \ih for inert terms with respect to $\mtype \defeq \zero$.
			
			\item \emph{Abstraction}, \ie $\fire  = \la{\var}{\tmtwo}$.
			We can then build the derivation
			\begin{equation*}
				\tderiv = 
				\begin{prooftree}
				\infer0[\footnotesize$\ruleManyVal$]{\vdash \la{\var}{\tmtwo} \hastype \emptytype}
			\end{prooftree}
			\end{equation*}
			
			\item \emph{Explicit substitution on fireball}, \ie $\fire = \firetwo \esub{\var}{\itm}$ for some fireball $\firetwo$ and inert term $\itm$.
			By \ih for fireballs applied to $\firetwo$, there is a derivation $\concl{\tderivtwo}{\typctxtwo, \var \hastype \mtypetwo}{\firetwo}{\emptytype}$ for some multi type $\mtypetwo$ and type context $\typctxtwo$.
			By \ih for inert terms applied to $\itm$, there is a derivation $\concl{\tderivthree}{\typctxthree}{\itm}{\mtypetwo}$  for some type context $\typctxthree$.
			We can then build the derivation 
			\begin{equation*}
			\tderiv = 
			\begin{prooftree}
			\hypo{}
			\ellipsis{$\tderivtwo$}{\typctxtwo, \var \hastype \mtypetwo \vdash \firetwo \hastype \emptytype}
			\hypo{}
			\ellipsis{$\tderivthree$}{\typctxthree \vdash \itm \hastype \mtypetwo}
			\infer2[\footnotesize$\ruleES$]{\typctxtwo \mplus \typctxthree \vdash \firetwo \esub{\var} {\itm} \hastype \emptytype}				
			\end{prooftree}
			\end{equation*}
			\qedhere
		\end{itemize}
\end{proof}

\begin{theorem}[Open completeness]
	\label{thmappendix:open-completeness}
	\NoteState{thm:open-completeness}
	Let $\deriv \colon \tm \tovsubo^* \fire$ be an $\osym$-normalizing evaluation.
	Then there is a derivation $\concl{\tderiv}{\typctx}{\tm}{\emptytype}$.
\end{theorem}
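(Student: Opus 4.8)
The plan is to prove this by a straightforward induction on the length $\size{\deriv}$ of the $\osym$-normalizing evaluation $\deriv \colon \tm \tovsubo^* \fire$, following the same pattern announced in the paper for completeness results: a lemma about normal forms provides the base case, and subject expansion propagates typability backwards along reduction.

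For the base case $\size{\deriv} = 0$, we have $\tm = \fire$, so $\tm$ is an $\osym$-normal fireball and the typability of open normal forms (\refpropp{precise-open-typability-nf}{fireball}) directly yields a type context $\typctx$ together with a derivation $\concl{\tderiv}{\typctx}{\tm}{\emptytype}$, which is exactly what is needed. For the inductive step, decompose $\deriv$ as $\tm \tovsubo \tmtwo \tovsubo^* \fire$, where the tail $\deriv' \colon \tmtwo \tovsubo^* \fire$ is an $\osym$-normalizing evaluation of length $\size{\deriv} - 1$. By the induction hypothesis applied to $\deriv'$, there is a derivation $\concl{\tderivtwo}{\typctx}{\tmtwo}{\emptytype}$. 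Since $\tm \tovsubo \tmtwo$ entails $\tm \tovsub \tmtwo$ (open reduction is a sub-relation of full reduction), subject expansion (\refpropp{qual-subject}{expansion}) transforms $\tderivtwo$ into a derivation $\concl{\tderiv}{\typctx}{\tm}{\emptytype}$, concluding the induction.

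I do not expect any genuine obstacle here: both ingredients — the fireball part of the typability lemma and subject expansion — are already established earlier, and the statement follows by gluing them along the evaluation. The only subtle points worth flagging are that subject expansion is invoked in its general form, proved for the whole reduction $\tovsub$ rather than just $\tovsubo$ (so that the step $\tm \tovsubo \tmtwo$ is covered), and that both the type context $\typctx$ and the target type $\emptytype$ are preserved unchanged by each expansion step, so the same pair threads through the entire induction without needing to be adjusted. This mirrors exactly the structure of the correctness proof (\Cref{thm:open-correctness}), with subject reduction and the measure $\size{\tderiv}$ replaced by subject expansion and induction on $\size{\deriv}$.
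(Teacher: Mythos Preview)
Your proposal is correct and follows essentially the same approach as the paper: induction on $\size{\deriv}$, with the base case given by \refpropp{precise-open-typability-nf}{fireball} and the inductive step by subject expansion (\refpropp{qual-subject}{expansion}). The paper's proof is identical in structure and in the lemmas invoked.
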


\begin{proof}
	By induction on the length $\size{\deriv}$ of the $\osym$-evaluation $\deriv$.
	
	If $\size{\deriv} = 0$ then $\tm = \fire$ is $\osym$-normal and hence a fireball.
	According to typability of fireballs (\Cref{prop:precise-open-typability-nf}), there is a derivation $\concl{\tderiv}{\typctx}{\tm}{\emptytype}$.
	
	Otherwise, $\size{\deriv} > 0$ and $\deriv$ is the concatenation of a first step $\tm \tovsubo \tmtwo$ and an evaluation $\deriv' \colon \tmtwo \tovsubo^* \fire$.
	By \ih, there is a derivation $\concl{\tderivtwo}{\typctx}{\tmtwo}{\emptytype}$. 
	By subject expansion (\refpropp{qual-subject}{expansion}), there is a derivation 
	$\concl{\tderiv}{\typctx}{\tm}{\emptytype}$.\qedhere 
\end{proof}

\section{Proofs of \Cref{sect:shrinking}}

\Cref{rmk:merge-split-coshrinking} and \Cref{l:spread-shrinking} below are  used to prove both correctness (\Cref{thm:correctness}) and completeness (\Cref{thm:completeness}) 

\begin{remark}[Merging and splitting \leftsh shrinkingess]
	\label{rmk:merge-split-coshrinking}
	Let $\mtype, \mtypetwo, \mtypethree$ be multi types with $\mtype = \mtypetwo \mplus \mtypethree$; then $\mtype$ is \leftsh   iff $\mtypetwo$ and $\mtypethree$ are \leftsh  .
	Similarly for type contexts.
\end{remark}

\begin{lemma}
	[Spreading of \leftsh shrinkingness]
	\label{lappendix:spread-shrinking}
	\NoteState{l:spread-shrinking}
Let $\concl{\tderiv}{\typctx}{\ptm}{\mtype}$ be a derivation and $\ptm$ a \pointed term. 
		If $\typctx$ is \leftsh then $\mtype$ is \leftsh.
	\end{lemma}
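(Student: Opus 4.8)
The plan is to prove the lemma by structural induction on the \pointed term $\ptm$, following its grammar $\ptm \grameq \var \mid \ptm\tm \mid \ptm\esub{\var}{\ptm}$. The key observation is that the shape of $\ptm$ determines the last rule of $\tderiv$: a variable is typed by a $\ruleMany$ rule gathering copies of $\ruleAx$, an application necessarily ends with $\ruleAp$, and an explicit substitution with $\ruleES$ --- in particular $\tderiv$ can never end with $\ruleFun$, since no \pointed term is an abstraction. In the variable case $\ptm = \var$ one has $\typctx(\var) = \mtype$ and $\dom{\typctx} \subseteq \{\var\}$ (the degenerate empty $\ruleMany$ rule gives $\mtype = \emptymset$, which is a left multi type anyway), so $\typctx$ \leftsh forces $\mtype = \typctx(\var)$ \leftsh. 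Throughout I would freely use \Cref{rmk:merge-split-coshrinking}: a sum of type contexts is \leftsh if and only if each summand is.

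For the application case $\ptm = \ptm_1\,\tm$, the derivation ends with $\ruleAp$ from premises $\concl{\tderivtwo}{\typctxtwo}{\ptm_1}{\mset{\larrow{\mtypetwo}{\mtype}}}$ and $\concl{\tderivthree}{\typctxthree}{\tm}{\mtypetwo}$ with $\typctx = \typctxtwo \mplus \typctxthree$. By \Cref{rmk:merge-split-coshrinking} the context $\typctxtwo$ is \leftsh, so the \ih applied to the \pointed term $\ptm_1$ gives that $\mset{\larrow{\mtypetwo}{\mtype}}$ is \leftsh, i.e. that $\larrow{\mtypetwo}{\mtype}$ is a left linear type. By the grammar of \Cref{fig:shrinking} a left linear arrow has shape $\larrow{\rmtype}{\lmtype}$, so in particular its target $\mtype$ is a left multi type, as wanted.

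For the explicit-substitution case $\ptm = \ptm_1\esub{\var}{\ptm_2}$ with $\ptm_1, \ptm_2$ \pointed, the derivation ends with $\ruleES$ from premises $\concl{\tderivtwo}{\typctxtwo, \var\hastype\mtypetwo}{\ptm_1}{\mtype}$ and $\concl{\tderivthree}{\typctxthree}{\ptm_2}{\mtypetwo}$ with $\typctx = \typctxtwo \mplus \typctxthree$; by \Cref{rmk:merge-split-coshrinking} both $\typctxtwo$ and $\typctxthree$ are \leftsh. The only delicate point of the whole proof is the order in which the \ih is invoked here: first I would apply it to $\ptm_2$ using that $\typctxthree$ is \leftsh, obtaining that $\mtypetwo$ is \leftsh; then $\typctxtwo, \var\hastype\mtypetwo$ is a \leftsh type context, so the \ih applied to $\ptm_1$ yields $\mtype$ \leftsh. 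I do not expect a real obstacle: the statement is short, and its significance is downstream, as it is precisely the ingredient that makes shrinking quantitative subject reduction (\Cref{prop:shrinking-subject-reduction}) go through when the evaluation takes place inside a rigid term.
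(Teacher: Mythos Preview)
Your proposal is correct and matches the paper's proof essentially step for step: both argue by structural induction on the rigid term, handle the variable case by observing $\typctx(\var)=\mtype$, the application case via the \ih on the left premise, and the explicit-substitution case by first applying the \ih to the substituted rigid term to make the extended context \leftsh before invoking the \ih on the body. Your remark about the order of the two \ih calls in the ES case is exactly the point the paper highlights as well.
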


\begin{proof}
	By induction on the definition of the \pointed term $\ptm$. 
	Cases:
	\begin{itemize}
		\item \emph{Variable}, \ie $\ptm = \var$. 
		Then necessarily, for some $n \in \nat$, 
		\begin{equation*}
		\tderiv = 
		\begin{prooftree}
		\infer0[\footnotesize$\Ax$]{\var \hastype \mset{\ltype_1} \vdash \var \hastype \ltype_1}
		\hypo{\overset{n \in \nat}{\ldots}}
		\infer0[\footnotesize{$\Ax$}]{\var \hastype \mset{\ltype_n} \vdash \var \hastype \ltype_n}
		\infer3[\footnotesize$\ruleManyVar$]{\typctx \vdash \var \hastype \mtype}
		\end{prooftree}
		\end{equation*}
		with $\mtype = \mset{\ltype_1, \dots, \ltype_n}$ and $\typctx = \var \hastype \mtype$.
		Since $\typctx$ is a \leftsh   type context, $\mtype$ is a \leftsh   multi type.
		
		\item \emph{Application}, \ie $\ptm = \ptmtwo \tm$ where $\ptmtwo$ is a \pointed term.
		The derivation $\tderiv$ is necessarily (with $\typctx = \typctxtwo \mplus \typctxthree$)
		\begin{equation*}
		\tderiv = 
		\begin{prooftree}
		\hypo{}
		\ellipsis{$\tderivtwo$}{\typctxtwo \vdash \ptmtwo \hastype \mset{\larrow{\mtypetwo}{\mtype}}}
		\hypo{}
		\ellipsis{$\tderivthree$}{\typctxthree \vdash \tm \hastype \mtypetwo}
		\infer2[\footnotesize$\ruleAp$]{\typctxtwo \uplus \typctxthree \vdash \ptmtwo \tm \hastype \mtype}
		\end{prooftree}
		\end{equation*}
		where $\typctxtwo$ and $\typctxthree$ are \leftsh   type contexts by 
		\refrmk{merge-split-coshrinking}. 
		By \ih, $\mset{\larrow{\mtypetwo}{\mtype}}$ is a \leftsh   multi type, and hence $\mtype$ is a \leftsh   multi type.
		
		\item \emph{Explicit substitution}, \ie $\ptm = \ptmtwo \esub{\var}{\ptmthree}$ where $\ptmtwo$ and $\ptmthree$ are 
		\pointed terms.
		The derivation $\tderiv$ is necessarily (with $\typctx = \typctxtwo \mplus \typctxthree$)
		\begin{equation*}
		\tderiv = 
		\begin{prooftree}
		\hypo{}
		\ellipsis{$\tderivtwo$}{\typctxtwo, \var \hastype \mtypetwo \vdash \ptmtwo \hastype \mtype}
		\hypo{}
		\ellipsis{$\tderivthree$}{\typctxthree \vdash \ptmthree \hastype \mtypetwo}
		\infer2[\footnotesize$\Es$]{\typctxtwo \uplus \typctxthree \vdash \ptm\esub{\var}{\ptmthree} \hastype \mtype}
		\end{prooftree}
		\end{equation*}
		where $\typctxtwo$ and $\typctxthree$ are \leftsh   contexts by 
		\refrmk{merge-split-coshrinking}. 
		By \ih applied to $\tderivthree$ (as $\ptmthree$ is a \pointed term), $\mtypetwo$ is a \leftsh   multi type, and hence $\typctxtwo, \var \hastype \mtypetwo$ is a \leftsh   type context.
		By \ih applied to $\tderivtwo$ (as $\ptmtwo$ is a \pointed term), $\mtype$ is a \leftsh   multi type.
		\qedhere
	\end{itemize}
\end{proof}

\subsection{Correctness}

We split the proof of shrinking quantitative subject reduction in two cases:
\begin{enumerate}
	\item \emph{Open:} after a $\tovsubo$ step any type derivation strictly shrinks its size 
	(\Cref{prop:weak-subject-reduction});
	\item \emph{Strong:} after a $\tovsubs$ step the derivation strictly shrinks its size if it is shrinking 
	(\Cref{prop:shrinking-subject-reduction}). 
\end{enumerate}

The open case (\Cref{prop:weak-subject-reduction}) represents the base case for the strong case 
(\Cref{prop:shrinking-subject-reduction}).  
To prove the base case, we need a substitution lemma (\Cref{l:substitution}) to handle the exponential step, which in turn relies on \Cref{l:typing-value-splitting}. We also need the spreading of left types (\Cref{l:spread-shrinking}).

\begin{proposition}[Shrinking quantitative subject reduction for $\toess$]
	\label{propappendix:shrinking-subject-reduction}
	\NoteState{prop:shrinking-subject-reduction}
	Let $\typctx$ be a \leftsh   context.
	Suppose that $\concl{\tderiv}{\typctx}{\tm}{\mtype}$ and that if $\tm$ is a \valES then $\mtype$ is \rightsh. If $\tm \toess \tm'$ then there is a derivation $\namedtyjp{\tderiv'}{}{\tm'}{\typctx}{\mtype}$ with $\size{\tderiv} > \size{\tderiv'}$.
\end{proposition}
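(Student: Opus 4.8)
The plan is to prove the (auxiliary) statement by induction on the external context $\extctx$ realizing the step, i.e.\ such that $\tm = \extctxp{\tmtwo} \toess \extctxp{\tmtwo'} = \tm'$ with $\tmtwo \tovsubo \tmtwo'$ an open step; the induction follows the grammar of external contexts, defined mutually with that of rigid contexts. Two facts are used throughout. First, by \Cref{rmk:merge-split-coshrinking} every sub-context of a \leftsh{} type context is \leftsh{} (and likewise for multi types), which propagates the hypothesis ``$\typctx$ is \leftsh'' to the premises of $\ruleAp$ and $\ruleES$ after inversion. Second, plugging \emph{any} term into a rigid context yields a \pointed term (immediate by induction on rigid contexts), and no \pointed term is an \valES{} (an answer has a value at its head once explicit substitutions are stripped, while a \pointed term has a variable there); hence, whenever the \ih must be invoked at a position contained in a rigid sub-context, the side hypothesis ``if the term is an \valES{} then its type is \rightsh'' is vacuously met.

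For the base case $\extctx = \ctxhole$, the step is an open step $\tm \tovsubo \tm'$, and open quantitative subject reduction (\Cref{prop:weak-subject-reduction}) directly gives $\namedtyjp{\tderiv'}{}{\tm'}{\typctx}{\mtype}$ with $\size{\tderiv} > \size{\tderiv'}$, using neither shrinking hypothesis. For each inductive case I would invert the last rule of $\tderiv$ ($\ruleManyVal$ over $\ruleFun$ for an abstraction context, $\ruleAp$ for an application context, $\ruleES$ for an explicit-substitution context), apply the \ih to the premise carrying the redex, and rebuild with a strict size decrease inherited from that premise. When the redex lies inside a rigid sub-context --- i.e.\ $\extctx$ is $\tm''\esub\var{\ictx'}$, $\ictx'\,\tm''$ (rigid context applied to a term), $\ictx'\esub\var{\ptm}$, or $\ptm\esub\var{\ictx'}$ --- the \valES{} premise of the \ih is vacuous by the second fact above, and the context of the recursive premise is \leftsh{} either by plain splitting or, in the sub-case $\ictx'\esub\var{\ptm}$, after applying spreading of \leftsh{} shrinkingness (\Cref{l:spread-shrinking}) to the sibling premise typing $\ptm$ to learn that the substituted multi type is \leftsh. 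Note that $\ictx'\,\tm''$ is precisely the case that \emph{forces} the weaker auxiliary hypothesis: there the \ih is applied to a premise typing $\ictx'[\tmtwo]$ with a singleton arrow multi type about which nothing is known, so the statement must not demand that type to be \rightsh. When instead the redex lies inside an external sub-context of an application or an ES --- $\extctx = \ptm\,\extctx'$ (already in the body of \Cref{prop:shrinking-subject-reduction}) or $\extctx = \extctx'\esub\var{\ptm}$ --- spreading applied to the premise typing $\ptm$ yields that the relevant argument multi type is \rightsh{} (in the application) or \leftsh{} (in the ES), so that the \ih applies; in the ES case one additionally uses that $\tm$ is an \valES{} iff its left subterm is (prepending or removing a top-level ES preserves being an answer), which passes the hypothesis on $\mtype$ to the \ih.

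The case $\extctx = \la\var{\extctx'}$ is where the \valES{} clause does genuine work, so I would spell it out. Here $\tm$ is a value, hence an \valES, so by hypothesis $\mtype$ is \rightsh; thus $\mtype = \biguplus_{i \in I}\mset{\larrow{\mtype_i}{\mtypetwo_i}}$ with $I \neq \emptyset$ (right multi types are non-empty) and each $\larrow{\mtype_i}{\mtypetwo_i}$ \rightsh, i.e.\ each $\mtype_i$ \leftsh{} and each $\mtypetwo_i$ \rightsh. Inverting $\ruleManyVal$ and then $\ruleFun$ gives premises $\concl{\tderiv_i}{\typctx_i,\var\hastype\mtype_i}{\tm_0}{\mtypetwo_i}$ where $\tm_0$ is the subterm under the abstraction and $\biguplus_{i \in I}\typctx_i = \typctx$; each context $\typctx_i,\var\hastype\mtype_i$ is \leftsh{} (splitting, plus $\mtype_i$ \leftsh), and the \valES{} premise of the \ih holds because $\mtypetwo_i$ is already \rightsh. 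The \ih yields $\tderiv_i'$ with $\size{\tderiv_i'} < \size{\tderiv_i}$, and reassembling through $\ruleFun$ and $\ruleManyVal$ produces $\tderiv'$ with $\size{\tderiv'} = \sum_{i \in I}(\size{\tderiv_i'}+1) < \sum_{i \in I}(\size{\tderiv_i}+1) = \size{\tderiv}$ --- the inequality being strict \emph{precisely} because $I \neq \emptyset$, which is exactly what \rightsh{}-ness of $\mtype$ buys, and what fails for $\la\var\Omega$ (typed with $\emptymset$ by an empty $\ruleManyVal$, of size $0$). With all cases in hand, the actual statement of \Cref{prop:shrinking-subject-reduction} follows at once, since a shrinking derivation has $\typctx$ \leftsh{} and $\mtype$ \rightsh, hence is in particular an instance of the auxiliary statement.

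The main obstacle is not any individual case but making the three strengthenings interlock: the hypothesis on $\mtype$ must be \emph{weak} enough to survive the \ih in the ``rigid context applied to a term'' case (no control on the head's arrow type there), while spreading of \leftsh{} shrinkingness (\Cref{l:spread-shrinking}) must be strong enough to re-establish the correct polarity of the argument multi types every time the induction descends below a \pointed head, and the \valES{} clause must intercept the abstraction case so that a strict premise decrease is not absorbed by an empty $\ruleManyVal$. Getting these three to fit together is the delicate point of the \cbn proof technique of \cite{DBLP:journals/pacmpl/AccattoliGK18} that we port to \cbv; everything else is routine bookkeeping with inversions and context splittings.
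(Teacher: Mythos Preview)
Your proposal is correct and follows essentially the same approach as the paper: induction on the external/rigid context realizing the step, with the hole case delegated to open quantitative subject reduction (\Cref{prop:weak-subject-reduction}), the abstraction case exploiting that \rightsh{}ness of $\mtype$ forces a non-empty $\ruleManyVal$, and the remaining cases handled by splitting the \leftsh{} context, invoking spreading (\Cref{l:spread-shrinking}) on \pointed heads to recover the needed polarity, and using that plugging into a rigid context yields a \pointed (hence non-\valES) term so the side hypothesis is vacuous there. Your identification of $\ictx'\,\tm''$ as the case forcing the weakened auxiliary hypothesis and of $\extctx'\esub\var\ptm$ as the case needing the ``\valES{} iff left subterm is an \valES'' observation matches the paper exactly.
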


\begin{proof}
	The proof is by induction on the evaluation strong context $\strongctx$ such that $\tm = \strongctxp{\tmtwo} \toess 
	\strongctxp{\tmtwo'} = \tm'$ with $\tmtwo \tovsubo \tmtwo'$. 
	Cases for $\strongctx$ (excluding the two cases that are given in the body of the paper):
	\begin{itemize}
		\item \emph{Hole}, \ie{} $\strongctx = \ctxhole$ and $\tm \tovsubo  \tm'$.
		According to open subject reduction (\Cref{prop:weak-subject-reduction}), there exists a derivation $\concl{\tderiv'}{\typctx}{\tm'}{\mtype}$ such that $\size{\tderiv'} < \size{\tderiv}$.
		
\item \emph{Abstraction}, \ie $\strongctx = \la{\var}{\strongctxtwo}$. 
		So, $\tm = \strongctxp{\tmtwo} = \la{\var}{\strongctxtwop{\tmtwo}} \toess 
		\la{\var}{\strongctxtwop{\tmtwo'}} = \strongctxp{\tmtwo'} = \tm'$ with $\tmtwo \tovsubo \tmtwo'$.
		Since $\tm$ is an \valES, $\mtype$ is a \rightsh multi type by hypothesis and hence it has the form $\mtype = 
		\mset{\larrow{\mtypethree_1}{\mtypetwo_1}, \dots, \larrow{\mtypethree_n}{\mtypetwo_n}}$ for some $n > 0$, where 
		$\mtypethree_i$ is \leftsh and $\mtypetwo_i$ is \rightsh for all $1 \leq i \leq n$.
		Thus, the derivation $\tderiv$ has the following form:
		\begin{equation*}
		\tderiv = 
		\begin{prooftree}[separation=1em]
		\hypo{}
		\ellipsis{$\tderivtwo_i$}{\typctx_i, \var \hastype \mtypethree_i \vdash \strongctxtwop{\tmtwo} \hastype \mtypetwo_i}
		\infer1[\footnotesize$\lambda$]{\typctx_i \vdash \la{\var}\strongctxtwop{\tmtwo} \hastype 
			\larrow{\mtypethree_i}{\mtypetwo_i}}
		\delims{ \left( }{ \right)_{1 \leq i \leq n} }
		\infer1[\footnotesize$\ruleManyVal$]{ \bigmplus_{i=1}^n \typctx_{i} \vdash \la{\var}\strongctxtwop{\tmtwo} \hastype 
			\bigmplus_{i=1}^n \mset{\larrow{\mtypethree_i}{\mtypetwo_i}}}
		\end{prooftree}
		\end{equation*}
		For all $1 \leq i \leq n$, by \ih (as $\typctx_i, \var \hastype \mtypethree_i$ is a \leftsh type context and $\mtypetwo_i$ is a \rightsh multi type), there is a derivation $\concl{\tderivtwo_i'}{\typctx_i, \var \hastype 
			\mtypethree_i}{\strongctxtwop{\tmtwo'}}{\mtypetwo_i}$ with $\size{\tderivtwo_i'} < \size{\tderivtwo_i}$.
		We can then build the derivation 
		\begin{equation*}
		\tderiv' = 
		\begin{prooftree}[separation=1em]
		\hypo{}
		\ellipsis{$\tderivtwop_i$}{\typctx_i, \var \hastype \mtypethree_i \vdash \strongctxtwop{\tmtwo'} \hastype 
			\mtypetwo_i}
		\infer1[\footnotesize$\lambda$]{\typctx_i \vdash \la{\var}\strongctxtwop{\tmtwo'} \hastype 
			\larrow{\mtypethree_i}{\mtypetwo_i}}
		\delims{ \left( }{ \right)_{1 \leq i \leq n} }
		\infer1[\footnotesize$\ruleManyVal$]{ \bigmplus_{i=1}^n \typctx_{i} \vdash \la{\var}\strongctxtwop{\tmtwop} 
			\hastype  \bigmplus_{i=1}^n \mset{\larrow{\mtypethree_i}{\mtypetwo_i}}}
		\end{prooftree}
		\end{equation*}
		where $\size{\tderiv'} = \sum_{i=1}^n(\size{\tderivtwo_i'} +1) < \sum_{i=1}^n(\size{\tderivtwo_i} + 1) = \size{\tderiv}$.
		
		\item \emph{Explicit substitution of rigid context}, \ie $\strongctx = \tmthree \esub{\var}{\ictx}$. 
		So, $\tm = \strongctxp{\tmtwo} = \tmthree\esub{\var}{\ictxp{\tmtwo}} \toess
		\tmthree\esub{\var}{\ictxp{\tmtwo'}} = \strongctxp{\tmtwo'} = \tm'$ with $\tmtwo \tovsubo \tmtwo'$.
		Then, necessarily:
		\begin{equation*}
		\tderiv = 
		\begin{prooftree}
		\hypo{}
		\ellipsis{$\tderivtwo$}{\typctxtwo, \var \hastype \mtypetwo \vdash \tmthree \hastype \mtype}
		\hypo{}
		\ellipsis{$\tderivthree$}{\typctxthree \vdash \ictxp{\tmtwo} \hastype \mtypetwo}
		\infer2[\footnotesize$\Es$]{\typctxtwo \uplus \typctxthree \vdash \tmthree \esub{\var}{\ictxp{\tmtwo}} \hastype 
			\mtype}
		\end{prooftree}
		\end{equation*}
		with $\typctx = \typctxtwo \uplus \typctxthree$  \leftsh by hypothesis, and then so is $\typctxthree$ by \Cref{rmk:merge-split-coshrinking}.
		By \ih applied to $\tderivthree$ (as $\ictxp{\tmtwo}$ is not an \valES), there is a derivation 
		$\concl{\tderivthree'}{\typctxthree}{\ictxp{\tmtwo'}}{\mtypetwo}$~~with $\size{\tderivthree'} < \size{\tderivthree}$.
		We can then build the following derivation: 
		\begin{equation*}
		\tderiv' = 
		\begin{prooftree}
		\hypo{}
		\ellipsis{$\tderivtwo$}{\typctxtwo, \var \hastype \mtypetwo \vdash \tmthree \hastype \mtype}
		\hypo{}
		\ellipsis{$\tderivthree'$}{\typctxthree \vdash \ictxp{\tmtwo'} \hastype \mtypetwo}
		\infer2[\footnotesize$\Es$]{\typctxtwo \uplus \typctxthree \vdash \tmthree \esub{\var}{\ictxp{\tmtwo'} } \hastype 
			\mtype}
		\end{prooftree}
		\end{equation*}
		where $\typctx = \typctxtwo \uplus \typctxthree$ and $\size{\tderiv'} = \size{\tderivtwo} + \size{\tderivthree'} +1 < \size{\tderivtwo} + \size{\tderivthree} +1 = \size{\tderiv}$.
		
		\item \emph{Strong context with explicit substitution of rigid term}, \ie $\strongctx = 
		\strongctxtwo\esub{\var}{\ptm}$. 
		Then, $\tm = \strongctxp{\tmtwo} = \strongctxtwop{\tmtwo} \esub{\var}{\ptm} \toess 
		\strongctxtwop{\tmtwo'}\esub{\var}{\ptm} = \strongctxp{\tmtwo'} = \tm'$ with $\tmtwo \tovsubo \tmtwo'$.
		The derivation $\tderiv$ has the following shape:
		\begin{equation*}
		\tderiv = 
		\begin{prooftree}
		\hypo{}
		\ellipsis{$\tderivtwo$}{\typctxtwo, \var \hastype \mtypetwo \vdash \strongctxtwop{\tmtwo} \hastype \mtype}
		\hypo{}
		\ellipsis{$\tderivthree$}{\typctxthree \vdash \ptm \hastype \mtypetwo}
		\infer2[\footnotesize$\Es$]{\typctxtwo \uplus \typctxthree \vdash \strongctxtwop{\tmtwo} \esub{\var}{\ptm} \hastype 
			\mtype}
		\end{prooftree}
		\end{equation*}
		where $\typctx = \typctxtwo \uplus \typctxthree$ is   \leftsh by hypothesis, and then so are 	$\typctxtwo$ and $\typctxthree$ by \Cref{rmk:merge-split-coshrinking}.
		According to spreading of \leftsh shrinkingness applied to $\tderivthree$ (\Cref{l:spread-shrinking}, which can be applied because $\ptm$ is a rigid term), $\mtypetwo$ is   \leftsh.
		Note that $\strongctxtwop{\tmtwo}\esub{\var}{\ptm}$ is an \valES if and only if $\strongctxtwop{\tmtwo}$ is an \valES.
		So, the \ih can be applied to $\tderivtwo$ (since $\typctxtwo, \var \hastype \mtypetwo$ is a   \leftsh	context) and hence there is a derivation $\concl{\tderivtwo'}{\typctxtwo, \var \hastype 
			\mtypetwo}{\strongctxtwop{\tmtwo'}}{\mtype}$ with $\size{\tderivtwo'} < \size{\tderivtwo}$.
		We can then build the following derivation:
		\begin{equation*}
		\tderiv' = 
		\begin{prooftree}
		\hypo{}
		\ellipsis{$\tderivtwo'$}{\typctxtwo, \var \hastype \mtypetwo \vdash \strongctxtwop{\tmtwo'} \hastype \mtype}
		\hypo{}
		\ellipsis{$\tderivthree$}{\typctxthree \vdash \ptm \hastype \mtypetwo}
		\infer2[\footnotesize$\Es$]{\typctxtwo \uplus \typctxthree \vdash \strongctxtwop{\tmtwo'} \esub{\var}{\ptm} 
			\hastype \mtype}
		\end{prooftree}
		\end{equation*}
		where $\typctx = \typctxtwo \uplus \typctxthree$ and $\size{\tderiv'} = \size{\tderivtwo'} + \size{\tderivthree} +1 < \size{\tderivtwo} + 
			\size{\tderivthree} +1 = \size{\tderiv}$.

		\item \emph{Rigid context with explicit substitution of rigid term}, \ie $\strongctx = \ictx\esub{\var}{\ptm}$. 
		Then, $\tm = \strongctxp{\tmtwo} = \ictxp{\tmtwo} \esub{\var}{\ptm} \toess
		\ictxp{\tmtwo'}\esub{\var}{\ptm} = \strongctxp{\tmtwo'} = \tm'$ with $\tmtwo \tovsubo \tmtwo'$.
		The derivation $\tderiv$ has the following shape:
		\begin{equation*}
		\tderiv = 
		\begin{prooftree}
		\hypo{}
		\ellipsis{$\tderivtwo$}{\typctxtwo, \var \hastype \mtypetwo \vdash \ictxp{\tmtwo} \hastype \mtype}
		\hypo{}
		\ellipsis{$\tderivthree$}{\typctxthree \vdash \ptm \hastype \mtypetwo}
		\infer2[\footnotesize$\Es$]{\typctxtwo \uplus \typctxthree \vdash \ictxp{\tmtwo} \esub{\var}{\ptm} \hastype \mtype}
		\end{prooftree}
		\end{equation*}
		where $\typctx = \typctxtwo \mplus \typctxthree$ is   \leftsh by hypothesis, and then so are $\typctxtwo$ and $\typctxthree$ by \Cref{rmk:merge-split-coshrinking}.
		According to spreading of \leftsh shrinkingness applied to $\tderivthree$ (\Cref{l:spread-shrinking}, which can be applied because $\ptm$ is a rigid term), $\mtypetwo$ is   \leftsh.
		Thus, the \ih can be applied to $\tderivtwo$ (since $\typctxtwo, \var \hastype \mtypetwo$ is   \leftsh and $\ictxp{\tmtwo}$ is not an \valES) to obtain a derivation $\concl{\tderivtwo'}{\typctxtwo, \var \hastype \mtypetwo}{\ictxp{\tmtwo'}}{\mtype}$ such that $\size{\tderivtwo'} < \size{\tderivtwo}$.
		We can then build the following derivation:
		\begin{equation*}
		\tderiv' = 
		\begin{prooftree}
		\hypo{}
		\ellipsis{$\tderivtwo'$}{\typctxtwo, \var \hastype \mtypetwo \vdash \ictxp{\tmtwo'} \hastype \mtype}
		\hypo{}
		\ellipsis{$\tderivthree$}{\typctxthree \vdash \ptm \hastype \mtypetwo}
		\infer2[\footnotesize$\Es$]{\typctxtwo \uplus \typctxthree \vdash \ictxp{\tmtwo'} \esub{\var}{\ptm} \hastype \mtype}
		\end{prooftree}
		\end{equation*}
		where $\typctx = \typctxtwo \uplus \typctxthree$ and $\size{\tderiv'} = \size{\tderivtwo'} + \size{\tderivthree} < \size{\tderivtwo} + 
			\size{\tderivthree} = \size{\tderiv}$.

		\item \emph{Rigid term with explicit substitution of rigid context}, \ie $\strongctx = \ptm\esub{\var}{\ictx}$. 
		Then, $\tm = \strongctxp{\tmtwo} = \ptm\esub{\var}{\ictxp{\tmtwo}} \toess \ptm\esub{\var}{\ictxp{\tmtwo'}} 
		= \strongctxp{\tmtwo'} = \tm'$ with $\tmtwo \tovsubo \tmtwo'$.
		The derivation $\tderiv$ has the following shape:
		\begin{equation*}
		\tderiv = 
		\begin{prooftree}
		\hypo{}
		\ellipsis{$\tderivtwo$}{\typctxtwo, \var \hastype \mtypetwo \vdash \ptm \hastype \mtype}
		\hypo{}
		\ellipsis{$\tderivthree$}{\typctxthree \vdash \ictxp{\tmtwo} \hastype \mtypetwo}
		\infer2[\footnotesize$\Es$]{\typctxtwo \uplus \typctxthree \vdash \ptm \esub{\var}{\ictxp{\tmtwo}} \hastype \mtype}
		\end{prooftree}
		\end{equation*}
		with $\typctx = \typctxtwo \mplus \typctxthree$   \leftsh by hypothesis, and then so is 	$\typctxthree$ by \Cref{rmk:merge-split-coshrinking}.
		By \ih applied to $\tderivthree$ (as $\ictxp{\tmtwo}$ is not an \valES), there is a derivation 
		$\concl{\tderivthree'}{\typctxthree}{\ictxp{\tmtwo'}}{\mtypetwo}$~with $\size{\tderivthree'} < \size{\tderivthree}$.
		We can then build the following derivation: 
		\begin{equation*}
		\tderiv' = 
		\begin{prooftree}
		\hypo{}
		\ellipsis{$\tderivtwo$}{\typctxtwo, \var \hastype \mtypetwo \vdash \ptm \hastype \mtype}
		\hypo{}
		\ellipsis{$\tderivthree'$}{\typctxthree \vdash \ictxp{\tmtwo'} \hastype \mtypetwo}
		\infer2[\footnotesize$\Es$]{\typctxtwo \uplus \typctxthree \vdash \ptm \esub{\var}{\ictxp{\tmtwo'} } \hastype 
			\mtype}
		\end{prooftree}
		\end{equation*}
		where $\typctx = \typctxtwo \uplus \typctxthree$ and $\size{\tderiv'} = \size{\tderivtwo} + \size{\tderivthree'} +1 < \size{\tderivtwo} + 
			\size{\tderivthree} +1 = \size{\tderiv}$.\qedhere
	\end{itemize}
\end{proof}

\begin{theorem}[Shrinking correctness for $\toess$]
	\label{thmappendix:correctness}
	\NoteState{thm:correctness}
	Let
	$\derive{\tderiv}{\tm}$ be a shrinking derivation.
	Then there is a $\esssym$-normalizing evaluation $\deriv \colon \tm \toess^* \sfire$ with $\size{\deriv} \leq \size{\tderiv}$.
\end{theorem}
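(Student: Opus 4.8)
The plan is to mirror exactly the structure of the proof of Open correctness (\Cref{thm:open-correctness}): an induction on the size $\size{\tderiv}$ of the shrinking derivation, using shrinking quantitative subject reduction (\Cref{prop:shrinking-subject-reduction}) as the engine that makes the measure decrease, and the fullness property (\refpropp{external-properties}{fullness}) together with harmony (\refpropp{properties-full-reduction}{harmony}) to identify the normal form reached as a strong fireball.

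Concretely, given a shrinking derivation $\concl{\tderiv}{\typctx}{\tm}{\mtype}$, first I would split on whether $\tm$ is $\esssym$-normal. If it is, then by \refpropp{external-properties}{fullness} the term $\tm$ is also $\vsub$-normal, hence by \refpropp{properties-full-reduction}{harmony} it is a strong fireball $\sfire$, and the empty evaluation $\deriv \colon \tm \toess^* \sfire$ works, with $\size{\deriv} = 0 \leq \size{\tderiv}$. Otherwise $\tm \toess \tmtwo$ for some $\tmtwo$. I would then apply the \emph{actual statement} of shrinking quantitative subject reduction (\Cref{prop:shrinking-subject-reduction}) to get a shrinking derivation $\concl{\tderivtwo}{\typctx}{\tmtwo}{\mtype}$ with $\size{\tderivtwo} < \size{\tderiv}$; crucially this step preserves the final judgement, so $\typctx$ is still left and $\mtype$ is still right, i.e.\ $\tderivtwo$ is again shrinking, which is what lets the induction go through. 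By the induction hypothesis there is an $\esssym$-normalizing evaluation $\deriv' \colon \tmtwo \toess^* \sfire$ with $\size{\deriv'} \leq \size{\tderivtwo}$. Concatenating the first step with $\deriv'$ yields $\deriv \colon \tm \toess \tmtwo \toess^* \sfire$ with $\size{\deriv} = 1 + \size{\deriv'} \leq 1 + \size{\tderivtwo} \leq \size{\tderiv}$, as required.

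The only genuinely delicate ingredient is \Cref{prop:shrinking-subject-reduction}, and that is already established (its proof is the heart of the shrinking technique, handled via the auxiliary statement and the spreading lemma \Cref{l:spread-shrinking}); relative to that, the argument here is a routine well-founded induction, so I do not expect any real obstacle beyond making sure to invoke the \emph{actual} (shrinking-preserving) statement of \Cref{prop:shrinking-subject-reduction} rather than the auxiliary one, and to recall that shrinkingness depends only on the conclusion so that it is transported unchanged along the reduction step.
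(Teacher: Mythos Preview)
Your proof is correct and follows essentially the same approach as the paper: induction on $\size{\tderiv}$, using shrinking quantitative subject reduction for the inductive step and identifying $\esssym$-normal forms as strong fireballs in the base case. You are in fact slightly more explicit than the paper, which simply asserts that an $\esssym$-normal term is a strong fireball and leaves implicit the appeal to fullness and harmony that you spell out.
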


\begin{proof}
	Given the derivation $\concl{\tderiv}{\typctx}{\tm}{\mtype}$, the proof is by induction on the size $\size{\tderiv}$ of $\tderiv$.
	If $\tm$ is normal for $\toess$, then $\tm = \sfire$ is a strong fireball.	
	Otherwise, $\tm$ is not normal for $\toess$ and so $\tm \toess \tmtwo$.
	According to shrinking subject reduction (\Cref{prop:shrinking-subject-reduction}), there is a derivation $\concl{\tderivtwo}{\typctx}{\tmtwo}{\mtype}$ such that $\size{\tderivtwo} < \size{\tderiv}$.
	By \ih, there exists a strong fireball $\sfire$ and a reduction sequence $\deriv' \colon \tmtwo \toess^* \sfire$.
	The $\esssym$-evaluation $\deriv$ of the statement is obtained by concatenating the first step $\tm \toess \tmtwo$ and $\deriv'$.		\qedhere
\end{proof}

\subsection{Completeness}

The proof of \Cref{prop:typability-normal} relies on the presence of the ground type $\ground$ that is both unitary \leftsh and unitary \rightsh. 

\begin{lemma}[Shrinking typability of normal forms]\hfill
	\label{propappendix:typability-normal}
	\NoteState{prop:typability-normal}
	\begin{enumerate}
		\item\label{papppendix:typability-normal-inert}
		\emph{Inert:}
		for every \full inert term $\sitm$ and \leftsh multi type $\mtype$,
		there exist a \leftsh type context $\typctx$ 
		and a derivation $\concl{\tderiv}{\typctx}{\sitm}{\mtype}$.

		\item\label{pappendix:typability-normal-fireball}
		\emph{Fireball:}
		for every \full fireball $\sfire$ 
		there exists a shrinking derivation $\derive{\tderiv}{\sfire}$.
	\end{enumerate}
\end{lemma}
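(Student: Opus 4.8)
The plan is to prove the two items \emph{simultaneously}, by mutual induction on the definitions of \full inert terms and \full fireballs, mirroring the proof of the open case (\Cref{prop:precise-open-typability-nf}) and carrying left/right shrinkingness along every step. As in the open case, the inert item is the stronger one: it produces a derivation for \emph{any} prescribed \leftsh multi type $\mtype$, and this freedom is exactly what lets the recursion close when a fireball occurs as an argument. The single genuinely new ingredient is that $\emptymset$ can no longer serve as the default conclusion type, since it is not \rightsh; its role is taken over by the ground type $\ground$, noting that $\mset\ground$ is both \leftsh and \rightsh, so that the inert item (instantiated at $\mset\ground$) feeds shrinking derivations back into the fireball item.

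For the inert item I would split on the shape of $\sitm$. If $\sitm = \var$, given a \leftsh $\mtype$ I glue the appropriate axioms with a $\ruleMany$ rule and obtain the \leftsh type context $\var\hastype\mtype$. If $\sitm = \sitmtwo\,\sfire$, I first apply the fireball \ih to $\sfire$, getting a shrinking derivation $\concl{\tderivthree}{\typctxthree}{\sfire}{\mtypetwo}$ with $\typctxthree$ \leftsh and $\mtypetwo$ \rightsh; then $\mset{\larrow{\mtypetwo}{\mtype}}$ is \leftsh (its argument $\mtypetwo$ is \rightsh, its result $\mtype$ is \leftsh), so I may apply the inert \ih to $\sitmtwo$ with this multi type, getting a \leftsh $\typctxtwo$, and a $\ruleAp$ step yields the \leftsh context $\typctxtwo\uplus\typctxthree$ for $\sitm$. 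If $\sitm = \sitmtwo\esub{\var}{\sitmthree}$, I apply the inert \ih to $\sitmtwo$ with $\mtype$, read off the type $\mtypetwo$ it assigns to $\var$ (which is \leftsh, the context being \leftsh), apply the inert \ih again to $\sitmthree$ with $\mtypetwo$, and glue the two derivations with $\ruleES$; left-shrinkingness of the merged context is \Cref{rmk:merge-split-coshrinking}.

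For the fireball item, I split on the shape of $\sfire$. If $\sfire$ is a \full inert term, I invoke the inert item with $\mtype\defeq\mset\ground$: since $\mset\ground$ is both \leftsh and \rightsh, the resulting derivation is shrinking. If $\sfire = \sfiretwo\esub{\var}{\sitm}$, I apply the fireball \ih to $\sfiretwo$, obtaining a shrinking $\concl{\tderivtwo}{\typctxtwo,\var\hastype\mtypetwo}{\sfiretwo}{\mtype}$ (so $\mtypetwo$ is \leftsh and $\mtype$ is \rightsh), then the inert item to $\sitm$ with the \leftsh $\mtypetwo$, and close with $\ruleES$; the conclusion $\mtype$ stays \rightsh and the merged context stays \leftsh.

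The delicate case — the one I expect to be the main obstacle — is the \full value $\sfire = \la\var\sfiretwo$, since the open-case shortcut of closing with an empty $\ruleManyVal$ rule is no longer available ($\emptymset$ is not \rightsh), so a non-empty conclusion type must be built. The move is to apply the fireball \ih to the body $\sfiretwo$, getting a shrinking derivation $\concl{\tderivtwo}{\typctx,\var\hastype\lmtype}{\sfiretwo}{\mtypetwo}$ with $\typctx$ and $\lmtype$ \leftsh and $\mtypetwo$ \rightsh; then a $\ruleFun$ step gives $\typctx\vdash\la\var\sfiretwo\hastype\larrow{\lmtype}{\mtypetwo}$, and a $\ruleManyVal$ rule with a \emph{single} premise gives $\typctx\vdash\la\var\sfiretwo\hastype\mset{\larrow{\lmtype}{\mtypetwo}}$. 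This derivation is shrinking: $\mset{\larrow{\lmtype}{\mtypetwo}}$ is non-empty and $\larrow{\lmtype}{\mtypetwo}$ is a \rightsh linear type (its argument $\lmtype$ is \leftsh, its result $\mtypetwo$ is \rightsh), while $\typctx$ is \leftsh. Once this case is in place, everything else is routine \Cref{rmk:merge-split-coshrinking}-style bookkeeping.
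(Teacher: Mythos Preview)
Your proposal is correct and follows essentially the same approach as the paper's proof: the same mutual induction on \full inert terms and \full fireballs, the same use of $\mset\ground$ (both \leftsh and \rightsh) to bridge the inert item into the fireball item, and the same single-premise $\ruleManyVal$ construction for the abstraction case. The case analysis and the shrinkingness bookkeeping match the paper's argument point by point.
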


\begin{proof}
	Both points are proved by mutual induction on the definition of \full inert terms $\sitm$ and \full fireballs 
	$\sfire$.
	Cases of strong inert terms:
	\begin{itemize}
		\item \emph{Variable}, \ie $\sitm = \var = \sfire$. 
		Let $\mtype = \mset{\ltype_1, \dots, \ltype_n}$ be a \leftsh   multi type for some 	$n \in \nat$.
		We can build the following derivation
		\begin{equation*}
		\tderiv = 
		\begin{prooftree}
		\infer0[\footnotesize$\Ax$]{\var \hastype \mset{\ltype_1} \vdash \var \hastype \ltype_1}
		\hypo{\overset{n \in \nat}{\ldots}}
		\infer0[\footnotesize$\Ax$]{\var \hastype \mset{\ltype_n} \vdash \var \hastype \ltype_n}
		\infer3[\footnotesize$\ruleManyVar$]{\var \hastype \mtype \vdash \var \hastype \mtype}
		\end{prooftree}
		\end{equation*}
		where $\typctx = \var \hastype \mtype$ is a \leftsh   type context.
		
		\item \emph{Inert application}, \ie $\sitm = \sitmtwo \sfire$.
		Let $\mtype$ be a \leftsh   multi type.
		By \ih (point 2), there is a derivation $\concl{\tderivtwo}{\typctxtwo}{\sfire}{\mtypetwo}$ where $\typctxtwo$ is   \leftsh and $\mtypetwo$ is   \rightsh,
		thus $\mset{\larrow{\mtypetwo}{\mtype}}$ is \leftsh  .
		By \ih (point 1), there is a derivation $\concl{\tderivthree}{\typctxthree}{\sitm}{\mset{\larrow{\mtypetwo}{\mtype}}}$ where 
		$\typctxthree$ is \leftsh  .
		We have the following derivation:
		\begin{equation*}
		\tderiv = 
		\begin{prooftree}
		\hypo{}
		\ellipsis{$\tderivthree$}{\typctxthree \vdash \sitm \hastype \mset{\larrow{\mtypetwo}{\mtype}}}
		\hypo{}
		\ellipsis{$\tderivtwo$}{\typctxtwo \vdash \sfire \hastype \mtypetwo}
		\infer2[\footnotesize$\ruleAp$]{\typctx \vdash \sitm\sfire \hastype \mtype}
		\end{prooftree}
		\end{equation*}
		where $\typctx = \typctxtwo \uplus \typctxthree$ is \leftsh   by \Cref{rmk:merge-split-coshrinking}.
		
		\item \emph{Explicit substitution on inert}, \ie $\sitm = \sitmtwo \esub{\var}{\sitmthree}$.
		Let $\mtype$ be a \leftsh   multi type.
		By \ih (point 1), there is a derivation $\concl{\tderivtwo}{\typctxtwo, \var \hastype \mtypetwo}{\sitmtwo}{\mtype}$ where $\typctxtwo, \var \hastype \mtypetwo$ is a \leftsh   type context;  
		in particular, $\mtypetwo$ is a \leftsh   multi type.
		By \ih (point 1), there is a derivation $\concl{\tderivthree}{\typctxthree}{\sitmthree}{\mtypetwo}$ where $\typctxthree$ is a \leftsh   context.
		We have the following derivation:
		\begin{equation*}
		\tderiv = 
		\begin{prooftree}
		\hypo{}
		\ellipsis{$\tderivtwo$}{\typctxtwo, \var \hastype \mtypetwo \vdash \sitmtwo \hastype \mtype}
		\hypo{}
		\ellipsis{$\tderivthree$}{\typctxthree \vdash \sitmthree \hastype \mtypetwo}
		\infer2[\footnotesize$\ruleAp$]{\typctx \vdash \sitmtwo \esub{\var}{\sitmthree} \hastype \mtype}
		\end{prooftree}
		\end{equation*}
		where $\typctx = \typctxtwo \uplus \typctxthree$ is \leftsh (resp.~  \leftsh) by 	\Cref{rmk:merge-split-coshrinking}.
	\end{itemize}

Cases of strong fireballs:
		\begin{itemize}
		\item \emph{Strong inert}, \ie $\sfire = \sitm$. Then it follows by the \ih (point 1) applied with respect to the left multi type $\mset\ground$ which is also a right multi type, thus the \ih does provides a shrinking derivation.
		 
		\item \emph{Abstraction}, \ie $\sfire = \la{\var}{\sfiretwo}$. 
		By \ih (point 2), there exists a derivation $\concl{\tderivtwo}{\typctx, \var \hastype \mtypethree}{\sfire}{\mtypetwo}$ where 
		$\typctx, \var \hastype \mtypethree$ is a   \leftsh type context and $\mtypetwo$ is a   \rightsh multi 		type.
		Hence, $\mtype = \mset{\larrow{\mtypethree}{\mtypetwo}}$ is a   \leftsh multi type.
		We have the following derivation:
		\begin{equation*}
		\tderiv = 
		\begin{prooftree}
		\hypo{}
		\ellipsis{$\tderivtwo$}{\typctx, \var \hastype \mtypethree \vdash \sfire \hastype \mtypetwo}
		\infer1[\footnotesize$\lambda$]{\typctx \vdash \la{\var}\sfire \hastype \larrow{\mtypethree}{\mtypetwo}}
		\infer1[\footnotesize$\ruleMany$]{\typctx \vdash \la{\var}\sfire \hastype \mtype}
		\end{prooftree}
		\end{equation*}
		where $\typctx$ is a   \leftsh type context.
		
		\item \emph{Explicit substitution on fireball}, \ie $\sfire = \sfiretwo \esub{\var}{\sitm}$.
		By \ih (point 2), there is a derivation $\concl{\tderivtwo}{\typctxtwo, \var \hastype \mtypetwo}{\sfiretwo}{\mtype}$ where 
		$\mtype$ is a   \rightsh multi type and $\typctxtwo, \var \hastype \mtypetwo$ is a   \leftsh type context;  
		in particular, $\mtypetwo$ is a   \leftsh multi type.
		By \ih (point 1), there is a derivation $\concl{\tderivthree}{\typctxthree}{\sitm}{\mtypetwo}$ where $\typctxthree$ is a   \leftsh type context and $\mtypetwo$ is a   \leftsh multi type.
		We have the following derivation:
		\begin{equation*}
		\tderiv = 
		\begin{prooftree}
		\hypo{}
		\ellipsis{$\tderivtwo$}{\typctxtwo, \var \hastype \mtypetwo \vdash \sfiretwo \hastype \mtype}
		\hypo{}
		\ellipsis{$\tderivthree$}{\typctxthree \vdash \sitm \hastype \mtypetwo}
		\infer2[\footnotesize$\ruleAp$]{\typctx \vdash \sfiretwo \esub{\var}{\sitm} \hastype \mtype}
		\end{prooftree}
		\end{equation*}
		where $\typctx = \typctxtwo \uplus \typctxthree$ is a   \leftsh by 
		\Cref{rmk:merge-split-coshrinking}.
		\qedhere
	\end{itemize}	
\end{proof}

\begin{theorem}[Shrinking completeness for $\toess$]
	\label{thmappendix:completeness}
	\NoteState{thm:completeness}
	Let $\deriv \colon \tm \toess^* \sfire$ be a $\esssym$-normalizing evaluation.
	Then there is a shrinking derivation $\derive{\tderiv}{\tm}$.
\end{theorem}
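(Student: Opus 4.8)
The statement follows by chaining together four ingredients already established: shrinking typability of strong fireballs (\Cref{prop:typability-normal}), subject expansion for the \emph{whole} reduction $\tovsub$ (\refpropp{qual-subject}{expansion}), shrinking correctness of the external strategy (\Cref{thm:correctness}), and confluence of the \VSC (\refpropp{properties-full-reduction}{confluence}). The proof is essentially a semantic ``round trip'': go up from the fireball $\sfire$ to $\tm$ via typability plus expansion, then come back down from $\tm$ to a fireball via typability plus external normalization, and finally identify the two fireballs by confluence.

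\textbf{Steps, in order.} First, since $\sfire$ is a strong fireball, \refpropp{typability-normal}{fireball} yields a \emph{shrinking} derivation $\concl{\tderiv}{\typctx}{\sfire}{\mtype}$ (note that the existential on $\mtype$ is what makes this work: the open case's $\emptymset$ would not be right). Second, I iterate subject expansion (\refpropp{qual-subject}{expansion}) along the given sequence $\tm \tovsub^* \sfire$; since that form of subject expansion holds for every \VSC step, not only external ones, and preserves the final judgment $\typctx \vdash (-) \hastype \mtype$ verbatim, I obtain a derivation $\concl{\tderivtwo}{\typctx}{\tm}{\mtype}$ with the \emph{same} type context and type, hence still shrinking. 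Third, I apply shrinking correctness (\Cref{thm:correctness}) to $\tderivtwo$, getting an $\esssym$-normalizing evaluation $\tm \toess^* \sfiretwo$ for some strong fireball $\sfiretwo$. Fourth, since $\sfire$ and $\sfiretwo$ are both $\vsub$-normal (a strong fireball is $\vsub$-normal by \refpropp{properties-full-reduction}{harmony}) and both reachable from $\tm$ by $\tovsub^*$—the first by hypothesis, the second because $\toess \,\subseteq\, \tovsub$—confluence of $\tovsub$ forces $\sfire = \sfiretwo$, and the external evaluation $\tm \toess^* \sfiretwo = \sfire$ is the one required.

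\textbf{The delicate point.} Everything hinges on the fact that \emph{shrinkingness is a property of the conclusion of a derivation only} (namely: $\typctx$ left-shrinking and $\mtype$ right-shrinking), so that it is transported for free by subject expansion, which does not touch the final judgment. This is also why it is crucial to have subject expansion for the full $\tovsub$ rather than merely for $\toess$: the hypothesis gives us an \emph{arbitrary} \VSC reduction to $\sfire$, which need not be external, yet expansion must still pull typability back all the way to $\tm$. No quantitative bound is needed here—only the qualitative expansion statement and the qualitative shrinking correctness—so the argument stays at the level of a short corollary, exactly as the surrounding discussion in \refsect{shrinking} advertises.
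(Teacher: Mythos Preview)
Your first two steps are correct and are exactly the paper's argument: type the strong fireball $\sfire$ with a shrinking derivation (\Cref{prop:typability-normal}), then iterate subject expansion (\refpropp{qual-subject}{expansion}) back along the given evaluation to obtain a shrinking derivation $\concl{\tderivtwo}{\typctx}{\tm}{\mtype}$. The paper phrases this as an induction on $\size\deriv$, which is nothing but the iterated-expansion argument you give. At that point the proof is finished: the theorem only asks for a shrinking derivation of $\tm$, and $\tderivtwo$ is it.

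You have, however, conflated this statement with the \emph{Untyped Normalization Theorem} (\Cref{thm:normalization}). Your steps 3--4 (shrinking correctness plus confluence) and your ``delicate point'' are the extra ingredients needed \emph{there}, where the hypothesis is an arbitrary $\tovsub^*$-reduction and the conclusion is an external evaluation. Here the hypothesis $\deriv \colon \tm \toess^* \sfire$ is \emph{already} external---contrary to what you write, it is not ``an arbitrary \VSC reduction''---and the goal is the derivation, not an evaluation. So steps 3 and 4 are superfluous, and your final sentence (``the external evaluation $\tm \toess^* \sfire$ is the one required'') misidentifies the conclusion. Drop everything after step~2 and you have the paper's proof verbatim.
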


\begin{proof}
	By induction on the length $\size{\deriv}$ of the $\esssym$-evaluation $\deriv$.
	
	If $\size{\deriv} = 0$ then $\tm = \sfire$ is $\esssym$-normal and hence a strong fireball.
	According to typability of strong fireballs (\Cref{prop:typability-normal}), there is a shrinking derivation $\concl{\tderiv}{\typctx}{\tm}{\mtype}$.
	
	Otherwise, $\size{\deriv} > 0$ and $\deriv$ is the concatenation of a first step $\tm \toess \tmtwo$ and an evaluation $\deriv' \colon \tmtwo \toess^* \fire$.
	By \ih, there is a shrinking derivation $\concl{\tderivtwo}{\typctx}{\tmtwo}{\mtype}$. 
	By subject expansion (\refpropp{qual-subject}{expansion}), there is a shrinking derivation 
	$\concl{\tderiv}{\typctx}{\tm}{\mtype}$.\qedhere 
\end{proof}

}
\end{document}